\newtheorem{theorem}{Theorem}[section]
\newtheorem{lemma}[theorem]{Lemma}
\newtheorem{claim}[theorem]{Claim}
\newtheorem{definition}[theorem]{Definition} 
\newcommand{\namedref}[2]{\hyperref[#2]{#1~\ref*{#2}}}
\newcommand{\sectionref}[1]{\namedref{Section}{#1}}
\newcommand{\appendixref}[1]{\namedref{Appendix}{#1}}
\newcommand{\theoremref}[1]{\namedref{Theorem}{#1}}
\newcommand{\figureref}[1]{\namedref{Figure}{#1}}
\newcommand{\figurerefb}[2]{\hyperref[#1]{Figure~\ref*{#1}#2}}
\newcommand{\lemmaref}[1]{\namedref{Lemma}{#1}}
\newcommand{\claimref}[1]{\namedref{Claim}{#1}}
\newcommand{\algorithmref}[1]{\namedref{Algorithm}{#1}}
\newcommand{\equationref}[1]{\hyperref[#1]{(\ref*{#1})}}
\renewcommand{\R}{\mathbb{R}}
\newcommand{\redgevec}{\R^{E}}
\newcommand{\rvertvec}{\R^{V}}
\newcommand{\rPos}{\R^{+}}
\newcommand{\rNonNeg}{\R^{\geq 0}}
\newcommand{\bvar}[1]{\mathbf{#1}} 
\newcommand{\mvar}[1]{\bvar{#1}} 
\newcommand{\vvar}[1]{\vec{#1}} 
\newcommand{\defeq}{\stackrel{\mathrm{\scriptscriptstyle def}}{=}}
\DeclareMathOperator*{\argmin}{arg\,min}
\newcommand{\lap}{\mvar{L}}
\newcommand{\pseudo}[1]{{#1}^\dagger}
\newcommand{\lapPseudo}{\pseudo{\lap}}
\newcommand{\incMatrix}{\mvar{B}}
\newcommand{\rMatrix}{\mvar{R}} 
\newcommand{\iMatrix}{\mvar{I}} 
\newcommand{\tree}{T}
\newcommand{\treePath}[1]{P_{#1}}
\newcommand{\treePathVec}[1]{\vvar{p}_{#1}}
\newcommand{\treeCycle}[1]{C_{#1}}
\newcommand{\treeCycleVec}[1]{\vvar{c}_{#1}}
\newcommand{\st}{\mathrm{st}}
\newcommand{\stretchEdge}[1]{\st\left(#1\right)}
\newcommand{\stretchTotal}[1]{\st\left(#1\right)}
\newcommand{\offtreeEdgeSet}{E \setminus \tree}
\newcommand{\treeCondition}{\tau}
\newcommand{\cyclePotential}[2]{\Delta_{c_{#1}}(#2)}
\newcommand{\cycleResistance}[1]{R_{#1}}
\newcommand{\sampleProbVec}{\vvar{p}}
\newcommand{\edgeSampleProb}[1]{{{\sampleProbVec}_{#1}}}
\newcommand{\nIter}{K}
\newcommand{\dsOpQuery}{\texttt{query}}
\newcommand{\dsOpUpdate}{\texttt{update}}
\newcommand{\dsOpInit}{\texttt{init}}
\newcommand{\dsOpTreeDecompose}{\texttt{tree-decompose}}
\newcommand{\dsVarHeight}{\texttt{height}}
\newcommand{\dsVarExt}{\texttt{\dsVarSeperator\_ext}}
\newcommand{\dsVarTotal}{\texttt{\dsVarSeperator\_drop}}
\newcommand{\dsVarSource}{s}
\newcommand{\dsVarSeperator}{d}
\newcommand{\dsVectorSpace}{\R^{\{e, d\} \times N}}
\newcommand{\dsVectorQuery}[1]{\vvar{q}(#1)}
\newcommand{\dsVectorUpdate}[1]{\vvar{u}(#1)}
\newcommand{\dsAssign}{:=}
\newcommand{\boundary}{\vvar{\chi}}
\newcommand{\flow}{\vvar{f}}
\newcommand{\volt}{\vvar{v}}
\newcommand{\optVec}[1]{{#1}_{\mathrm{opt}}}
\newcommand{\optFlow}{\optVec{\flow}}
\newcommand{\optVolt}{\optVec{\volt}}
\newcommand{\flowInitial}{\flow_0}
\newcommand{\renergyOp}{\xi_r}
\newcommand{\renergy}[1]{\renergyOp\big(#1\big)}
\newcommand{\renergyFull}[1]{\renergyOp\left(#1\right)}
\newcommand{\renergyDualOp}{\zeta_r}
\newcommand{\renergyDual}[1]{\renergyDualOp\big(#1\big)}
\newcommand{\gap}{\mathrm{gap}}
\newcommand{\potentialdrop}[2]{\Delta_{#1}\left(#2\right)}
\newcommand{\runtimeSimple}{O\big(m \log^2 n \log \log n \log (\epsilon^{-1} n)\big)}
\newcommand{\runtimeBetter}{O\big(m \log^2 n \log \log n \log (\epsilon^{-1} \log n)\big)}
\newcommand{\runtimeFinal}{O\big(m \log^2 n \log \log n \log (\epsilon^{-1})\big)}
\newcommand{\runtimeBest}{\runtimeFinal}
\newcommand{\circVec}{\vvar{c}} 
\renewcommand{\E}{\mathbb{E}}
\renewcommand{\epsilon}{\varepsilon}
\newcommand{\Tr}{\mathrm{Tr}}
\newcommand{\finalSolver}{\texttt{FullSolver}}
\newcommand{\betterSolver}{\texttt{ExampleSolver}}
\newcommand{\simpleSolver}{\texttt{SimpleSolver}}
\begin{document}

\title{A Simple, Combinatorial Algorithm for Solving SDD Systems in Nearly-Linear Time}

\date{\today}
\author{
\makebox[.2\columnwidth]{Jonathan A. Kelner} \\
\texttt{\href{mailto:kelner@mit.edu}{\color{black}kelner@mit.edu}} \\
MIT
\and
\makebox[.2\columnwidth]{Lorenzo Orecchia}\\
\texttt{\href{mailto:orecchia@mit.edu}{\color{black}orecchia@mit.edu}} \\
MIT
\and
\makebox[.2\columnwidth]{Aaron Sidford} \\
\texttt{\href{mailto:sidford@mit.edu}{\color{black}sidford@mit.edu}} \\
MIT
\and
Zeyuan Allen Zhu \\
\texttt{\href{mailto:zeyuan@csail.mit.edu}{\color{black}zeyuan@csail.mit.edu}} \\
MIT
}
\date{}

\maketitle
\begin{abstract}
In this paper, we present a simple combinatorial algorithm that solves symmetric diagonally dominant (SDD) linear systems in nearly-linear time. It uses very little of the machinery that previously appeared to be necessary for a such an algorithm.  It does not require recursive preconditioning, spectral sparsification, or even the Chebyshev Method or Conjugate Gradient.   After constructing a ``nice'' spanning tree of a graph associated with the linear system, the entire algorithm consists of the repeated application of a simple (non-recursive) update rule, which it implements using a lightweight data structure.  The algorithm is numerically stable and can be implemented without the increased bit-precision required by previous solvers.  As such, the algorithm has the fastest known running time under the standard unit-cost RAM model. We hope that the simplicity of the algorithm and the insights yielded by its analysis will be useful in both theory and practice.
\end{abstract}

\section{Introduction}
A matrix $\mvar{A} \in \R^{n \times n}$ is  \emph{symmetric diagonally
dominant} (SDD) if $\mvar{A}^T = \mvar{A}$ and
$\mvar{A}_{ii} \geq \sum_{j \neq i}|\mvar{A}_{ij}|$
for all $i \in [n]$.
While the best known algorithm for solving a general linear system takes time $O(n^{2.373})$~\cite{DBLP:conf/stoc/Williams12}, a seminal paper by Spieman and Teng~\cite{ST04} showed that when $\mvar{A}$ is SDD one can solve $\mvar{A} \vvar{x} = \vvar{b}$ approximately in nearly linear time.
\footnote{Throughout this paper we are primarily interested in \emph{approximate linear system solvers}, that is algorithms that can compute $\vvar{x} \in \R^{n}$ such that $\|\vvar{x} - \optVec{\vvar{x}}\|_{A} \leq \epsilon \|\optVec{\vvar{x}}\|_{A}$ for any $\epsilon \in \R > 0$ where $\optVec{x} \in \R^{n}$ is a vector such that $\mvar{A} \optVec{\vvar{x}} = \vvar{b}$. When we refer to a \emph{nearly linear time SDD system solver} we mean an an algorithm that computes such a $x$ in time $O(m \log^{c} n \log \epsilon^{-1})$ where $m$ is the number of nonzero entries in $A$ and $c \geq 0 \in \R$ is a fixed constant.}

Fast algorithms for solving SDD linear systems have found broad applications across both the theory and practice of computer science. They have long been central to scientific computing, where solving SDD systems is the main computational task in modeling of electrical networks of resistors and performing finite element simulations of a wide range of physical systems~(see, e.g., \cite{boman2008solving}). Beyond this, SDD system solvers have been applied to foundational problems in a wide range of other fields, including machine learning, random processes, computer vision, image processing, network analysis, and computational biology (see, for example, \cite{LLDM,DBLP:journals/cviu/KoutisMT11,isorank,VoevTeng,HLK}).

More recently, SDD solvers have emerged as a powerful tool in the design of graph algorithms.
To every graph $G$, one can associate an SDD matrix $\lap = \lap_G$ called its \emph{Laplacian} (defined in \sectionref{sec:preliminaries}) such that there are deep connections between the combinatorial properties of $G$ and the linear algebraic properties of $\lap$.
By exploiting these connections, researchers have used nearly linear time algorithms for solving SDD systems to break longstanding barriers and provide new algorithms for a wide and rapidly growing list of fundamental graph problems, including maximum flow problems~\cite{CKMST}, multi-commodity flow problems~\cite{KelnerMillerPeng}, generating random spanning tree~\cite{KelnerMadry}, graph sparsification~\cite{spielman2011graph}, lossy flow problems~\cite{DaitchSpielman}, sparsest cut~\cite{ShermanBreaking}, distributed routing~\cite{KelnerMaymounkov}, and balanced separator~\cite{OSV}, as well as fundamental linear algebraic problems for SDD matrices, including computing the matrix exponential~\cite{OSV} and the largest eigenvalue and corresponding eigenvector~\cite{ST08c}. For surveys of these solvers and their applications, see~\cite{Spielman:2012:AGT:2359888.2359901,DBLP:conf/tamc/Teng10, Lxequalsb}.

\subsection{Previous Nearly Linear Time Algorithms}
The first nearly linear time algorithm for solving SDD systems was given by Spielman and Teng~\cite{ST04}, building on a long line of previous work (e.g., \cite{Vaidya,DBLP:conf/ipps/GrembanMZ95,SupportGraph,SupportTheory,boman2004maximum}).
Their algorithm and its analysis is a technical tour-de-force that required multiple fundamental innovations in spectral and combinatorial graph theory, graph algorithms, and computational linear algebra. Their work included the invention of spectral sparsification and ultra-sparsifiers, better and faster constructions of low-stretch spanning trees, and efficient local clustering algorithms, all of which was used to construct and analyze an intricate recursively preconditioned iterative solver. They divided this work into three papers totaling over 130 pages (\cite{ST08a,ST08c,SpielmanTeng_sparsification}), each of which has prompted a new line of inquiry and substantial follow-up work. Their work was was followed by two beautifully insightful papers by Koutis, Miller, and Peng that simplified the SDD system solver while improving its running time
to $O(m \log n \log\log n  \log \epsilon^{-1})$~\cite{KoutisMP10,KMP11}. For a more in-depth discussion of the history of this work see \cite{ST08c}.

These algorithms all rely on the same general framework.  They reduce solving general SDD systems to solving systems in graph Laplacians.  Given a graph, they show how to obtain a sequence of logarithmically many successively sparser graphs that approximate it, which they construct by adding carefully chosen sets of edges to a low-stretch spanning tree.
They then show that the relationship between the combinatorial properties of a graph and the spectral properties of its Laplacian enables them to
use the Laplacian of each graph in this sequence as a preconditioner for the one preceding it in a recursively applied iterative solver, such as the Preconditioned Chebyshev Method or Preconditioned Conjugate Gradient.

We remark that multigrid methods (see, e.g.,~\cite{MultigridBook}), which are widely used in practice on graphs with sufficiently nice topologies, can be thought of as following a similar multilevel recursively preconditioned iterative framework. Indeed, Koutis \emph{et al.} have an algorithm and implementation based on related techniques that they refer to as ``combinatorial multigrid''~\cite{DBLP:conf/sc/BlellochKMT10}.Even if one does not demand provable running time bounds, we are not aware of any algorithm whose running time empirically scales nearly linearly on large classes of input graphs that does not  roughly follow this general structure.

\subsection{Our Results}

In this paper, we present a new, simple, combinatorial algorithm that solves SDD systems and has a running time of $\runtimeBest$. It uses very little of the machinery that previously appeared to be necessary for a nearly linear time algorithm. It does not require spectral sparsifiers (or variants such as ultra-sparsifiers or incremental sparsifiers), recursive preconditioning, or even the Chebyshev Method or Conjugate Gradient.

To solve an SDD system all our solver requires is a single low-stretch spanning tree\footnote{This can be obtained in nearly-linear time by a simple ball-growing algorithm \cite{AKPW}; the constructions with the best known parameters use a more intricate, but still nearly-linear-time, region growing technique \cite{EEST, ABNlowStretch08, KoutisMP10, AbrahamNeiman12}. In \sectionref{sec:no-spanning-tree} we discuss how even this requirement can be relaxed.} of $G$ (not a recursive collection of subgraphs), and a straightforward data structure.
Given these, the algorithm can be described in a few lines of pseudocode, and its analysis can be made to fit on a single blackboard.

Due to the complexity of previous nearly linear time solvers and the intricate and delicate nature of their analyses, it was necessary to apply them as a black box.
By providing a new, easy-to-understand algorithm, it is our hope that algorithms that use SDD solvers can be improved by ``opening up'' this black box and modifying it to take advantage of specific features of the problem, and that similar techniques can be applied to related problems (e.g., ones with additional constraints or slight deviations from linearity or diagonal dominance).

Because of the lightweight nature of the algorithm and data structure, we expect it to be fast in practice.  Furthermore, the analysis is quite robust, and we believe that the algorithm can be readily adapted to work in multicore, distributed, and even asynchronous settings.   We hope to evaluate our algorithm empirically in a follow-up paper.

\paragraph{Numerical Stability and Run-time Comparisons}
Existing algorithms relied on Preconditioned Chebyshev methods, whose numerical stability is quite difficult to analyze.
At present, the best known results show that they can be implemented with finite-precision arithmetic, but the number of bits of precision required is
$\log \kappa(\lap) \log^c n \log \epsilon^{-1}$, where $\kappa(\lap)$ is the condition number of $\lap$, and $c$ is some possibly large constant~\cite{ST08c}.
The stated running time of the best existing algorithm is $O(m \log n \log \epsilon^{-1})$ (ignoring  $O(\poly(\log\log n))$ terms and using the analysis from~\cite{KMP11} and the best known low-stretch spanning tree algorithm~\cite{AbrahamNeiman12}), but this assumes arbitrary precision arithmetic.  If one analyzes it in the more standard unit-cost RAM model, where one can perform operations only on $O(\log n)$-bit numbers in constant time, this introduces several additional logarithmic factors in the running time.

In contrast, we show that our algorithm is numerically stable and does not require this additional overhead in the bit precision.  As such, our algorithm gives the fastest known algorithm for solving SDD systems in the unit-cost RAM model. Our algorithm approximately solves both SDD systems and the dual electrical flow problem in time $\runtimeBest$. If one allows infinite precision arithmetic to be performed in constant time, this is slower than \cite{KMP11}  by a factor of $O(\log n)$ for the SDD system problem and by a factor of $O(\min\{\log(\epsilon^{-1}),\log n\})$ for the electrical flow problem (ignoring $O(\poly(\log \log n))$ terms).\footnote{To the best of our knowledge, to convert their SDD system solver to an $\epsilon'$-approximate electrical flow solver, one needs to pick $\epsilon=O(\frac{\epsilon'}{n})$.}

\subsection{Overview of our approach}
\label{sec:intro-overview}
Using standard reductions techniques we reduce solving arbitrary SDD systems to solving $\lap \volt = \boundary$, where $\lap \in \R^{n \times n}$ is the Laplacian of a weighted connected graph $G=(V,E,W)$ with $n$ vertices and $m$ edges and $\boundary \in \R^{n}$, see \appendixref{appendix:reductions} for details. To simplify this overview, suppose for now that $\boundary = \vvar{e}_s - \vvar{e}_t$ and $\vvar{e}_s$ and $\vvar{e}_t$ are the unit basis vectors corresponding to two vertices in the graph, $s$ and $t$ respectively.  (The general case is similar and discussed in the following sections.)

Such Laplacian systems $\lap \volt = \boundary$ can be viewed as \emph{electrical flow problems}: each edge $e \in E$ can be viewed as a resistor of resistance $r_e=1/w_e$, where $w_e$ is the weight of this edge, and one unit of electric current needs to be sent from $s$ to $t$. If $\volt$ is a valid solution to $\lap \volt = \boundary$, then the entries of $\volt$ can be viewed as the \emph{electrical potentials} of this system, and the amount of \emph{electric current or flow} $\flow(e)$ on an edge $e=(i,j)$ from $i$ to $j$ is given by $(\volt_i-\volt_j)/r_e$.  The fact that electric flows are induced by vertex potential differences is a crucial property of electrical systems that our algorithm is going to exploit.

While previous solvers worked with the potential vector $\volt$, our algorithm works with the flow vector $\flow$.  Our algorithm begins with any arbitrary unit $s$-$t$ flow (e.g., a path from $s$ to $t$) and maintains its feasibility throughout the execution. If $\flow$ were a valid electrical flow, then it would be induced by some potential vector $\volt \in \rvertvec$ satisfying $\flow(e)=(\volt_i - \volt_j)/r_e$ for all edges $e = (i,j) \in E$, and in particular, for any cycle $C$ in $G$, we would have the potential drop
$
    \sum_{e\in C} \flow(e)r_e=0
$. Exploiting this fact, our entire algorithm consists of repeatedly doing the following:
\begin{itemize} \itemsep -1pt
\item[-] Randomly sample a cycle $C$ from some probability distribution.
\item[-] Compute $\sum_{e\in C}\flow(e) r_e$ and if it is non-zero then add a multiple of $C$ to $\flow$ to make it zero.
\end{itemize}
To turn this into an provably-correct and efficient algorithm we need to do the following:
\begin{itemize}[leftmargin=0.8cm]

\item {\bf Specify the cycle distribution.} The cycles are those found by adding edges to a low-stretch spanning tree. They are sampled proportional to their stretch (See \sectionref{sec:pre-cycles} and \ref{sec:algorithm}).

\item {\bf Bound the number of iterations.} In \sectionref{sec:convergence}, we will show that repeating this process a nearly linear number of times suffices to yield an $\epsilon$-approximate solution.

\item {\bf Implement the iterations efficiently.} Since a cycle may contain a large number of edges, we cannot simply update the flow edge-by-edge. In \sectionref{sec:datastructure}, we give a data structure that allows each iteration to take $O(\log n)$ time and we discuss further implementation details.
\end{itemize}

Interestingly, the algorithm and its analysis also have a geometric interpretation, based on the Kaczmarz method~\cite{Kaczmarz}, which iteratively solves a linear system by alternately picking a constraint and projecting the current solution onto its feasible set. In particular, the algorithm can be recast as an instance of the randomized Kaczmarz method of Strohmer and Vershynin~\cite{SV}. This geometric view is presented in \sectionref{sec:geoview}.

In addition, in \sectionref{sec:improvement} we show how to improve the running time of our algorithm, in \sectionref{sec:stability} we prove the numerical stability of our algorithm, in \sectionref{sec:op-view} we prove that our algorithm can be viewed as a linear operator that spectrally approximates the Laplacian pseudoinverse, and in \sectionref{sec:no-spanning-tree} we discuss how to extend our algorithm to solve SDD systems without using low stretch spanning trees. For the reader who simply wants a complete proof of how to solve SDD systems in nearly-linear time, \sectionref{sec:preliminaries} - \ref{sec:simple_runtime} and \appendixref{appendix:reductions} suffice.

\section{Preliminaries}
\label{sec:preliminaries}

Throughout the remainder of the paper let $G=(V,E,w)$ be a weighted, connected, undirected graph with $n=|V|$ vertices, $m=|E|$ edges and edge weights $w_e>0$.  We think of $w_e$ as the \emph{conductance} of $e$, and we define the \emph{resistance} of $e$ by $r_e \defeq 1/w_e$.
For notational convenience, we fix an orientation of the edges so that for any vertices $a$ and $b$ connected by an edge, exactly one of $(a, b) \in E$ or $(b, a) \in E$ holds.

We make extensive use of the following matrices associated to $G$:
\begin{definition}[Matrix Definitions]
We define the \emph{incidence matrix}, $\incMatrix \in \R^{E \times V}$, the  \emph{resistance matrix}, $\rMatrix \in \R^{E \times E}$,
and the \emph{Laplacian}, $\lap \in \R^{V \times V}$, for all $(a, b), e_1, e_2 \in E$, and $a, b, c \in V$, by
\[
\incMatrix_{(a, b), c} =
        \begin{cases}
        1 & a = c \\
        -1 & b = c \\
        0 & \text{otherwise}
        \end{cases}
\enspace,\enspace
 \rMatrix_{e_1, e_2}
        = \begin{cases}
        r_{e} & e = e_1 = e_2 \\
        0 & \text{otherwise}
        \end{cases}
\enspace,\enspace
    \lap_{a, b}
        = \begin{cases}
            \sum_{\{a, u\} \in E} {w_{a,u}} & a = b \\
            -{w_{a,b}} & \{a, b\} \in E \\
            0 & \text{otherwise}
        \end{cases}
\]
\end{definition}

 For a vector $\flow\in\redgevec$ and an edge $e=(a,b)\in E$, we write $\flow(e)=\flow(a,b)$ for the coordinate of $\flow$ corresponding to $e$, and we adopt the convention that $\flow(b,a)=-\flow(a,b)$.  This allows us to think of $\flow$ as a flow on the graph (not necessarily obeying any conservation constraints) that sends $\flow(e)$ units of flow from $a$ to $b$, and thus $-\flow(e)$ units of flow from $b$ to $a$.

The following facts follow by simple manipulations of the above definitions:

\begin{claim}\label{claim:basicfacts}
For all $\flow \in \redgevec$, $x \in \rvertvec$, $a \in V$ and $(a, b) \in E$:
\begin{itemize}\itemsep -1pt
  \item $\big[\incMatrix^T \flow\,\big]_{a}=         \sum_{(b, a) \in E} \flow(b, a) - \sum_{(a, b) \in E}  \flow(a, b)$ ,
  \item $\lap = \incMatrix^T \rMatrix^{-1} \incMatrix$ ,
  \item $\left[\incMatrix x\right]_{(a, b)} = x(a) - x(b)$ , and
  \item $x^T \lap x = \sum_{(a, b) \in E}
    \frac{\left(x_a - x_b\right)^2}{r_{a,b}}$ .
\end{itemize}
\end{claim}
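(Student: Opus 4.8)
The plan is to derive all four identities by directly unwinding the definitions of $\incMatrix$, $\rMatrix$, and $\lap$; none of them carries real conceptual content, and the only place where one could plausibly slip is the sign bookkeeping — keeping the fixed edge orientation and the convention $\flow(b,a) = -\flow(a,b)$ consistent, especially in item~1. I would handle the items in the order $3, 1, 2, 4$, since the last two are assembled from the first two.

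For item~3, I would expand $\left[\incMatrix x\right]_{(a,b)} = \sum_{c \in V} \incMatrix_{(a,b),c}\, x(c)$; by the definition of $\incMatrix$ the only nonzero terms are $c = a$, contributing $+x(a)$, and $c = b$, contributing $-x(b)$, which gives $x(a) - x(b)$. Item~1 is the transposed version: write $\left[\incMatrix^T \flow\right]_a = \sum_{e \in E}\incMatrix_{e,a}\,\flow(e)$ and note that an edge $e$ contributes a nonzero term only when $a$ is one of its endpoints, with coefficient $+1$ if $a$ is its tail and $-1$ if $a$ is its head; grouping the edges incident to $a$ into these two classes and applying $\flow(b,a) = -\flow(a,b)$ yields the claimed signed difference over the edges leaving and entering $a$.

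For item~2, I would compute the $(a,b)$ entry of $\incMatrix^T \rMatrix^{-1} \incMatrix$. Since $\rMatrix$ is diagonal with $\rMatrix_{e,e} = r_e$, we get $\left[\incMatrix^T \rMatrix^{-1} \incMatrix\right]_{a,b} = \sum_{e \in E} \tfrac{1}{r_e}\,\incMatrix_{e,a}\,\incMatrix_{e,b} = \sum_{e \in E} w_e\,\incMatrix_{e,a}\,\incMatrix_{e,b}$, and then split into three cases. If $a = b$, every edge incident to $a$ contributes $w_e\,\incMatrix_{e,a}^2 = w_e$ and all other edges contribute $0$, so the sum is $\sum_{\{a,u\} \in E} w_{a,u} = \lap_{a,a}$. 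If $a \neq b$ and $\{a,b\} \in E$, the orientation convention makes exactly one edge $e_0$ join $a$ and $b$; for it $\incMatrix_{e_0,a}\,\incMatrix_{e_0,b} = (+1)(-1) = -1$, while every other edge is incident to at most one of $a,b$ and contributes $0$, so the sum is $-w_{a,b} = \lap_{a,b}$. If $a \neq b$ and $\{a,b\} \notin E$, no edge touches both, so every term vanishes and the sum is $0 = \lap_{a,b}$; hence $\incMatrix^T \rMatrix^{-1} \incMatrix = \lap$. Finally, item~4 just chains the previous two: $x^T \lap x = x^T \incMatrix^T \rMatrix^{-1} \incMatrix x = (\incMatrix x)^T \rMatrix^{-1} (\incMatrix x) = \sum_{e \in E} \tfrac{1}{r_e}\left[\incMatrix x\right]_e^2$, and substituting item~3 turns this into $\sum_{(a,b)\in E} \tfrac{(x_a - x_b)^2}{r_{a,b}}$. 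The only mild obstacle throughout is the sign and orientation bookkeeping: once one observes that each adjacent unordered pair $\{a,b\}$ corresponds to exactly one ordered edge, there is no double counting in any of these sums, and every step reduces to a one-line computation.
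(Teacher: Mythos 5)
The paper states \claimref{claim:basicfacts} without proof (``The following facts follow by simple manipulations of the above definitions''), so there is no author argument to compare against; your direct unwinding of the definitions is exactly the intended route, and your treatments of items~2, 3, and 4 are correct and complete.

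There is, however, a sign issue in item~1 that you glide past rather than resolve. Your own computation is right: from $\left[\incMatrix^T\flow\right]_a = \sum_{e\in E}\incMatrix_{e,a}\flow(e)$, the edges with $a$ as tail contribute $+\flow(e)$ and those with $a$ as head contribute $-\flow(e)$, so you obtain
\[
\left[\incMatrix^T\flow\right]_a \;=\; \sum_{(a,b)\in E}\flow(a,b) \;-\; \sum_{(b,a)\in E}\flow(b,a)\enspace,
\]
i.e.\ the \emph{net flow out} of $a$. This is the \emph{negative} of the identity as printed in \claimref{claim:basicfacts}, and no application of the convention $\flow(b,a) = -\flow(a,b)$ reconciles the two (indeed, in the sums $\sum_{(b,a)\in E}\flow(b,a)$ you are evaluating $\flow$ on actual edges at their fixed orientation, so the antisymmetry convention never enters). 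The printed statement appears to be a sign typo in the paper: it is inconsistent with the paper's own conventions, e.g.\ with $\boundary = \vvar{e}_s - \vvar{e}_t$ and one unit of flow routed along an edge $(s,t)$ one gets $\left[\incMatrix^T\flow\right]_s = \flow(s,t) = 1 = \boundary_s$, agreeing with your formula and not with the printed one. When you write that your grouping ``yields the claimed signed difference,'' you are implicitly asserting agreement with the paper's statement, which is false; you should instead state the formula you actually derived and note explicitly that it differs from the printed claim by an overall sign.
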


One can interpret the first assertion in \claimref{claim:basicfacts} as saying that $\incMatrix^T \flow$ is a vector in $\rvertvec$
whose $a$-th coordinate indicates how much flow $\flow$ leaves (or enters, if it is negative) the graph $G$ at vertex $a$.
We say that $\flow$ is a \emph{circulation} if $\incMatrix^T\flow=0$.

\subsection{Electrical Flow}
\label{sec:pre-electrical-flow}
For any vector $\flow\in \redgevec$, we define its \emph{energy} $\renergy{\flow}$ by
\[
    \renergy{\flow}
        \defeq \sum_{e \in E} r_e \flow(e)^2
        = \flow^T \rMatrix \flow
        = \|\flow\|_{\rMatrix}^2  \enspace.
\]
We  fix a \emph{demand vector} $\boundary \in \rvertvec$ and we say a flow $\flow \in \redgevec$ is \emph{feasible (with respect to $\boundary$)}, or that it \emph{meets the demands}, if $\incMatrix^T \flow = \boundary$. Since $G$ is connected it is straightforward to check that there exists a feasible flow with respect to $\boundary$ if and only if $\sum_{v \in V} \boundary(v) = 0$.
\begin{definition} [Electrical Flow]
For a demand vector $\boundary \in \rvertvec$ satisfying $\sum_{a \in V} \boundary(a) = 0$, the \emph{electrical flow satisfying $\boundary$} is the unique minimizer to the following
\begin{equation}\label{eqn:primal}
\optFlow \defeq \argmin_{\flow \in \redgevec \enspace : \enspace \incMatrix^T \flow = \boundary}
    \renergy{\flow} \enspace.
\end{equation}
\end{definition}

This quadratic program  describes a natural physical problem. Given an electric circuit with nodes in $V$, for each undirected edge $e=\{a,b\}\in E$ we connect nodes $a$ and $b$ with a resistor of resistance $r_e$. Next, we fix the amount of current entering and leaving each node and denote this by demand vector $\boundary$. Recalling from physics that the energy of sending $i$ units of current over a resistor of resistance $r$ is $i^2 \cdot r$ the amount of electric current on each resistor is given by $\optFlow$.

The central problem of this paper is to efficiently compute an $\epsilon$-approximate electrical flow.

\begin{definition}[$\epsilon$-Approximate Electrical Flow]
For any $\epsilon \in \rNonNeg$, we say $\flow \in \redgevec$ is an \emph{$\epsilon$-approximate electric flow satisfying $\boundary$} if
$\incMatrix^T \flow = \boundary, \text{ and } \renergy{\flow} \leq (1 + \epsilon) \cdot \renergy{\optFlow} \enspace.$
\end{definition}

\subsubsection{Duality}
\label{sec:pre-duality}

The electric flow problem is dual to solving $\lap \vvar{x} = \boundary$ when $\lap$ is the Laplacian for the same graph $G$. To see this, we note that the Lagrangian dual of \equationref{eqn:primal} is given by
\begin{equation}
\label{eqn:dual}
    \max_{\volt \in \rvertvec} 2\volt^T \boundary -
    \volt^T \lap \volt \enspace.
\end{equation}
For symmetry we define the \emph{(dual) energy} of $\volt \in \rvertvec$ as
$
    \renergyDual{\volt} \defeq 2 \volt^T \boundary -
    \volt^T \lap \volt
$.
Setting the gradient of \equationref{eqn:dual} to $0$ we see that \equationref{eqn:dual} is minimized by $\volt \in \rvertvec$ satisfying
$
    \lap \volt = \boundary \enspace.
$
Let $\lapPseudo$ denote the \emph{Moore-Penrose pseduoinverse} of $\lap$ and let $\optVolt \defeq \lapPseudo \boundary$ denote a particular set of optimal voltages . Since the primal program \equationref{eqn:primal} satisfies Slater's condition, we have strong duality, so for all $\volt \in \rvertvec$
\[
    \renergyDual{\volt} \leq \renergyDual{\optVolt} = \renergy{\optFlow} \enspace.
\]
Therefore, for feasible $\flow \in \redgevec$ and $\volt \in \rvertvec$ the \emph{duality gap}, $\gap(\flow, \volt) \defeq \renergy{\flow} - \renergyDual{\volt}$ is an upper bound on both
$\renergy{\flow} - \renergy{\optFlow}$ and $\renergyDual{\optVolt} - \renergyDual{\volt}$.

In keeping with the electrical circuit interpretation we refer to a candidate dual solution $\volt \in \rvertvec$ as \emph{voltages} or \emph{vertex potentials} and we define $\potentialdrop{\volt}{a, b} \defeq \volt(a) - \volt(b)$ to be the \emph{potential drop} of $\volt$ across $(a, b)$. By the KKT conditions we know that
\[
    \optFlow = \rMatrix^{-1} \incMatrix \optVolt
~~ \text{ i.e. } ~~
    \forall e \in E ~ : ~ \optFlow(e) =
        \frac{\potentialdrop{\optVolt}{e}}{r_e} \enspace.
\]
For $e \in E$ we call $\frac{\potentialdrop{\volt}{e}}{r_e}$
the \emph{flow induced by $\volt$ across $e$} and we call
$\flow(e) r_e$ \emph{the potential
drop induced by $\flow$ across $e$}.

The optimality conditions $\optFlow = \rMatrix^{-1} \incMatrix \optVolt$ can be restated solely in terms of flows in a well known variant of \emph{Kirchoff's Potential Law} (KPL) as follows
\begin{lemma}[KPL]\label{lemma:kpl}
Feasible $\flow \in \redgevec$ is optimal if and only if $\flow^T \rMatrix \circVec = 0$ for all circulations $\circVec \in \redgevec$.
\end{lemma}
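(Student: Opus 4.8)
The plan is to derive the characterization directly from the optimality condition $\optFlow = \rMatrix^{-1}\incMatrix\optVolt$ already established via the KKT conditions, together with the first-order optimality of the energy-minimization program over the affine subspace of feasible flows. The key observation is that the set of feasible flows $\{\flow : \incMatrix^T\flow = \boundary\}$ is a translate of the linear subspace $\{\circVec : \incMatrix^T\circVec = 0\}$, i.e.\ exactly the space of circulations; this is the ``direction space'' along which we may perturb a feasible flow while preserving feasibility.

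First I would prove the ``only if'' direction. Suppose $\flow$ is feasible and optimal. Since $\renergy{\cdot} = \|\cdot\|_{\rMatrix}^2$ is a strictly convex quadratic and the feasible set is affine, optimality is equivalent to the gradient of $\renergy{\flow}$ being orthogonal (in the standard inner product) to every feasible direction; concretely, for every circulation $\circVec$ the function $t \mapsto \renergy{\flow + t\circVec}$ is minimized at $t = 0$. Expanding, $\renergy{\flow + t\circVec} = \renergy{\flow} + 2t\,\flow^T\rMatrix\circVec + t^2\renergy{\circVec}$, so the derivative at $t=0$ is $2\flow^T\rMatrix\circVec$, which must vanish. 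Hence $\flow^T\rMatrix\circVec = 0$ for all circulations $\circVec$. Alternatively, and perhaps more cleanly for the write-up, I would invoke the KKT/duality material directly: optimality gives $\optFlow = \rMatrix^{-1}\incMatrix\optVolt$ for some $\optVolt$, so $\flow^T\rMatrix\circVec = (\rMatrix^{-1}\incMatrix\optVolt)^T\rMatrix\circVec = \optVolt^T\incMatrix^T\circVec = \optVolt^T\,0 = 0$ using \claimref{claim:basicfacts} and $\incMatrix^T\circVec = 0$.

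For the ``if'' direction, suppose $\flow$ is feasible and $\flow^T\rMatrix\circVec = 0$ for all circulations $\circVec$. Let $\optFlow$ be the true electrical flow. Then $\circVec \defeq \optFlow - \flow$ satisfies $\incMatrix^T\circVec = \boundary - \boundary = 0$, so $\circVec$ is a circulation, and therefore $\flow^T\rMatrix(\optFlow - \flow) = 0$. Expanding the energy of $\optFlow$: $\renergy{\optFlow} = \renergy{\flow + \circVec} = \renergy{\flow} + 2\flow^T\rMatrix\circVec + \renergy{\circVec} = \renergy{\flow} + \renergy{\circVec} \geq \renergy{\flow}$. Since $\optFlow$ is the unique minimizer, this forces $\renergy{\circVec} = 0$, hence $\circVec = 0$ (as $\rMatrix \succ 0$) and $\flow = \optFlow$ is optimal.

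I do not anticipate a genuine obstacle here; the lemma is essentially the statement that the gradient of a convex quadratic vanishes along the feasible subspace at the optimum, dressed in the combinatorial language of circulations. The only point requiring a little care is the identification of the feasible-direction space with the space of circulations (immediate from linearity of $\incMatrix^T$) and the use of strict convexity / positive-definiteness of $\rMatrix$ to pin down uniqueness; both are routine given the preliminaries.
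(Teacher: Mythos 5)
Your proof is correct. The paper states Lemma~\ref{lemma:kpl} without proof, presenting it as a well-known restatement of the KKT optimality condition $\optFlow = \rMatrix^{-1}\incMatrix\optVolt$; your argument---first-order optimality of the strictly convex quadratic $\renergy{\cdot}$ over the affine feasible set, whose direction space is exactly the cycle space, for the ``only if'' direction, and the Pythagorean decomposition $\renergy{\optFlow} = \renergy{\flow} + \renergy{\optFlow - \flow}$ together with uniqueness of the minimizer for the ``if'' direction---is the standard proof and correctly fills in what the paper leaves implicit.
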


\subsection{Spanning Trees and Cycle Space}
\label{sec:pre-cycles}

Let $\tree \subseteq E$ be a spanning tree of $G$ and let us call the edges in $\tree$ the \emph{tree edges} and the edges in $E \setminus \tree$ the \emph{off-tree edges}.
Now, by the fact that $G$ is connected and $T$ spans $G$ we know that for every $a, b \in V$ there is a unique path connecting $a$ and $b$ using only tree edges.

\begin{definition}[Tree Path]
For $a, b \in V$, we define the \emph{tree path} $\treePath{(a,b)} \subseteq V
\times V$ to be the unique path from $a$ to $b$ using edges from $T$.
\footnote{Note that the edges of $\treePath{(a, b)}$ are oriented with respect
to the path, not the natural orientation of $G$.} In
vector form we let $\treePathVec{(a,b)} \in \redgevec$ denote the unique flow
sending $1$ unit of flow from $a$ to $b$, that is nonzero only on $T$.
\end{definition}

For the off-tree edges we similarly define \emph{tree cycles}.

\begin{definition}[Tree Cycle]
For $(a, b) \in \offtreeEdgeSet$, we define the \emph{tree cycle}
$\treeCycle{(a,b)} \defeq \{(a,b)\} \cup \treePath{(b, a)}$ to be the unique cycle consisting of
edge $(a, b)$ and $\treePath{(b, a)}$. In vector form we let
$\treeCycleVec{(a,b)}$ denote the unique circulation sending $1$ unit of flow on
$\treeCycle{(a, b)}$.
\end{definition}

\subsubsection{Cycle Space}

The tree cycles form a complete characterization of circulations in a graph. The set of all circulations
$
    \{\circVec \in \redgevec ~ | \incMatrix^T \circVec = 0\}
$
is a well-known subspace called \emph{cycle space} \cite{Bollobas98} and the tree cycles
$
    \{\treeCycleVec{e} ~ | ~ e \in \offtreeEdgeSet\}
$
form a basis. This yields an even more succinct description of the KPL optimality condition (\lemmaref{lemma:kpl}). A feasible $\flow \in \redgevec$ is optimal if and only if $\flow^T R \circVec_e = 0$ for all $e \in \offtreeEdgeSet$.

We can think of each tree cycle $\treeCycle{e}$ as a long resistor consisting of its edges in series with total resistance $\sum_{e \in \treeCycle{e}} r_e$ and flow induced potential drop of $\sum_{e \in \treeCycle{e}} \flow(e) r_e$. KPL optimality then states that $\flow \in \redgevec$ is optimal if and only if the potential drop across each of these resistors is 0. Here we define two key quantities relevant to this view.

\begin{definition}[Cycle Quantities]
For $e \in \offtreeEdgeSet$ and $\flow \in \redgevec$ the \emph{resistance of $\treeCycle{e}$}, $\cycleResistance{e}$, and the \emph{flow induced potential across $\treeCycle{e}$}, $\cyclePotential{e}{\flow}$, are given by
\[
\cycleResistance{e} \defeq \sum_{e' \in \treeCycle{e}} r_{e'} = \treeCycleVec{e}^T \rMatrix \treeCycleVec{e}
\enspace\enspace \text{and} \enspace\enspace
\cyclePotential{e}{\flow}
\defeq \sum_{e \in \treeCycle{e}} r_e \flow(e)  = \flow^T \rMatrix \treeCycleVec{e}
\enspace.
\]
\end{definition}

\subsubsection{Low-Stretch Spanning Trees}

Starting with a feasible flow, our algorithm computes an approximate electrical flow by fixing violations of KPL on randomly sampled tree cycles. How well this algorithm performs is determined by how well the resistances of the off-tree edges are approximated by their corresponding cycle resistances. This quantity, which we refer to as as the \emph{tree condition number}, is in fact a certain condition number of a matrix whose rows are properly normalized instances of $\treeCycleVec{e}$ (see \sectionref{sec:geoview}).
\begin{definition}[Tree Condition Number]
The \emph{tree condition number} of spanning tree $\tree$ is given by
$
    \treeCondition(\tree)
        \defeq \sum_{e \in \offtreeEdgeSet}
            \frac{\cycleResistance{e}}{r_e} \enspace,
$
and we abbreviate it as $\treeCondition$ when the underlying tree $\tree$ is clear from context.
\end{definition}
\noindent This is closely related to a common quantity associated with a spanning tree called \emph{stretch}.
\begin{definition}[Stretch]
The \emph{stretch} of $e \in E$, $\stretchEdge{e}$, and the \emph{total stretch} of $\tree$, $\stretchTotal{\tree}$, are
\[
    \stretchEdge{e}
        \defeq \frac{\sum_{e' \in \treePath{e}} r_{e'}}{r_e}
    \enspace \text{ and } \enspace
        \stretchTotal{\tree} \defeq \sum_{e \in E} \stretchEdge{e}
    \enspace.
\]
\end{definition}

Since $\cycleResistance{e} = r_e \cdot (1 + \stretchEdge{e})$ we see that these quantities are related by $\treeCondition(T) = \stretchTotal{\tree} + m - 2n + 2$.

Efficient algorithms for computing spanning trees with low total or average stretch, i.e. \emph{low-stretch spanning trees}, have found numerous applications \cite{AKPW,EEST} and all previous nearly-linear-time SDD-system solvers \cite{ST04, ST08c, KoutisMP10, KMP11}, including the most efficient form of the SDD solver presented in this paper, make use of such trees. There have been multiple breakthroughs in the efficient construction of \emph{low-stretch spanning trees} \cite{AKPW, EEST, ABNlowStretch08, KoutisMP10, AbrahamNeiman12} and the latest such result is used in this paper and stated below.

\begin{theorem}[\cite{AbrahamNeiman12}]
\label{thm:low-stretch}
In $O(m \log n \log \log n)$ time we can compute a spanning tree $\tree$ with total stretch $\stretchTotal{\tree} = O(m \log n \log \log n)$.
\end{theorem}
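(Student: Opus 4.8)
The plan is to prove \theoremref{thm:low-stretch} by building $\tree$ through a \emph{hierarchical low-diameter decomposition}: recursively partition $G$ into clusters whose radii decrease geometrically, construct a spanning tree for each cluster recursively, and stitch the cluster-trees together using a single inter-cluster edge per child. The stretch analysis then hinges on one principle. Thinking of the resistance $r_e$ as the length of $e$, if the endpoints of $e$ are first separated into distinct clusters at a recursion level whose clusters have radius $\approx \Delta$, then the tree path $\treePath{e}$ stays within $O(1)$ such clusters and hence has length $O(\Delta)$, so $e$ contributes $O(\Delta/r_e)$ to $\stretchEdge{e}$. Consequently $\E[\stretchEdge{e}] = \sum_{\text{levels }i} \Pr[e\text{ separated at level }i]\cdot O(\Delta_i/r_e)$, and the whole argument reduces to making these separation probabilities small enough that the sum is $O(\log n\log\log n)$; derandomizing (or sampling) then yields a fixed tree with $\stretchTotal{\tree}=O(m\log n\log\log n)$.

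The decomposition primitive is the \emph{petal decomposition} of Abraham--Neiman, refining the \emph{star decomposition} of Elkin--Emek--Spielman--Teng. Given a cluster of radius $\Delta$ around a center $x_0$, rather than carving out concentric balls one carves out \emph{petals} --- cone-shaped regions built around shortest paths emanating from $x_0$ toward a sequence of carefully chosen far targets --- each petal having radius drawn at random from a window that is a constant fraction of $\Delta$, with the randomization performed in region-growing (ball-growing) style so that the number of edges crossing the petal boundary is small compared with a volume measure of its interior. Once petals have been peeled off until no uncovered vertex remains at large distance from $x_0$, a central ball around $x_0$ is left; one recurses on each petal and on the central ball, linking each to $x_0$'s subtree by one edge. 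The two facts to extract are: (i) \textbf{radius control} --- every child cluster, petal and central ball alike, has radius at most a fixed constant fraction of $\Delta$, so after the standard reduction to polynomially bounded aspect ratio the recursion has depth $O(\log n)$ and the final tree has radius $O(\Delta)$; and (ii) \textbf{a telescoping charge} --- the probability that a fixed edge $e$ is cut while carving petals at scale $\Delta$ is $O\!\big(\tfrac{r_e}{\Delta}\cdot\log\tfrac{\mu_{\mathrm{parent}}}{\mu_{\mathrm{child}}}\big)$, where $\mu(\cdot)$ is the chosen volume (roughly, weighted edge count) of the cluster in which the edge ultimately lands.

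With (i) and (ii) in hand the bookkeeping collapses. For a fixed $e$, multiply the per-level cut probability by the per-level stretch cost $O(\Delta_i/r_e)$: the factors $r_e/\Delta_i$ and $\Delta_i/r_e$ cancel, leaving $\sum_i O\!\big(\log\tfrac{\mu_{i-1}}{\mu_i}\big)$ summed over the chain of clusters containing $e$. Since $\mu$ runs from $O(m)$ down to $O(1)$ along that chain, this telescopes to $O(\log m)=O(\log n)$, giving $\stretchTotal{\tree}=O(m\log n)$ in the idealized accounting --- one logarithmic factor better than EEST precisely because the region-growing charges at the $O(\log n)$ levels are made to telescope instead of being paid independently. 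The surviving $\log\log n$ factor is the cost of honesty in the carving step: forcing the radius of \emph{every} child (including the central ball) to shrink by a constant factor \emph{while} maintaining the small-boundary guarantee forces one to iterate the region growing $\Theta(\log\log n)$ times, or to inflate its slack by $\Theta(\log\log n)$, and that factor rides through into the stretch bound. For the running time, each petal carving is a bounded number of Dijkstra-type sweeps, each edge participates in the carvings of only $O(\log n)$ clusters over the whole recursion, and an efficient priority queue makes the total work $O(m\log n\log\log n)$.

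I expect the crux to be exactly the simultaneous achievement of (i) and (ii) inside one decomposition step: obtaining a geometric decrease in radius \emph{and} a region-growing small-boundary guarantee at once is what separates Abraham--Neiman from the earlier $O(\log^2 n\log\log n)$ constructions, and it is where all the delicate parameter choices live --- the precise shape of the petals, the width of the randomization window, how the central ball is kept small, and the definition of the volume measure $\mu$ that drives the telescoping. Once that step is in place, the stretch summation and the running-time accounting above are comparatively mechanical.
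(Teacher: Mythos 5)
This theorem is not proved in the paper at all: it is imported verbatim from Abraham--Neiman~\cite{AbrahamNeiman12} and used as a black box. The paper's ``proof'' consists entirely of the citation. So there is nothing in the paper to compare your argument against, and in particular the paper does not need --- and does not want --- a reconstruction of the low-stretch spanning tree machinery; doing so would run counter to the paper's whole point of \emph{avoiding} re-deriving heavy components.

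That said, as a survey of the cited work your sketch is broadly accurate in spirit: petal decomposition refining EEST's star decomposition, geometric radius decay giving $O(\log n)$ recursion depth after aspect-ratio reduction, and the crucial improvement over earlier $O(m\log^2 n\cdot\operatorname{polyloglog})$ bounds coming from a per-edge charge that telescopes across levels rather than paying $\Theta(\log n)$ independently at each scale. But this is an outline of a 30-plus-page construction, not a proof: you gesture at the two hard simultaneous requirements (radius control plus small boundary) without verifying either, you assert the telescoping probability bound without establishing it, and you hand-wave the source of the $\log\log n$ factor (``the cost of honesty in the carving step''), which is precisely where the genuinely delicate parameter-balancing lives in Abraham--Neiman. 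If you were actually asked to \emph{prove} this theorem you would be nowhere near done; but since the paper treats it as a cited primitive, the correct move is simply to cite \cite{AbrahamNeiman12} and move on, exactly as the authors do.
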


\section{A Simple Nearly-Linear Time SDD Solver}
\label{sec:algorithm}

Given a SDD system $\mvar{A} \vvar{x} = \vvar{b}$ we wish to efficiently compute $\vvar{x}$ such that $\|\vvar{x} - \pseudo{\mvar{A}} \vvar{b} \|_{\mvar{A}} \leq \epsilon \|\pseudo{\mvar{A}} \vvar{b}\|_{A}$. Using standard reduction techniques (detailed in \appendixref{appendix:reductions}) we can reduce solving such SDD systems to solving Laplacian systems corresponding to connected graphs without a loss in asymptotic run time. Therefore it suffices to solve $\lap \volt = \boundary$ in nearly-linear time when $\lap$ is the Laplacian matrix for some connected graph $G$. Here we provide an algorithm that both solves such systems and computes the corresponding $\epsilon$-approximate electric flows in $\runtimeSimple$ time.

\subsection{A Simple Iterative Approach}

Our algorithm focuses on the electric flow problem. First, we compute a low stretch spanning tree, $\tree$, and a crude initial feasible $\flowInitial \in \redgevec$ taken to be the unique $\flowInitial$ that meets the demands and is nonzero only on tree edges. Next, for a fixed number of iterations, $\nIter$, we perform simple iterative steps, referred to as \emph{cycle updates}, in which we compute a new feasible $\flow_{i} \in \redgevec$ from the previous feasible $\flow_{i - 1} \in \redgevec$ while attempting to decrease energy. Each iteration, $i$, consists of sampling an $e \in \offtreeEdgeSet$ proportional to $\frac{\cycleResistance{e}}{r_e}$, checking if $\flow_{i - 1}$ violates KPL on $\treeCycle{e}$ (i.e. $\cyclePotential{e}{\flow_{i}} \neq 0$) and adding a multiple of $\treeCycleVec{e}$ to make KPL hold on $\treeCycle{e}$ (i.e. $\flow_{i} = \flow_{i - 1} - \frac{\cyclePotential{e}{\flow_{i - 1}}}{\cycleResistance{e}} \treeCycleVec{e}$). Since, $\incMatrix^T \treeCycleVec{e} = 0$, this operation preserves feasibility. We show that in expectation $\flow_{\nIter}$ is an $\epsilon$-approximate electrical flow.

To solve $\lap \volt = \boundary$ we show how to use an $\epsilon$-approximate electrical flow to derive a candidate solution to $\lap \volt = \boundary$ of comparable quality. In particular, we use the fact that a vector $\flow \in \redgevec$ and a spanning tree $\tree$ induce a natural set of voltages, $\volt \in \rvertvec$ which we call the \emph{tree induced voltages}.

\begin{definition}[Tree Induced Voltages]
\label{def:tree-induced-voltages}
For $\flow \in \redgevec$ and an arbitrary (but fixed) $s \in V$,\footnote{We are primarily concerned with difference between potentials which are invariant under the choice $s \in V$.} we define the \emph{tree induced voltages} $\volt \in \rvertvec$ by
$
\volt(a)
    \defeq \sum_{e \in \treePath{(a, s)}} \flow(e) r_e
$
for $\forall a \in V$.
\end{definition}

Our algorithm simply returns the tree induced voltages for $\flow_{\nIter}$, denoted $\volt_{\nIter}$, as the approximate solution to $\lap \volt = \boundary$. The full pseudocode for the algorithm is given in \algorithmref{algm:simplealgorithm}.

\begin{algorithm}
\SetAlgoLined
\SetKwInOut{Input}{Input}
\SetKwInOut{Output}{Output}
\Input{$G = (V, E, r)$, $\boundary \in \rvertvec$, $\epsilon \in \rPos$}
\Output{$\flow \in \redgevec$ and $\volt \in \rvertvec$}
\BlankLine
$T := $ low-stretch spanning tree of $G$\;
$\flow_0 := $ unique flow on $T$ such that $\incMatrix^T \flow_0 = \boundary$\;
$\edgeSampleProb{e} := \frac{1}{\treeCondition(\tree)} \cdot \frac{\cycleResistance{e}}{r_e}$ for all
$e \in \offtreeEdgeSet$ \;
$\nIter = \big\lceil\treeCondition \log \big(\frac{\stretchTotal{\tree}
    \cdot \treeCondition(\tree)}{\epsilon}\big)\big\rceil$\;
\For{$i = 1$ to $\nIter$}
{
    Pick random $e_i \in \offtreeEdgeSet$ by probability distribution
    $\sampleProbVec$ \;
    $\flow_i = \flow_{i - 1} - \frac{\cyclePotential{e}{\flow_{i - 1}}}
        {\cycleResistance{e}}
    \treeCycleVec{e}$ \;
}
\Return{$\flow_{\nIter}$ and its tree induced voltages $\volt_{\nIter}$}
\caption{\label{algm:simplealgorithm}$\simpleSolver$}
\end{algorithm}

\subsection{Algorithm Guarantees}

In the next few sections we prove that $\simpleSolver$ both computes an $\epsilon$-approximate electric flow and solves the corresponding Laplacian system in nearly linear time:
\begin{theorem}[$\simpleSolver$]
\label{thm:simple_algorithm}
The output of $\simpleSolver$ satisfies
\footnote{Although the theorem is stated as an expected guarantee on $\flow$ and $\volt$, one can easily use Markov bound and Chernoff bound to provide a probabilistic but exact guarantee.}
\[
\E\big[\renergy{\flow}\big] \leq (1 + \epsilon) \cdot \renergy{\optFlow}
\enspace \text{ and } \enspace
\E\big\|\volt - \lapPseudo \boundary\big\|_{\lap}
    \leq \sqrt{\epsilon} \cdot \big\|\lapPseudo \boundary\big\|_{\lap}
\]
and $\simpleSolver$ can be implemented to run in time $\runtimeSimple$.
\end{theorem}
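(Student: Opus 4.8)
The plan is to decompose \theoremref{thm:simple_algorithm} into three independent pieces: (i) a per-iteration energy-decrease identity, (ii) a convergence bound obtained by iterating the expected decrease, and (iii) a runtime analysis that ties the number of iterations to the data structure cost. For (i), fix a feasible $\flow_{i-1}$ and note that the update $\flow_i = \flow_{i-1} - \frac{\cyclePotential{e}{\flow_{i-1}}}{\cycleResistance{e}} \treeCycleVec{e}$ is exactly the orthogonal projection (in the $\rMatrix$-inner product) of $\flow_{i-1}$ onto the hyperplane $\{\flow : \flow^T \rMatrix \treeCycleVec{e} = 0\}$. A one-line Pythagorean computation gives
\[
\renergy{\flow_{i-1}} - \renergy{\flow_i} = \frac{\cyclePotential{e}{\flow_{i-1}}^2}{\cycleResistance{e}} \enspace.
\]
I would then subtract $\renergy{\optFlow}$ from both sides and take expectation over the sampling of $e \sim \sampleProbVec$ with $\edgeSampleProb{e} = \frac{1}{\treeCondition}\cdot\frac{\cycleResistance{e}}{r_e}$, so that
\[
\E_e\big[\renergy{\flow_{i-1}} - \renergy{\flow_i}\big] = \frac{1}{\treeCondition}\sum_{e \in \offtreeEdgeSet} \frac{\cyclePotential{e}{\flow_{i-1}}^2}{r_e} \enspace.
\]

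The heart of the argument — and the step I expect to be the main obstacle — is showing that this expected decrease is a constant fraction of the current energy gap, i.e.
\[
\sum_{e \in \offtreeEdgeSet} \frac{\cyclePotential{e}{\flow_{i-1}}^2}{r_e} \geq \renergy{\flow_{i-1}} - \renergy{\optFlow} \enspace.
\]
Writing $\flow_{i-1} = \optFlow + \circVec$ where $\circVec$ is a circulation (both are feasible, so their difference lies in cycle space), we have $\renergy{\flow_{i-1}} - \renergy{\optFlow} = \renergy{\circVec}$ by KPL-orthogonality of $\optFlow$ to cycle space (\lemmaref{lemma:kpl}), and also $\cyclePotential{e}{\flow_{i-1}} = \flow_{i-1}^T \rMatrix \treeCycleVec{e} = \circVec^T \rMatrix \treeCycleVec{e}$, again by \lemmaref{lemma:kpl}. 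So the claim reduces to a purely linear-algebraic statement about cycle space: expanding $\circVec$ in the tree-cycle basis $\circVec = \sum_e x_e \treeCycleVec{e}$, one must show $\sum_e \frac{(\circVec^T \rMatrix \treeCycleVec{e})^2}{r_e} \geq \circVec^T \rMatrix \circVec$. This is where the low-stretch / tree-condition-number structure is used; I would prove it by observing that the matrix $\kaczmarzMatrix$ whose rows are $\treeCycleVec{e}/\sqrt{r_e}$ satisfies $\treeCycleVec{e}^T \rMatrix \treeCycleVec{e}/r_e = \cycleResistance{e}/r_e \geq 1$ on the diagonal, and bounding the relevant quantity via the identity $\circVec^T \rMatrix \circVec = $ (in the right coordinates) a sum that telescopes against the tree-cycle norms — essentially the observation that the off-tree-edge coordinates of $\circVec$ already account for all of its energy up to the tree-condition-number factor. (This is precisely the place the randomized Kaczmarz analysis of \cite{SV} enters, with convergence rate governed by $\treeCondition$.)

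Granting the key inequality, (ii) is routine: we get $\E[\renergy{\flow_i} - \renergy{\optFlow}] \leq (1 - \treeCondition^{-1})\,\E[\renergy{\flow_{i-1}} - \renergy{\optFlow}]$, so after $\nIter = \lceil \treeCondition \log(\stretchTotal{\tree}\treeCondition/\epsilon)\rceil$ steps the gap shrinks by a factor $\stretchTotal{\tree}\treeCondition/\epsilon$; bounding the initial gap $\renergy{\flowInitial} - \renergy{\optFlow} \leq \renergy{\flowInitial} \leq \stretchTotal{\tree}\cdot\renergy{\optFlow}$ (the tree flow has energy at most the total stretch times optimal, a standard fact — e.g.\ $\flowInitial = \sum_{e \in \offtreeEdgeSet}\optFlow(e)\treePathVec{e} + (\text{tree part of }\optFlow)$ and Cauchy–Schwarz) yields $\E[\renergy{\flow_{\nIter}}] \leq (1+\epsilon)\renergy{\optFlow}$. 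The voltage bound follows by relating the tree-induced voltages of a feasible flow to its energy gap: one shows $\|\volt_{\nIter} - \lapPseudo\boundary\|_{\lap}^2 \leq \renergy{\flow_{\nIter}} - \renergy{\optFlow}$ (the tree-induced voltages are a "best tree response" and their dual energy gap is dominated by the primal gap), then $\renergy{\flow_{\nIter}} - \renergy{\optFlow} \leq \epsilon\renergy{\optFlow} = \epsilon\|\lapPseudo\boundary\|_{\lap}^2$ in expectation, and Jensen's inequality converts the expected squared norm to the stated expected norm. Finally, for the runtime: \theoremref{thm:low-stretch} builds $\tree$ with $\stretchTotal{\tree} = O(m\log n\log\log n)$ in $O(m\log n\log\log n)$ time, hence $\treeCondition = O(m\log n\log\log n)$ and $\nIter = O(m\log n\log\log n \cdot \log(\epsilon^{-1}n))$; deferring to \sectionref{sec:datastructure}, each iteration (sample $e$, query $\cyclePotential{e}{\cdot}$, add a multiple of $\treeCycleVec{e}$) costs $O(\log n)$ via the data structure, and recovering $\volt_{\nIter}$ is one final $O(m)$ tree walk, giving the total $\runtimeSimple$.
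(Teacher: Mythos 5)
Your flow-energy argument takes the geometric/Kaczmarz route, which the paper develops as an alternative analysis in \sectionref{sec:geoview} (\lemmaref{lemma:geocondition}). The key inequality you flag, $\sum_{e}(\circVec^T\rMatrix\treeCycleVec{e})^2/r_e \geq \|\circVec\|_\rMatrix^2$ for circulations $\circVec$, is true, but your sketch of it (``telescoping against tree-cycle norms'') is not a proof; the paper establishes it by writing $\treeCycleVec{e}$ as the indicator of $e$ minus $\treePathVec{e}$, dropping a square, and using that a circulation's values on tree edges are determined by its values on off-tree edges. The paper's primary route (\sectionref{sec:convergence}) is more direct and avoids this inequality entirely: it proves the algebraic identity $\gap(\flow,\volt)=\sum_{e\in E\setminus\tree}\cyclePotential{e}{\flow}^2/r_e$ for the tree-induced voltages $\volt$ (\lemmaref{lemma:tree_gap_formula}) and then invokes weak duality, $\gap(\flow,\volt)\geq\renergy{\flow}-\renergy{\optFlow}$; this also produces the exact quantity needed for the voltage bound.

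The genuine gap in your proposal is the voltage step. You assert $\|\volt-\lapPseudo\boundary\|_\lap^2 \leq \renergy{\flow}-\renergy{\optFlow}$ on the grounds that tree-induced voltages are a ``best tree response.'' They are not: they are a convenient dual candidate, not an optimal one, and the correct bound is $\|\volt-\lapPseudo\boundary\|_\lap^2 = \gap(\optFlow,\volt) \leq \treeCondition\cdot(\renergy{\flow}-\renergy{\optFlow})$ (\lemmaref{lemma:dual_round}), with the factor $\treeCondition$ essentially tight. To see this, take $G$ to be an $n$-cycle with unit resistances, $\tree$ the path, $\boundary=0$, and $\flow$ a circulation of strength $\delta$: then $\renergy{\flow}-\renergy{\optFlow}=n\delta^2$, while the tree-induced voltages satisfy $\|\volt\|_\lap^2=(n-1)n\delta^2$ and $\treeCondition=n$. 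This is exactly why $\nIter$ carries a $\treeCondition$ inside the logarithm: after $\nIter=\big\lceil\treeCondition\log(\stretchTotal{\tree}\treeCondition/\epsilon)\big\rceil$ iterations one in fact has $\E[\renergy{\flow_\nIter}]-\renergy{\optFlow}\leq\frac{\epsilon}{\treeCondition}\renergy{\optFlow}$, and that extra $\frac{1}{\treeCondition}$ is consumed by the $\treeCondition$ in the rounding inequality to give $\E\|\volt_\nIter-\lapPseudo\boundary\|_\lap^2 \leq \epsilon\|\lapPseudo\boundary\|_\lap^2$. You chose the right $\nIter$ but then discarded the extra $\frac{1}{\treeCondition}$ by weakening to $\leq\epsilon\renergy{\optFlow}$ and leaned on the false rounding inequality; combine the sharper energy bound with the correct $\treeCondition$-factored rounding lemma and the proof closes.
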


By construction $\simpleSolver$ outputs a feasible flow and, by choosing $\tree$ to be a low stretch spanning tree with properties guaranteed by \theoremref{thm:low-stretch}, we know that the number of iterations of simple solver is bounded by $O(m \log n \log \log n \log (\epsilon^{-1} n))$. However, in order to prove the theorem we still need to show that (1) each iteration makes significant progress, (2) each iteration can be implemented efficiently, and (3) the starting flow and final voltages are appropriately related to $\renergy{\optFlow}$. In particular we show:
\begin{enumerate}
  \item Each iteration of $\simpleSolver$ decreases the energy of the current flow by at least an expected $\left(1 - \frac{1}{\treeCondition}\right)$ fraction of the energy distance to optimality (see \sectionref{sec:convergence}).
  \item Each iteration of $\simpleSolver$ can be implemented to take $O(\log n)$ time (see \sectionref{sec:datastructure}).
\footnote{Note that a na\"{\i}ve implementation of cycle updates does not necessarily run in sublinear time. In particular, updating
$\flow_i := \flow_{i - 1} - \frac{\cyclePotential{e}{\flow_{i - 1}}}{\cycleResistance{e}}
\treeCycleVec{e}$
by walking $\treeCycle{e}$ and updating flow values one by one may take more than (even amortized) sublinear time, even though $\tree$ may be of low total stretch. Since $\stretchTotal{\tree}$ is defined with respect to cycle resistances but not with respect to the number of edges in these cycles, it is possible to have a low-stretch tree where each tree cycle still has $\Omega(|V|)$ edges on it.
Furthermore, even if all edges have resistances $1$ and therefore the average number of edges in a tree cycle is $\tilde{O}(\log n)$, since $\simpleSolver$ samples off-tree edges with higher stretch with higher probabilities, the expected number of edges in a tree cycle may still be $\Omega(|V|)$.}
  \item The energy of $\flow_0$ is sufficiently bounded, the quality of tree voltages is sufficiently bounded, and all other parts of the algorithm can be implemented efficiently  (see \sectionref{sec:simple_runtime}).
\end{enumerate}

\section{Convergence Rate Analysis}
\label{sec:convergence}

In this section we analyze the convergence rate of $\simpleSolver$. The central result is as follows.

\begin{theorem}[Convergence]
\label{thm:convergence}
Each iteration $i$ of $\simpleSolver$ computes feasible $\flow_i \in \redgevec$ such that
\[
    \E\big[\renergy{\flow_i}\big] - \renergy{\optFlow} \leq \Big(1 - \frac{1}{\treeCondition}\Big)^{i}
    \left(\renergy{\flowInitial} - \renergy{\optFlow}\right)
    \enspace.
\]
\end{theorem}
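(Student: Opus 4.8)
The plan is to show that a single cycle update is exactly the projection of the current flow onto the affine subspace $\{\flow : \cyclePotential{e}{\flow} = 0\}$ with respect to the $\rMatrix$-inner product, and that this projection decreases the squared $\rMatrix$-distance to $\optFlow$ by a predictable amount in expectation. First I would exploit the fact that all the flows $\flow_i$ and $\optFlow$ are feasible, so each difference $\flow_i - \optFlow$ is a circulation, i.e. lies in cycle space, which is spanned by $\{\treeCycleVec{e}\}_{e \in \offtreeEdgeSet}$. Since $\optFlow$ satisfies KPL (\lemmaref{lemma:kpl}), we have $\optFlow^T \rMatrix \treeCycleVec{e} = 0$ for every off-tree edge $e$, hence $\cyclePotential{e}{\flow_{i-1}} = (\flow_{i-1} - \optFlow)^T \rMatrix \treeCycleVec{e}$. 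This identifies the ``amount of violation'' on cycle $e$ with an $\rMatrix$-inner product between the current error and the cycle vector.

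Next I would compute, for a fixed choice of $e = e_i$, the one-step energy drop. Writing $\flow_i = \flow_{i-1} - \alpha \treeCycleVec{e}$ with $\alpha = \cyclePotential{e}{\flow_{i-1}}/\cycleResistance{e}$, a direct expansion using $\renergy{\flow} = \flow^T \rMatrix \flow$ and the Pythagorean relation for the $\rMatrix$-orthogonal projection gives
\[
    \renergy{\flow_{i-1}} - \renergy{\flow_i}
    = \frac{\cyclePotential{e}{\flow_{i-1}}^2}{\cycleResistance{e}}
    = \frac{\big((\flow_{i-1} - \optFlow)^T \rMatrix \treeCycleVec{e}\big)^2}{\treeCycleVec{e}^T \rMatrix \treeCycleVec{e}} \enspace.
\]
Equivalently, since $\renergy{\flow_i} - \renergy{\optFlow} = \|\flow_i - \optFlow\|_{\rMatrix}^2$ (again because both are feasible and $\optFlow$ is the energy minimizer, so the cross term vanishes), this says $\|\flow_{i-1} - \optFlow\|_\rMatrix^2 - \|\flow_i - \optFlow\|_\rMatrix^2$ equals the same quantity — the update is literally the $\rMatrix$-projection of $\flow_{i-1}$ onto $\{\cyclePotential{e}{\cdot}=0\}$.

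Then I would take the expectation over the random choice of $e$ drawn with probability $\edgeSampleProb{e} = \frac{1}{\treeCondition}\cdot\frac{\cycleResistance{e}}{r_e}$. The expected drop is
\[
    \E\big[\renergy{\flow_{i-1}} - \renergy{\flow_i} \,\big|\, \flow_{i-1}\big]
    = \sum_{e \in \offtreeEdgeSet} \frac{1}{\treeCondition}\cdot\frac{\cycleResistance{e}}{r_e}
        \cdot \frac{\big((\flow_{i-1} - \optFlow)^T \rMatrix \treeCycleVec{e}\big)^2}{\cycleResistance{e}}
    = \frac{1}{\treeCondition} \sum_{e \in \offtreeEdgeSet}
        \frac{\big((\flow_{i-1} - \optFlow)^T \rMatrix \treeCycleVec{e}\big)^2}{r_e} \enspace.
\]
The key inequality I need is that this sum is at least $\|\flow_{i-1}-\optFlow\|_\rMatrix^2 = \renergy{\flow_{i-1}} - \renergy{\optFlow}$; combined with the prefactor this yields the expected contraction by $(1 - 1/\treeCondition)$, and the theorem follows by induction on $i$ with a tower-of-expectations argument. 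To prove this inequality, I would write the circulation $\flow_{i-1} - \optFlow = \sum_e \lambda_e \treeCycleVec{e}$ in the tree-cycle basis and observe that for each off-tree edge $e$ the coefficient $\lambda_e$ can be read off directly as the flow value of $\flow_{i-1} - \optFlow$ on the off-tree edge $e$ itself (since $\treeCycleVec{e}$ is the only basis element supported on $e$); hence $(\flow_{i-1}-\optFlow)^T\rMatrix\treeCycleVec{e}$ relates $\lambda_e$ to the $\rMatrix$-Gram matrix of the basis, and the claimed bound amounts to a statement that the normalized cycle vectors $\treeCycleVec{e}/\sqrt{r_e}$, while not orthogonal, are ``spread out'' enough — precisely the content of the tree condition number. \textbf{The main obstacle} I anticipate is this last inequality: getting the clean lower bound $\sum_e (\cdot)^2/r_e \geq \|\flow_{i-1}-\optFlow\|_\rMatrix^2$ requires identifying $(\flow_{i-1}-\optFlow)^T\rMatrix\treeCycleVec{e}$ with the right quantity and then summing — morally a change-of-basis identity showing $\sum_e \frac{1}{r_e}\big(\sum_{e'}\lambda_{e'}(\treeCycleVec{e'}^T\rMatrix\treeCycleVec{e})\big)^2 \ge \sum_{e,e'}\lambda_e\lambda_{e'}\treeCycleVec{e}^T\rMatrix\treeCycleVec{e'}$, which I would handle by noting that restricting $\rMatrix$-inner products against $\treeCycleVec{e}$ to the off-tree coordinate $e$ picks out exactly $\lambda_e r_e$ (the off-tree edges carry no tree-path overlap with each other), collapsing the left side to $\sum_e \lambda_e^2 r_e$ and the right side telescopes correctly; the bookkeeping of orientations and the tree-path/off-tree-edge decomposition is where care is needed.
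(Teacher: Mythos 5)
Your overall strategy --- casting each cycle update as an $\rMatrix$-orthogonal projection onto the hyperplane $\{\cyclePotential{e}{\cdot}=0\}$ and lower-bounding the expected per-step drop directly against $\|\flow_{i-1}-\optFlow\|_\rMatrix^2$ --- is essentially the paper's Kaczmarz/alternating-projections analysis in \sectionref{sec:geoview}, not the primary proof in \sectionref{sec:convergence}. The primary proof instead introduces the tree-induced voltages, establishes the exact identity $\gap(\flow,\volt) = \sum_{e\in\offtreeEdgeSet} \cyclePotential{e}{\flow}^2/r_e$ (\lemmaref{lemma:tree_gap_formula}), shows that the expected drop equals $\gap(\flow_{i-1},\volt_{i-1})/\treeCondition$ (\lemmaref{lemma:expected_progress}), and then invokes weak duality to get $\renergy{\flow_{i-1}}-\renergy{\optFlow} \leq \gap(\flow_{i-1},\volt_{i-1})$ (\lemmaref{lemma:convergence_rate}). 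That route never needs the inequality you flag as your main obstacle. Your route does: it is \lemmaref{lemma:geocondition}, and your sketch of it is incorrect.

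Concretely, you need
\[
\sum_{e \in \offtreeEdgeSet} \frac{\big((\flow_{i-1}-\optFlow)^T\rMatrix\treeCycleVec{e}\big)^2}{r_e} \;\geq\; \big\|\flow_{i-1}-\optFlow\big\|_\rMatrix^2 \enspace,
\]
and you propose to ``collapse'' each inner product $(\flow_{i-1}-\optFlow)^T\rMatrix\treeCycleVec{e}$ to its off-tree coordinate $\lambda_e r_e$, so the left side becomes $\sum_e \lambda_e^2 r_e$. This is not a valid move: the $\treeCycleVec{e}$ are not $\rMatrix$-orthogonal, and dropping the tree-edge contribution to the inner product is not an inequality in either direction for a squared quantity. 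Worse, if the collapse did hold the inequality would go the \emph{wrong} way: with $g = \flow_{i-1}-\optFlow$ one has $\sum_{e\in\offtreeEdgeSet}\lambda_e^2 r_e = \sum_{e\in\offtreeEdgeSet} g_e^2 r_e \leq \sum_{e\in E} g_e^2 r_e = \|g\|_\rMatrix^2$, with strict slack whenever $g$ is nonzero on tree edges. The correct proof (see \lemmaref{lemma:geocondition}) writes $g^T\rMatrix\treeCycleVec{e} = g_e r_e - g^T\rMatrix\treePathVec{e}$, applies $(a+b)^2 \geq a^2 + 2ab$ to keep the cross term rather than discard it, and then uses that $-\sum_{e\in\offtreeEdgeSet} g_e\treePathVec{e}$ agrees with $g$ on tree edges (because $g$ is a circulation), so the cross terms sum to $2\sum_{e'\in\tree} r_{e'} g_{e'}^2 \geq 0$; together with the off-tree terms $\sum_{e\in\offtreeEdgeSet} r_e g_e^2$ this exceeds $\|g\|^2_\rMatrix$. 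That cross-term bookkeeping is exactly what your sketch omits, and it is not optional.
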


Our proof is divided into three steps. In \sectionref{sec:convergence:step_absolute} we analyze the energy gain of a single algorithm iteration, in \sectionref{sec:convergence:step_relative} we bound the distance to optimality in a single algorithm iteration, and in \sectionref{sec:convergence:total} we connect these to prove the theorem.

\subsection{Cycle Update Progress}
\label{sec:convergence:step_absolute}

For feasible $\flow \in \redgevec,$ we can decrease $\renergy{\flow}$ while maintaining feasibility, by adding a multiple of a circulation $\circVec \in \redgevec$ to $\flow$. In fact, we can easily optimize to pick the best multiple.

\begin{lemma}[Energy Improvement]
For $\flow \in \redgevec$, $\circVec \in \redgevec$, and $\alpha^* = - \frac{\flow^T \rMatrix \circVec}
    {\vvar{c}^T \rMatrix \circVec} \in \R$ we have
\[
    \argmin_{\alpha \in \R} ~ \renergy{\flow + \alpha \circVec}
    = - ~ \frac{\flow^T \rMatrix \circVec}{\circVec^T \rMatrix \circVec}
\enspace \enspace \text{ and } \enspace \enspace
    \renergy{\flow + \alpha^* \circVec} - \renergy{\flow}
    = - ~ \frac{\big(\flow^T \rMatrix \circVec\big)^2}
        {\circVec^T \rMatrix \circVec}
\enspace.
\]
\end{lemma}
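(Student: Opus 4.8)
The plan is to treat $\renergy{\flow + \alpha \circVec}$ as an explicit quadratic function of the single scalar variable $\alpha$ and minimize it by elementary calculus. Expanding using the bilinearity of the form $\langle x, y\rangle_{\rMatrix} = x^T \rMatrix y$ and the symmetry of $\rMatrix$, I would write
\[
\renergy{\flow + \alpha \circVec} = (\flow + \alpha\circVec)^T \rMatrix (\flow + \alpha\circVec) = \flow^T\rMatrix\flow + 2\alpha\, \flow^T\rMatrix\circVec + \alpha^2\, \circVec^T\rMatrix\circVec.
\]
This is a quadratic in $\alpha$ with positive leading coefficient $\circVec^T\rMatrix\circVec > 0$ (strictly positive since $\rMatrix$ is a positive diagonal matrix and we may assume $\circVec \neq 0$; the degenerate case $\circVec = 0$ makes the statement vacuous as both sides are the undefined $0/0$, so it is implicitly excluded). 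Hence it has a unique global minimizer, found by setting the derivative $2\flow^T\rMatrix\circVec + 2\alpha\,\circVec^T\rMatrix\circVec$ to zero, giving $\alpha^* = -\,\flow^T\rMatrix\circVec / (\circVec^T\rMatrix\circVec)$, which is exactly the claimed first identity.

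For the second identity I would simply substitute $\alpha = \alpha^*$ back into the expansion above:
\[
\renergy{\flow + \alpha^*\circVec} - \renergy{\flow} = 2\alpha^*\,\flow^T\rMatrix\circVec + (\alpha^*)^2\,\circVec^T\rMatrix\circVec = -2\,\frac{(\flow^T\rMatrix\circVec)^2}{\circVec^T\rMatrix\circVec} + \frac{(\flow^T\rMatrix\circVec)^2}{\circVec^T\rMatrix\circVec} = -\,\frac{(\flow^T\rMatrix\circVec)^2}{\circVec^T\rMatrix\circVec},
\]
which is the desired expression for the energy decrease. That completes the proof.

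Honestly, there is no real obstacle here: the entire lemma is the one-dimensional projection/line-search fact for a positive-definite quadratic form, and the only things to be careful about are (i) invoking symmetry of $\rMatrix$ to combine the cross terms into $2\alpha\,\flow^T\rMatrix\circVec$, and (ii) noting $\circVec^T\rMatrix\circVec \ge 0$ with equality only when $\circVec = 0$, so that division is legitimate and the critical point is genuinely the minimum rather than a maximum or saddle. The one stylistic choice worth making explicit is whether to phrase the minimization via calculus (derivative test) or via completing the square, $\renergy{\flow+\alpha\circVec} = \circVec^T\rMatrix\circVec\,(\alpha - \alpha^*)^2 + \renergy{\flow} - (\flow^T\rMatrix\circVec)^2/(\circVec^T\rMatrix\circVec)$, which exhibits both identities simultaneously; I would likely present the completing-the-square form since it makes the positivity argument and both claimed formulas transparent in a single line.
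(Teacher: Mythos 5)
Your proof is correct and takes the same route as the paper: expand $\renergy{\flow+\alpha\circVec}$ as a quadratic in $\alpha$, set the derivative to zero, and substitute $\alpha^*$ back in. Your additional remarks about $\circVec^T\rMatrix\circVec>0$ and the completing-the-square formulation are fine but not substantively different.
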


\begin{proof}
By definition of energy
$\renergy{\flow + \alpha \circVec}
    = (\flow + \alpha \circVec)^T \rMatrix (\flow + \alpha \circVec)
    =
        \flow^T \rMatrix \flow
        + 2 \alpha \flow^T \rMatrix \circVec
        + \alpha^2 \circVec^T \rMatrix \circVec \enspace.
$
Setting the derivative with respect to $\alpha$ to 0 and substituting in $\alpha = \alpha^*$ yields the results.
\end{proof}

In the special case where $\circVec = \treeCycleVec{e}$ is a tree cycle for some off-tree edge $e \in \offtreeEdgeSet$, since $\cycleResistance{e} = \treeCycleVec{e}^T \rMatrix \treeCycleVec{e}$ and  $\cyclePotential{e}{\flow} = \flow^T \rMatrix \treeCycleVec{e}$, this procedure is precisely the iterative step of $\simpleSolver$, i.e. a cycle update. The following lemma follows immediately and states that the energy decrease of a cycle update is exactly the energy of a resistor with resistance $\cycleResistance{e}$ and potential drop $\cyclePotential{e}{\flow}$.

\begin{lemma}[Cycle Update]
\label{lemma:cycle-update}
For feasible $\flow \in \redgevec$ and $e \in \offtreeEdgeSet$ we have
\[
\renergyFull{\flow - \frac{\cyclePotential{e}{\flow}}{\cycleResistance{e}} \treeCycleVec{e}}
- \renergy{\flow}
= - \frac{\cyclePotential{e}{\flow}^2}{\cycleResistance{e}} \enspace.
\]
\end{lemma}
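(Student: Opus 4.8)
The plan is to obtain this as an immediate specialization of the Energy Improvement lemma proved just above, with $\circVec$ taken to be the tree cycle $\treeCycleVec{e}$. First I would invoke the Cycle Quantities definition, which records the two identifications $\cycleResistance{e} = \treeCycleVec{e}^T \rMatrix \treeCycleVec{e}$ and $\cyclePotential{e}{\flow} = \flow^T \rMatrix \treeCycleVec{e}$. Substituting $\circVec = \treeCycleVec{e}$ into the formula $\alpha^* = -\,\flow^T \rMatrix \circVec/(\circVec^T \rMatrix \circVec)$ from that lemma therefore gives $\alpha^* = -\,\cyclePotential{e}{\flow}/\cycleResistance{e}$, which is exactly the coefficient appearing inside the energy expression on the left-hand side of the claimed identity; that is, $\flow - \frac{\cyclePotential{e}{\flow}}{\cycleResistance{e}}\treeCycleVec{e}$ is precisely $\flow + \alpha^* \circVec$ for this choice of $\circVec$.

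Second, I would apply the energy-change half of the Energy Improvement lemma, $\renergy{\flow + \alpha^* \circVec} - \renergy{\flow} = -(\flow^T \rMatrix \circVec)^2/(\circVec^T \rMatrix \circVec)$, and again use the same two substitutions to rewrite the numerator as $\cyclePotential{e}{\flow}^2$ and the denominator as $\cycleResistance{e}$. This yields $-\,\cyclePotential{e}{\flow}^2/\cycleResistance{e}$, which is the right-hand side, completing the proof. Two minor points deserve a sentence: the division is legitimate because $\rMatrix$ is positive definite (all $r_e>0$) and $\treeCycleVec{e}\neq 0$, so $\cycleResistance{e} = \sum_{e' \in \treeCycle{e}} r_{e'} > 0$; and the statement is phrased for feasible $\flow$ because $\treeCycleVec{e}$ is a circulation, $\incMatrix^T\treeCycleVec{e} = 0$, so $\incMatrix^T(\flow + \alpha^*\treeCycleVec{e}) = \incMatrix^T\flow = \boundary$ and feasibility is preserved.

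There is essentially no obstacle here: the result is a bookkeeping corollary, and all of the substantive content already lives in the Energy Improvement lemma and in having chosen the Cycle Quantities notation so that these substitutions line up cleanly. If I had to name the one spot to be careful, it is simply matching the sign and the orientation convention — making sure that the orientation used to define $\cyclePotential{e}{\flow}$ as $\sum_{e' \in \treeCycle{e}} r_{e'}\flow(e')$ is the same one used in forming $\treeCycleVec{e}$, so that $\cyclePotential{e}{\flow} = \flow^T\rMatrix\treeCycleVec{e}$ holds with the correct sign; this is exactly what the Cycle Quantities definition guarantees, so nothing further is needed.
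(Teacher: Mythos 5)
Your proof is correct and is essentially the paper's own argument made explicit: the paper states the lemma ``follows immediately'' by specializing the Energy Improvement lemma to $\circVec = \treeCycleVec{e}$ and invoking the identities $\cycleResistance{e} = \treeCycleVec{e}^T \rMatrix \treeCycleVec{e}$ and $\cyclePotential{e}{\flow} = \flow^T \rMatrix \treeCycleVec{e}$ from the Cycle Quantities definition, exactly as you do. Your two side remarks (positivity of $\cycleResistance{e}$, and feasibility being preserved because $\treeCycleVec{e}$ is a circulation) are correct but beyond what is needed for the stated identity, which holds for any $\flow$; the feasibility hypothesis is there so that the lemma plugs directly into the convergence argument.
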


\subsection{Distance to Optimality}
\label{sec:convergence:step_relative}

To bound how far $\flow_i$ in $\simpleSolver$ is from optimality we use that the duality gap between $\flow_i$ and its tree induced voltages $\volt_i$ is an upper bound on this distance. Here we derive a simple expression for this quantity in terms of cycle potentials.

\begin{lemma}[Tree Gap]

\label{lemma:tree_gap_formula}
For feasible $\flow \in \redgevec$ and tree induced voltages $\volt \in \rvertvec$ we have
\[
    \gap(\flow, \volt) =
        \sum_{e \in E \setminus \tree}
            \frac{\cyclePotential{e}{\flow}^2}{r_e} \enspace.
\]
\end{lemma}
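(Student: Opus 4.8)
The plan is to compute the duality gap directly from its definition, $\gap(\flow, \volt) = \renergy{\flow} - \renergyDual{\volt}$, by expanding both terms and recognizing the difference as a sum over off-tree edges. The key object to exploit is the relationship between $\flow$ and its tree induced voltages $\volt$: by \definitionref{def:tree-induced-voltages}, $\volt$ is built so that $\potentialdrop{\volt}{e} = \flow(e) r_e$ for every \emph{tree} edge $e \in \tree$ (telescoping along $\treePath{(a,s)}$), while for an off-tree edge $e = (a,b)$ we instead have $\potentialdrop{\volt}{a,b} = \sum_{e' \in \treePath{(a,s)}} \flow(e') r_{e'} - \sum_{e' \in \treePath{(b,s)}} \flow(e') r_{e'} = \sum_{e' \in \treePath{(a,b)}} \flow(e') r_{e'}$, which in general differs from $\flow(e) r_e$. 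The discrepancy on off-tree edge $e$ is exactly $\potentialdrop{\volt}{e} - \flow(e) r_e = \cyclePotential{e}{\flow}$ up to sign, since $\cyclePotential{e}{\flow} = \flow^T \rMatrix \treeCycleVec{e}$ sums $\flow(e')r_{e'}$ around the cycle $\treeCycle{e} = \{(a,b)\} \cup \treePath{(b,a)}$.

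Concretely, I would first write $\renergyDual{\volt} = 2\volt^T\boundary - \volt^T\lap\volt$ and use feasibility $\incMatrix^T\flow = \boundary$ to replace $\boundary$, giving $2\volt^T\incMatrix^T\flow = 2(\incMatrix\volt)^T\flow = 2\sum_{e} \potentialdrop{\volt}{e}\,\flow(e)$ by the third item of \claimref{claim:basicfacts}. Similarly $\volt^T\lap\volt = \sum_e \potentialdrop{\volt}{e}^2 / r_e$ by the fourth item. Then
\[
    \renergyDual{\volt} = \sum_{e \in E} \Big( 2\potentialdrop{\volt}{e}\,\flow(e) - \frac{\potentialdrop{\volt}{e}^2}{r_e} \Big),
    \qquad
    \renergy{\flow} = \sum_{e \in E} r_e \flow(e)^2,
\]
so that
\[
    \gap(\flow,\volt) = \sum_{e \in E} \Big( r_e \flow(e)^2 - 2\potentialdrop{\volt}{e}\,\flow(e) + \frac{\potentialdrop{\volt}{e}^2}{r_e} \Big)
    = \sum_{e \in E} \frac{1}{r_e}\big( r_e\flow(e) - \potentialdrop{\volt}{e} \big)^2.
\]
This is the crux: the gap is a sum of squared per-edge "potential mismatches" $r_e\flow(e) - \potentialdrop{\volt}{e}$, weighted by $1/r_e$.

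It then remains to evaluate $r_e\flow(e) - \potentialdrop{\volt}{e}$ on each edge. For tree edges this is $0$ by the telescoping observation above, so those terms vanish. For off-tree edges $e$, this quantity equals $\cyclePotential{e}{\flow}$ (with the sign convention fixed so that the square is unaffected), which I would verify by the same telescoping computation along $\treePath{(a,b)}$ combined with the definition of $\cyclePotential{e}{\flow}$ as $\flow^T\rMatrix\treeCycleVec{e}$. Substituting leaves $\gap(\flow,\volt) = \sum_{e \in E\setminus\tree} \cyclePotential{e}{\flow}^2 / r_e$, as claimed. The main obstacle is bookkeeping with orientations — the tree paths $\treePath{(a,b)}$ are oriented along the path rather than by $G$'s fixed orientation (see the footnote in the Tree Path definition), and the cycle $\treeCycle{(a,b)}$ includes $\treePath{(b,a)}$ — so I would take care that the telescoping signs and the definition of $\cyclePotential{e}{\flow}$ line up; fortunately every mismatch term enters squared, so a consistent sign choice suffices and no delicate cancellation is needed.
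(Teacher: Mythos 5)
Your proposal is correct and follows essentially the same route as the paper: expand $\gap(\flow,\volt)$ via feasibility and $\lap = \incMatrix^T\rMatrix^{-1}\incMatrix$, complete the square to obtain $\sum_e \frac{1}{r_e}\bigl(r_e\flow(e) - \potentialdrop{\volt}{e}\bigr)^2$, then use the telescoping property of tree-induced voltages to kill the tree-edge terms and identify the off-tree terms with $\cyclePotential{e}{\flow}$. The only cosmetic difference is that you carry out the completion of the square coordinate-wise, while the paper does it once in matrix form as $(\rMatrix\flow - \incMatrix\volt)^T\rMatrix^{-1}(\rMatrix\flow - \incMatrix\volt)$ before reading off coordinates.
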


\begin{proof}
By definition of primal and dual energy we have
$
    \gap(\flow, \volt)
        = \flow^T \rMatrix \flow
            - (2\volt^T \boundary - \volt^T \lap \volt)
$.
Using that $\incMatrix^T \flow = \boundary$ and
$\lap = \incMatrix^T \rMatrix^{-1} \incMatrix$ we get
\[
\gap(\flow, \volt)
    = \flow^T \rMatrix \flow
        - 2\volt^T \incMatrix^T \flow
        - \volt^T \incMatrix^T \rMatrix^{-1} \incMatrix \volt
    = \left(\rMatrix \flow - \incMatrix \volt\right)^T
        \rMatrix^{-1}
      \left(\rMatrix \flow - \incMatrix \volt\right)\enspace.
\]
Therefore $\gap(\flow, \volt) = \sum_{e \in E} \frac{1}{r_e} \left(\flow(e) r_e  - \potentialdrop{\volt}{e}\right)^2$. However, by the
uniqueness of tree paths, the antisymmetry of $\flow \in \redgevec$, and the definition of tree voltages we have
\[
\forall a, b \in V \enspace : \enspace
\potentialdrop{\volt}{a,b}
= \volt(a) - \volt(b)
= \sum_{e \in \treePath{as}} \flow(e) r_e
        + \sum_{e \in \treePath{sb}} \flow(e) r_e
= \sum_{e \in \treePath{ab}} \flow(e) r_e
\enspace.
\]
Therefore, $e \in \tree \Rightarrow \flow(e) r_e - \potentialdrop{\volt}{e} = 0$ and $e \in \offtreeEdgeSet \Rightarrow \flow(e) r_e - \potentialdrop{\volt}{e} = \cyclePotential{e}{\flow}$.
\end{proof}

\subsection{Convergence Proof}
\label{sec:convergence:total}

Here we connect the energy decrease of a cycle update given by \lemmaref{lemma:cycle-update} and the duality gap formula given by \lemmaref{lemma:tree_gap_formula} to bound the the convergence of $\simpleSolver$. Throughout this section we use $\volt_i$ to denote the tree induced voltages for flow $\flow_i$.

First, we show that in expectation each iteration $i$ of $\simpleSolver$ decreases $\renergy{\flow_{i - 1}}$ by a $\frac{1}{\treeCondition}$ fraction of the duality gap.

\begin{lemma}[Expected Progress] For iteration $i$ of $\simpleSolver$ we have
\label{lemma:expected_progress}
\[
    \E\left[\renergy{\flow_{i}} - \renergy{\flow_{i-1}} \middle| \gap(\flow_{i-1},\volt_{i-1})\right]
        = -\frac{\gap(\flow_{i-1},\volt_{i-1})}{\treeCondition} \enspace.
\]
\end{lemma}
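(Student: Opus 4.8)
The plan is to compute the conditional expectation directly from the sampling rule and \lemmaref{lemma:cycle-update}. First I would condition on $\flow_{i-1}$ (equivalently, on $\gap(\flow_{i-1},\volt_{i-1})$, since this gap is a deterministic function of $\flow_{i-1}$ via \lemmaref{lemma:tree_gap_formula}). In iteration $i$ the algorithm picks $e_i = e$ with probability $\edgeSampleProb{e} = \frac{1}{\treeCondition}\cdot\frac{\cycleResistance{e}}{r_e}$ and then sets $\flow_i = \flow_{i-1} - \frac{\cyclePotential{e}{\flow_{i-1}}}{\cycleResistance{e}}\treeCycleVec{e}$. By \lemmaref{lemma:cycle-update}, conditioned on this choice of $e$ the energy change is exactly $-\frac{\cyclePotential{e}{\flow_{i-1}}^2}{\cycleResistance{e}}$.

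Next I would take the expectation over the random choice of $e$:
\[
\E\left[\renergy{\flow_i} - \renergy{\flow_{i-1}}\,\middle|\,\flow_{i-1}\right]
= \sum_{e \in \offtreeEdgeSet} \edgeSampleProb{e}\cdot\left(-\frac{\cyclePotential{e}{\flow_{i-1}}^2}{\cycleResistance{e}}\right)
= -\frac{1}{\treeCondition}\sum_{e \in \offtreeEdgeSet} \frac{\cycleResistance{e}}{r_e}\cdot\frac{\cyclePotential{e}{\flow_{i-1}}^2}{\cycleResistance{e}}
= -\frac{1}{\treeCondition}\sum_{e \in \offtreeEdgeSet} \frac{\cyclePotential{e}{\flow_{i-1}}^2}{r_e}\enspace.
\]
The key cancellation is that the sampling probability carries a factor $\cycleResistance{e}$ in the numerator that exactly cancels the $\cycleResistance{e}$ in the denominator of the per-step energy drop, leaving $\frac{\cyclePotential{e}{\flow_{i-1}}^2}{r_e}$ — precisely the summand appearing in the Tree Gap formula. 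Applying \lemmaref{lemma:tree_gap_formula} to $\flow_{i-1}$ and $\volt_{i-1}$ identifies this sum with $\gap(\flow_{i-1},\volt_{i-1})$, giving the claimed equality.

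There is essentially no obstacle here — this is a one-line calculation once the right sampling distribution has been chosen; the real work was front-loaded into \lemmaref{lemma:cycle-update} and \lemmaref{lemma:tree_gap_formula} and, implicitly, into the \emph{design} of $\edgeSampleProb{}$ so that this cancellation occurs. The only point requiring a word of care is the conditioning: I should note that $\volt_{i-1}$ (the tree-induced voltages) and hence $\gap(\flow_{i-1},\volt_{i-1})$ are determined by $\flow_{i-1}$, so conditioning on the gap is the same as conditioning on $\flow_{i-1}$, and the randomness being averaged over is solely the choice of $e_i$. After that, \lemmaref{lemma:expected_progress} feeds directly into a tower-of-expectations argument together with $\gap(\flow,\volt)\ge\renergy{\flow}-\renergy{\optFlow}$ to yield the geometric convergence rate of \theoremref{thm:convergence}, though that step is carried out in \sectionref{sec:convergence:total} rather than here.
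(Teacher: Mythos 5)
Your proof is correct and follows essentially the same route as the paper's: condition on $\flow_{i-1}$, apply \lemmaref{lemma:cycle-update} for the per-edge energy drop, average against $\edgeSampleProb{}$ so that the $\cycleResistance{e}$ factors cancel, and invoke \lemmaref{lemma:tree_gap_formula} to identify the resulting sum with the duality gap. The paper's version is slightly terser and conditions directly on the gap rather than on $\flow_{i-1}$, but your remark that the two conditionings are equivalent (since the gap is a deterministic function of $\flow_{i-1}$) is a fair clarification, not a departure.
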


\begin{proof}
In each iteration $i$, $\simpleSolver$ picks a random $e_i \in \offtreeEdgeSet$ with probability $p_{e_i}$ and adds a multiple of $\treeCycleVec{e_i}$ which by
\lemmaref{lemma:cycle-update} decreases the energy by $\frac{\cyclePotential{e}{\flow_{i-1}}^2}{\cycleResistance{e}}$. Therefore
\[
   \E\left[\renergy{\flow_{i}} - \renergy{\flow_{i-1}} \middle| \gap(\flow_{i-1},\volt_{i-1})\right]
= \E\left[\sum_{e \in \offtreeEdgeSet} p_e \left( \frac{-
        \cyclePotential{e}{\flow_{i-1}}^2}{\cycleResistance{e}} \right) \middle| \gap(\flow_{i-1},\volt_{i-1}) \right]
\]
Using that $p_e = \frac{1}{\treeCondition} \cdot \frac{\cycleResistance{e}}{r_e}$ and applying \lemmaref{lemma:tree_gap_formula} yields the result.
\end{proof}

Next, we show that this implies that each iteration decreases the expected energy difference between the current flow and the optimal flow by a multiplicative $\left(1 - \frac{1}{\treeCondition}\right)$.

\begin{lemma}[Convergence Rate]
\label{lemma:convergence_rate}
For all $i \geq 0$ let random variable $D_i\defeq \renergy{\flow_i} - \renergy{\optFlow}$. Then for all iterations $i \geq 1$ we have
$
  \E[D_{i}] \leq \Big(1 - \frac{1}{\treeCondition}\Big) \E[D_{i-1}] \enspace.
$
\end{lemma}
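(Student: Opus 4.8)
The plan is to combine the per-iteration progress bound from \lemmaref{lemma:expected_progress} with the duality fact that the gap upper-bounds the energy distance to optimality, and then take expectations carefully using the tower property. Concretely, I would first condition on the flow $\flow_{i-1}$ (equivalently on the gap $\gap(\flow_{i-1},\volt_{i-1})$, which is a deterministic function of $\flow_{i-1}$ via \lemmaref{lemma:tree_gap_formula}) and write
\[
\E\big[\renergy{\flow_i} \,\big|\, \flow_{i-1}\big] - \renergy{\optFlow}
= \big(\renergy{\flow_{i-1}} - \renergy{\optFlow}\big) - \frac{\gap(\flow_{i-1},\volt_{i-1})}{\treeCondition},
\]
using \lemmaref{lemma:expected_progress} and the fact that $\renergy{\flow_{i-1}}$ and $\renergy{\optFlow}$ are constants given $\flow_{i-1}$.

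Next I would invoke weak duality from \sectionref{sec:pre-duality}: since $\gap(\flow,\volt)$ is an upper bound on $\renergy{\flow} - \renergy{\optFlow}$ for any feasible $\flow$ and any $\volt$, and in particular for $\flow_{i-1}$ with its tree induced voltages $\volt_{i-1}$, we have $\gap(\flow_{i-1},\volt_{i-1}) \geq \renergy{\flow_{i-1}} - \renergy{\optFlow} = D_{i-1}$. Substituting this into the displayed identity (and noting the coefficient $-\tfrac{1}{\treeCondition}$ is negative, so the inequality direction is preserved) gives
\[
\E\big[D_i \,\big|\, \flow_{i-1}\big] \leq D_{i-1} - \frac{D_{i-1}}{\treeCondition} = \Big(1 - \frac{1}{\treeCondition}\Big) D_{i-1}.
\]
Finally, taking the expectation over $\flow_{i-1}$ and using the tower property $\E[D_i] = \E\big[\E[D_i \mid \flow_{i-1}]\big]$ together with linearity of expectation yields $\E[D_i] \leq \big(1 - \tfrac{1}{\treeCondition}\big)\E[D_{i-1}]$, as desired.

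I do not expect a serious obstacle here; this lemma is essentially bookkeeping on top of the two substantive lemmas already established. The one point requiring a little care is the conditioning: \lemmaref{lemma:expected_progress} is stated as a conditional expectation given the gap, so I should make sure that conditioning on $\flow_{i-1}$ is at least as fine as conditioning on the gap (it is, since the gap is a function of $\flow_{i-1}$), which legitimizes replacing the conditional identity by one conditioned on $\flow_{i-1}$ and then applying the deterministic duality bound pointwise. A secondary minor point is verifying $D_{i-1} \geq 0$ so that the contraction is meaningful, but this is immediate since $\optFlow$ is the energy minimizer over feasible flows and every $\flow_i$ is feasible. After this lemma, \theoremref{thm:convergence} follows by a trivial induction on $i$.
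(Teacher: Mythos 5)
Your proposal is correct and follows essentially the same route as the paper: combine the per-iteration expected progress bound (\lemmaref{lemma:expected_progress}) with the weak-duality fact that $\gap(\flow_{i-1},\volt_{i-1}) \geq D_{i-1}$, then pass to expectations. If anything, you are slightly more careful than the paper's own write-up: the paper conditions on the event $D_{i-1}=c$ and then invokes \lemmaref{lemma:expected_progress}, which is stated conditional on the gap rather than on $D_{i-1}$; your choice to condition on the full state $\flow_{i-1}$ (which determines both $D_{i-1}$ and the gap, and is therefore a refinement of both $\sigma$-algebras) makes the application of that lemma and the subsequent pointwise use of the duality bound unambiguous before applying the tower property.
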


\begin{proof}
Since in each iteration of $\simpleSolver$ one of a finite number of edges is chosen, clearly $D_i$ is a discrete random variable and by law of total expectation we
have
\[
\E[D_{i}]
    = \sum_c \E[D_{i} | D_{i-1} = c] \Pr[D_{i-1} = c]
\]
However, we know $D_{i-1} \leq \gap(\flow_{i-1}, \volt_{i-1})$ so by \lemmaref{lemma:expected_progress},
$
    \E\left[D_{i} | D_{i-1} = c\right] \leq c - \frac{c}{\treeCondition}
$.
Therefore:
\[
\E[D_{i}] \leq
    \sum_c
    \left[
        \Big(1 - \frac{1}{\tau}\Big) c \cdot \Pr[D_{i-1} = c]
    \right]
= \left(1 - \frac{1}{\tau}\right) \E[D_{i - 1}] \enspace.
\]
\end{proof}

Finally, by induction on \lemmaref{lemma:convergence_rate} and the definition of $D_i$ we prove \theoremref{thm:convergence}.

\section{Cycle Update Data Structure}
\label{sec:datastructure}

In this section we show how to implement each iteration (i.e. cycle update) of $\simpleSolver$ in $O(\log n)$ time. In \sectionref{sec:implementation:ds_problem} we formulate this as a data structure problem, in \sectionref{sec:implementation:ds_rec_slution} we present a recursive solution, and in \sectionref{sec:implementation:ds_vec_slution} we provide a linear algebraic interpretation that may be useful in both theory and practice.

\subsection{The Data Structure Problem}
\label{sec:implementation:ds_problem}

In each iteration $i$ of $\simpleSolver$ we pick a random $(a, b) \in \offtreeEdgeSet$ and for feasible $\flow \in \redgevec$ compute
\[
    \alpha^*
        = \frac{\cyclePotential{(a,b)}{\flow}}{\cycleResistance{(a, b)}}
        = \frac{\flow(a, b) r_{a,b} -
            (\volt(a) - \volt(b))}{\cycleResistance{(a,b)}}
\enspace \text{ where } \enspace
    \volt(a) = \sum_{e \in P_{(a, s)}} \flow(e) r_e
\]
and $s$ is an arbitrary fixed vertex in $V$ which we refer to as the \emph{root}. Then $\alpha^*$
is added to the flow on every edge in the tree cycle $\treeCycle{(a, b)}$. By the antisymmetry of flows and the uniqueness of tree paths, this update is equivalent to (1) adding $\alpha^*$ to the flow on edge $(a, b)$, (2) adding $-\alpha^*$ to the flow on every edge in $\treePath{(s, b)}$ and then (3) adding $\alpha^*$ to the flow on every edge in $\treePath{(s, a)}$.

Therefore, to implement cycle updates it suffices to store the flow on off-tree edges in an array and implement a data structure that supports the following operations.
\begin{itemize}\itemsep -1pt
\item $\dsOpInit(\tree, \dsVarSource \in V)$: initialize a data structure $D$ given tree $\tree$ and root $s \in V$.
\item $\dsOpQuery_D(a \in V)$: return $\volt(a) = \sum_{e \in P_{(s, a)}} \flow(e) r_e$.
\item $\dsOpUpdate_D(a \in V, \alpha \in \R)$: set $\flow(e) := \flow(e) + \alpha$ for all $e \in \treePath{(s,a)}$.
\end{itemize}

\subsection{Recursive Solution}
\label{sec:implementation:ds_rec_slution}

While one could solve this data structure problem with a slight modification of link-cut trees~\cite{SleatorTarjan83}, that dynamic data structure is overqualified for our purpose. In contrast to the problems for which link-cut trees were originally designed, in our setting the tree is static, so that much of the sophistication of link-cut trees may be unnecessary. Here we develop a very simple separator decomposition tree based structure that provides worst-case $O(\log n)$ operations that we believe sheds more light on the electric flow problem and may be useful in practice.

Our solution is based on the well known fact that every tree has a good vertex separator, tracing back to Jordan in 1869~\cite{Jordan1869}.

\begin{lemma}[Tree Vertex Separator]
\label{lemma:ds:tree_separator}
For a spanning tree $\tree$ rooted at $s$ with $n\geq 2$ vertices, in $O(n)$ time we can compute
\[
    (\dsVarSeperator, \tree_0, \ldots, \tree_k ) = \dsOpTreeDecompose(\tree) \enspace,
\]
such that the removal of $\dsVarSeperator\in V$ (which might equal to $s$) disconnects $\tree$ into subtrees $\tree_0, \ldots, \tree_k$, where $\tree_0$ is rooted at $s$ and contains $\dsVarSeperator$ as a leaf, while other $\tree_i$'s are rooted at $\dsVarSeperator$. Furthermore, each $\tree_i$ has at most $n/2+1$ vertices.
\end{lemma}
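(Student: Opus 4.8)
The plan is to prove \lemmaref{lemma:ds:tree_separator} by exhibiting the classical centroid-finding procedure for trees and verifying each claimed property. The key object is the standard notion of a \emph{centroid} (a.k.a. Jordan center): a vertex $d$ such that every connected component of $\tree \setminus \{d\}$ has at most $n/2$ vertices. I would first recall why such a vertex exists and can be found in $O(n)$ time: compute, via a single depth-first traversal from $s$, the subtree sizes $\mathrm{sz}(v)$ for every $v$; then starting from $s$ repeatedly walk toward the (unique) child $u$ whose subtree satisfies $\mathrm{sz}(u) > n/2$, stopping at the first vertex $d$ that has no such child. Standard potential/monotonicity arguments show this walk is well defined, terminates, and at termination every component hanging off $d$ has size at most $n/2$; the total work is one traversal plus a root-to-$d$ walk, i.e. $O(n)$.

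Next I would translate the abstract centroid into the exact data structure stated in the lemma. Given the centroid $d$, removing it from $\tree$ leaves several subtrees. The component containing $s$ (which is nonempty and possibly equal to $\{s\}$ if $d=s$ is an endpoint; note if $d = s$ we instead let $\tree_0 = \{s\}$ with $s=d$ as its only vertex) I would re-attach $d$ to as a leaf and call $\tree_0$, rooted at $s$; this is legitimate because $d$ is adjacent in $\tree$ to exactly one vertex on the $s$-side, so adding $d$ back yields a tree in which $d$ is a leaf. The remaining components are exactly the subtrees hanging below $d$; I root each of them at $d$ and call them $\tree_1,\dots,\tree_k$. By construction the $\tree_i$ are vertex-disjoint except that $d \in \tree_0 \cap \tree_i$ for each $i\geq 1$, and every vertex of $\tree$ lies in some $\tree_i$, so this is the claimed decomposition.

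Finally I would check the size bound $|\tree_i| \le n/2 + 1$. For $i \ge 1$, $\tree_i$ as a subtree hanging off the centroid has at most $n/2$ vertices by the centroid property, and it gains nothing from being re-rooted at $d$ since $d$ is already counted as part of the original tree structure — wait, more carefully: $\tree_i$ for $i\ge1$ consists of a component of $\tree\setminus\{d\}$ (size $\le n/2$) together with $d$ itself as root, giving at most $n/2+1$. For $\tree_0$, the $s$-side component of $\tree\setminus\{d\}$ has size at most $n/2$ by the centroid property (the $s$-side is itself one of the components, or a union handled by the choice of walk), and adding $d$ back as a leaf gives at most $n/2+1$. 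The main thing to get right — and the only place the argument is more than bookkeeping — is the termination and correctness of the centroid walk: I need the invariant that when we move from a vertex $v$ to its heavy child $u$, the component of $\tree\setminus\{u\}$ containing $v$ has size $n - \mathrm{sz}(u) < n/2$, so progress is monotone and the stopping vertex genuinely bounds all its components; I also need the edge case $d=s$ handled by the parenthetical "(which might equal to $s$)". Everything else is routine.
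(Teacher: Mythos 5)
Your proposal is correct and takes essentially the same route as the paper's proof: compute all subtree sizes in one DFS, walk from $s$ toward the (unique) heavy child until reaching the centroid $\dsVarSeperator$, then form $\tree_0$ by re-attaching $\dsVarSeperator$ as a leaf to the $s$-side component and $\tree_1,\dots,\tree_k$ by rooting the remaining components at $\dsVarSeperator$, with the size bound $n/2+1$ following from the centroid property. The only cosmetic difference is the $s=\dsVarSeperator$ edge case, where the paper simply roots every $\tree_i$ (including $\tree_0$) at $\dsVarSeperator$ rather than setting $\tree_0=\{s\}$; both handlings are consistent with what the recursion needs.
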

\begin{proof}
\begin{figure*}[hbpt!]
\centering
\hspace{-5mm}
\subfloat[][]{\label{fig:decomp1}\includegraphics[width=0.2\textwidth]{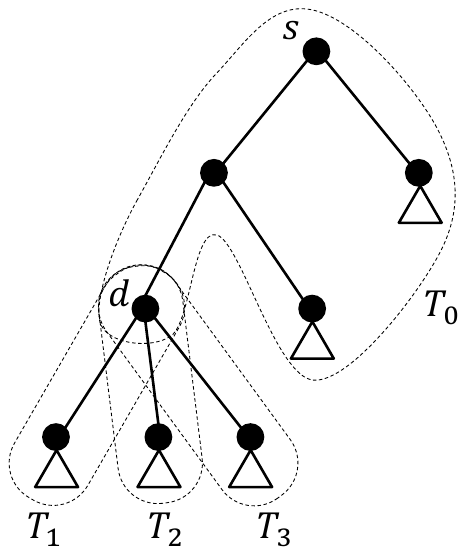}}
\hspace{3mm}
\subfloat[][]{\label{fig:decomp2}\includegraphics[width=0.12\textwidth]{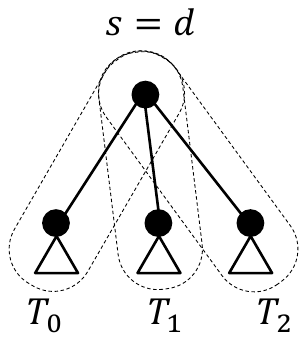}\vspace{2cm}}
\hspace{5mm}
\subfloat[][]{\label{fig:datastructure1}\includegraphics[width=0.29\textwidth]{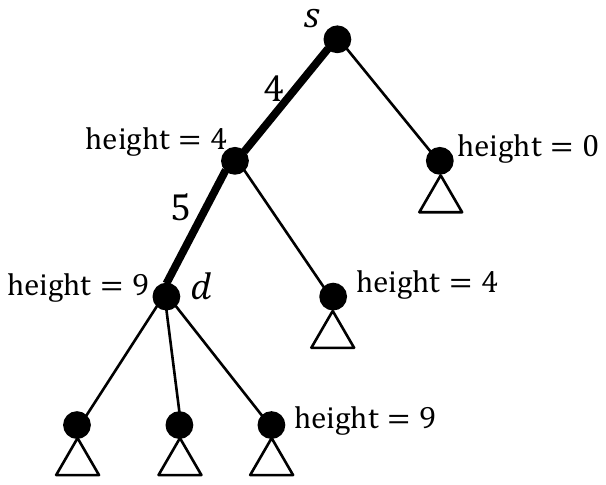}}
\hspace{-5mm}
\subfloat[][]{\label{fig:datastructure2}\includegraphics[width=0.27\textwidth]{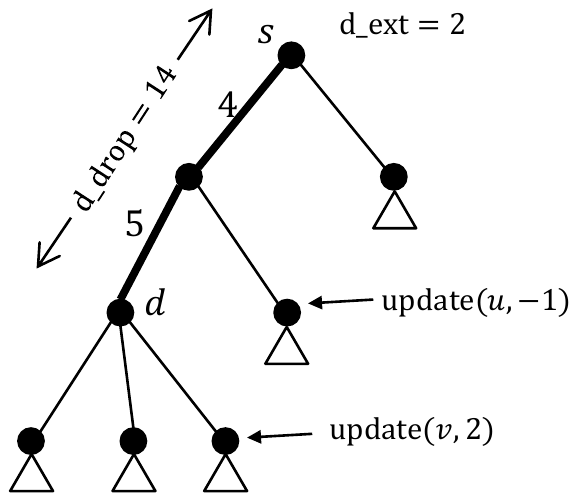}}
\hspace{-5mm}
\caption{(a) and (b) are examples of tree decompositions when $\dsVarSource\neq \dsVarSeperator$ and $\dsVarSource= \dsVarSeperator$, and\\ \hspace{-1.6cm} (c) and (d) are examples on computing $\dsVarHeight$, $\dsVarExt$ and $\dsVarTotal$.}
\label{fig:do-not-use-this}
\end{figure*}
We start at root $s$, and at each step follow an edge to a child whose corresponding subtree is the largest among all children. We continue this procedure until at some vertex $\dsVarSeperator$ (which might be $s$ itself) the sizes of all its subtrees have no more than $\frac{n}{2}$ vertices. This vertex $\dsVarSeperator$ is the desired vertex separator. Now if $s\neq \dsVarSeperator$ we let $\tree_0$ be the subtree above $\dsVarSeperator$ rooted at $s$ with $\dsVarSeperator$ being the leaf, and let each $\tree_i$ be a sub-tree below $\dsVarSeperator$ and rooted at $\dsVarSeperator$, see \figureref{fig:decomp1}; or when $s=\dsVarSeperator$ we let $T_0,\dots,T_k$ each denote a subtree below $\dsVarSeperator$ and rooted at $\dsVarSeperator$, see \figureref{fig:decomp2}. \footnote{For technical purposes, when $V$ has only two vertices with $\dsVarSource$ being the root, we always define $\dsVarSeperator \in V$ such that $\dsVarSeperator \neq \dsVarSource$ and define $\dsOpTreeDecompose(\tree)$ to return no sub-trees, i.e., $k=-1$. By orienting the tree away from $s$ and precomputing the number of descendents of each node we can easily implement this in $O(n)$ time.} By pre-computing subtree sizes $\dsOpTreeDecompose(\tree)$ runs in $O(n)$ time.
\end{proof}

Applying this lemma recursively induces a separator decomposition tree from which we build our data structure. To execute $\dsOpInit(\tree, \dsVarSource)$ we compute a vertex separator $(\dsVarSeperator,\tree_0,\dots,\tree_k)$ and recurse separately on each subtree until we are left with a tree of only two vertices. In addition, we precompute the total weight of the intersection of the root to $\dsVarSeperator$ path $\treePath{(\dsVarSource,\dsVarSeperator)}$ and the root to $a$ path $\treePath{(\dsVarSource,a)}$, denoted by $\dsVarHeight(a)$ for every $a \in V$, by taking a walk over the tree. See \figureref{fig:datastructure1} and \algorithmref{algm:ds:init_recursive} for details.

With this decomposition we can support $\dsOpQuery$ and $\dsOpUpdate$ by maintaining just $2$ values about the paths $\treePath{(s,\dsVarSeperator)}$. For each subtree we maintain the following variables:
\begin{itemize}\itemsep -1pt
\item $\dsVarTotal$, the total potential drop induced on the the path $\treePath{(s,\dsVarSeperator)}$, and
\item $\dsVarExt$, the total amount of flow that has been updated using the entire $\treePath{(s,\dsVarSeperator)}$ path, i.e. the contribution to $\treePath{(s,\dsVarSeperator)}$ from vertices beyond $\dsVarSeperator$.
\end{itemize}
It is easy to see that these invariants can be maintained recursively and that they suffice to answer queries and updates efficiently (see \algorithmref{algm:ds:query_recursive} and \algorithmref{algm:ds:update_recursive} for the implementations and \figureref{fig:datastructure2} for an illustration). Furthermore, because the sizes of trees at least half at each recursion, $\dsOpInit$ takes $O(n \log n)$ while $\dsOpQuery$ and $\dsOpUpdate$ each takes $O(\log n)$ time.

\newcommand{\algRecursiveInit}{
\begin{algorithm}[H]
\SetAlgoLined
$\dsVarExt := 0$, $\dsVarTotal := 0$ \;
$(\dsVarSeperator, T_0, \ldots, T_k) \dsAssign \dsOpTreeDecompose(\tree)$\;
$\forall a \in V : $
    $\dsVarHeight(a) \dsAssign \sum_{e \in \treePath{(s, a)}
        \cap
        \treePath{(s, \dsVarSeperator)}} r_e$ \;
\If{$|V| > 2$}
{
$\forall i \in \{0,1,\dots,k\} : D_i \dsAssign
    \dsOpInit(\tree_i, \dsVarSeperator) $\;}
\caption{\label{algm:ds:init_recursive}Recursive $\dsOpInit(\tree, s \in V)$}
\end{algorithm}
}

\newcommand{\algRecursiveQuery}{
\begin{algorithm}[H]
\SetAlgoLined
\lIf{$a = \dsVarSeperator$}
   {\Return \dsVarTotal} \;
\lElseIf{$|V|=2$}
   { \Return 0\; }
\uElseIf{$a \in T_0$}
   {\Return $\dsVarExt\cdot\dsVarHeight(a) + \dsOpQuery_{D_0}(a)$ \label{prog:1}\; }
\Else
   {
   let $\tree_i$ be unique tree containing $a$\;
   \Return $\dsVarTotal + \dsOpQuery_{D_i}(a)$ \;
   }
\caption{\label{algm:ds:query_recursive}Recursive $\dsOpQuery(a \in V)$}
\end{algorithm}
}

\newcommand{\algRecursiveUpdate}{
\begin{algorithm}[H]
\SetAlgoLined
$\dsVarTotal := \dsVarTotal + \alpha \cdot \dsVarHeight(a)$ \;
\lIf{$|V|=2$}{\Return \;}
\lIf{$a \notin \tree_0$}
{$\dsVarExt := \dsVarExt + \alpha$\;}
\If{$a \neq \dsVarSeperator$}
{
let $\tree_i$ be unique tree containing $a$\;
$\dsOpUpdate_{D_i}(a, \alpha)$\;
}
\caption{\label{algm:ds:update_recursive}Recursive $\dsOpUpdate(a \in V, \alpha \in \R)$}
\end{algorithm}
}

\begin{figure}[h]
\centering
\begin{minipage}[c]{0.48\textwidth}
\algRecursiveInit{}
\end{minipage}\\
\begin{minipage}[c]{0.48\textwidth}
\algRecursiveQuery{}
\end{minipage}
\begin{minipage}[c]{0.48\textwidth}
\algRecursiveUpdate{}
\end{minipage}
\end{figure}

\subsection{Linear Algebra Solution}
\label{sec:implementation:ds_vec_slution}

Here we unpack the recursion in the previous section to provide a linear algebraic view of our data structure that may be useful in both theory and practice. We show that the $\dsOpInit$ procedure in the previous section computes for all $a \in V$ a query vector $\dsVectorQuery{a}$ and an update vector $\dsVectorUpdate{a}$ each of support size $O(\log |V|)$, such that the entire state of the data structure is an $O(|V|)$ dimensional vector $\vvar{x}$ initialized to $\vvar{0}$ allowing $\dsOpQuery$ and $\dsOpUpdate$ to be as simple as the following,
\begin{eqnarray}
    \dsOpQuery(a) &:
        &\texttt{return } \dsVectorQuery{a} \cdot \vvar{x} \nonumber\\
    \dsOpUpdate(a, \alpha)&: &\vvar{x} := \vvar{x} + \alpha \dsVectorUpdate{a} \enspace. \label{eqn:implementation-vec}
\end{eqnarray}
To see this, first note that $\dsOpInit$ recursively creates a total of $N=O(|V|)$ subtrees $\tree_1, \ldots, \tree_N$. Letting $\dsVarExt_i$, $\dsVarTotal_i$, $\dsVarHeight_i$ denote $\dsVarExt$, $\dsVarTotal$, and $\dsVarHeight$ associated with tree $\tree_i$, the state of the data structure is completely represented by a vector $\vvar{x} \in \dsVectorSpace$ defined as follows
\[
\vvar{x}_{c, i} \defeq
    \begin{cases}
        \dsVarExt_i & \hbox{if $c = e$;} \\
        \dsVarTotal_i & \hbox{if $c = d$.} \\
    \end{cases}
\]
It is easy to see that given vertex $a \in V$, in order for $\dsOpQuery(a)$ or $\dsOpUpdate(a, \alpha)$ to affect tree $T_i$ in the decomposition, it is necessary that $a$ is a vertex in $T_i$ and $a$ is not the root $s_i$ of $T_i$. Accordingly, we let $T(a) \defeq \{i \in [N] ~ | ~ a \in \tree_i \text{ and } a \neq \dsVarSource_i\}$ be the indices of such trees that $a$ may affect.

Next, in order to fully specify the query vector $\dsVectorQuery{a}$ and the update vector $\dsVectorUpdate{a}$, we further refine $T(a)$ by defining $T_0(a) \defeq \{i \in T(a) ~ | ~ a \text{ is in $T_0$ of } \dsOpTreeDecompose(T_i)\}$ and $T_{+}(a) \defeq T(a) \setminus T_0(a)$. Then, one can carefully check that the following definitions of $\dsVectorQuery{a}, \dsVectorUpdate{a} \in \dsVectorSpace$ along with their $\dsOpQuery$ and $\dsOpUpdate$ in \equationref{eqn:implementation-vec}, exactly coincide with our recursive definitions in \algorithmref{algm:ds:query_recursive} and \algorithmref{algm:ds:update_recursive} respectively.
\[
        \dsVectorQuery{a}_{(c, i)}
        \defeq
    \begin{cases}
        \dsVarHeight_i(a) & c = e \text{ and } i \in T_0(a)\\
        1 & c = d \text{ and } i \in T_+(a)\\
        0 & \text{otherwise}
    \end{cases}
    \enspace
    \text{and}
    \enspace
        \dsVectorUpdate{a}_{(c, i)}
        \defeq
    \begin{cases}
        1 & c = e \text{ and } i \in T_+(a)\\
        \dsVarHeight_i(a) & c = d \text{ and } i \in T(a) \\
        0 & \text{otherwise}
    \end{cases}
\]
Furthermore, the properties of $\dsOpInit$ and $\dsOpTreeDecompose$ immediately implies that each vertex $a$ can appear in at most $O(\log n)$ trees where it is not the root, so $|T(a)| = O(\log n)$. As a consequence, $\dsVectorQuery{a}$ and $\dsVectorUpdate{a}$ each has at most $O(\log n)$ non-zero entries. It is also not hard to see that we can pre-compute these vectors in $O(n \log n)$ time using our recursive $\dsOpInit$.

Applying these insights to the definition of a cycle update we see that this data structure allows each cycle update of $\simpleSolver$ to be implemented as a single dot product and a single vector addition where one vector in each operation has $O(\log n)$ non-zero entries. Since these vectors can be precomputed each cycle update can be implemented by iterating over a fixed array of $O(\log n)$ pointers and performing simple arithmetic operations. We hope this may be fast in practice.

\section{Simple Algorithm Running-Time Analysis}
\label{sec:simple_runtime}

Here we give the remaining analysis needed to prove the correctness of $\simpleSolver$ and prove \theoremref{thm:simple_algorithm}.
First we bound the energy of $\flowInitial$.

\begin{lemma}[Initial Energy]
\label{lemma:initial_energy}
\footnote{This lemma follows immediately from the well known
fact that $\tree$ is a $\stretchTotal{T}$-spectral sparsifier of $G$, cf.
\cite[Lemma 9.2]{ST08c}, but a direct proof is included here for completeness.}
$
\renergy{\flow_0} \leq \stretchTotal{\tree} \cdot
\renergy{\optFlow} \enspace.
$
\end{lemma}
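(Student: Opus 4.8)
The plan is to express both energies as sums over edges and compare term by term, using the fact that $\flow_0$ is supported entirely on the tree $\tree$. First I would write $\renergy{\flow_0} = \sum_{e \in \tree} r_e \flow_0(e)^2$, since $\flow_0$ vanishes on off-tree edges. The key structural observation is that $\flow_0$ is the unique flow supported on $\tree$ meeting the demand $\boundary = \incMatrix^T \optFlow$, so I can decompose $\optFlow$ along the tree: for each off-tree edge $e' \in \offtreeEdgeSet$, routing $\optFlow(e')$ units of flow along $e'$ is equivalent, as far as the net demand is concerned, to routing $\optFlow(e')$ units along the tree path $\treePath{e'}$. Concretely, $\flow_0 = \sum_{e' \in \offtreeEdgeSet} \optFlow(e') \, (\treePathVec{e'} \text{ restricted to } \tree)$ plus the tree-supported part of $\optFlow$ itself; more cleanly, $\flow_0(e) = \optFlow(e) + \sum_{e' \in \offtreeEdgeSet : e \in \treePath{e'}} \pm \optFlow(e')$ for each tree edge $e$, where the signs come from orientation. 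Equivalently, writing $\optFlow = \sum_{e' \in \offtreeEdgeSet} \optFlow(e') \treeCycleVec{e'} + (\text{tree part})$ via the cycle-space basis, the tree part is exactly $\flow_0$.

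Given this decomposition, for each tree edge $e$ I have $\flow_0(e) = \optFlow(e) - \sum_{e' : e \in \treePath{e'}} \sigma_{e,e'} \optFlow(e')$ where $\sigma_{e,e'} \in \{\pm 1\}$. Then I would bound $\flow_0(e)^2$ by Cauchy–Schwarz: writing the sum as running over the set $S_e = \{e'\} \cup \{e'' \in \offtreeEdgeSet : e \in \treePath{e''}\}$ of at most $|S_e|$ terms, Cauchy–Schwarz with weights $r_{e''}$ gives
\[
\flow_0(e)^2 \le \Big(\sum_{e'' \in S_e} \tfrac{1}{r_{e''}}\Big) \Big(\sum_{e'' \in S_e} r_{e''} \optFlow(e'')^2\Big).
\]
Multiplying by $r_e$, summing over $e \in \tree$, and swapping the order of summation, the coefficient of $r_{e''} \optFlow(e'')^2$ becomes $\sum_{e \in \treePath{e''}} \frac{r_e}{r_{e''}}$ (together with the diagonal contribution), which is exactly $\stretchEdge{e''}$ up to the $+1$ from $e'' \in S_e$ when $e'' $ is itself counted — giving $1 + \stretchEdge{e''} \le$ something bounded by total stretch. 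Summing over all off-tree $e''$ and all tree edges yields $\renergy{\flow_0} \le \stretchTotal{\tree} \cdot \renergy{\optFlow}$, since $\renergy{\optFlow} = \sum_{e''} r_{e''}\optFlow(e'')^2 \ge \sum_{e'' \in \offtreeEdgeSet} r_{e''}\optFlow(e'')^2$ and the stretch weights are at most $\stretchTotal{\tree}$.

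The main obstacle I anticipate is bookkeeping the orientations and the exact Cauchy–Schwarz weighting so that the accumulated coefficient matches $\stretchEdge{e''}$ cleanly rather than some looser bound; in particular, one must be careful that the term $e'' = e'$ itself (the off-tree edge's own contribution) is handled — this is where the relation $\cycleResistance{e''} = r_{e''}(1 + \stretchEdge{e''})$ is morally being used, and where an extra factor could sneak in if the Cauchy–Schwarz is set up carelessly. An alternative, possibly slicker, route avoids edge-by-edge Cauchy–Schwarz entirely: use $\renergy{\flow_0} = \flow_0^T \rMatrix \flow_0 \le \optFlow^T \rMatrix \optFlow$ is \emph{false} in general, so instead one invokes the cited fact that $\tree$ is a $\stretchTotal{\tree}$-spectral sparsifier of $G$, i.e. $\lap_G \preceq \lap_{\tree} \preceq$ is not quite it either — rather $\lap_{\tree}^{\dagger} \preceq \stretchTotal{\tree}\, \lap_G^{\dagger}$ in the appropriate sense on the demand, which combined with $\renergy{\flow_0} = \boundary^T \lap_{\tree}^{\dagger} \boundary$ and $\renergy{\optFlow} = \boundary^T \lap_G^{\dagger} \boundary$ gives the claim immediately; but since the excerpt promises a "direct proof," I would carry out the explicit tree-path decomposition above.
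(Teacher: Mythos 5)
Your decomposition and choice of weights match the paper's argument: the paper also writes $\flow_0 = \sum_{e' \in E} \optFlow(e')\,\treePathVec{e'}$ and, for each tree edge $e$, applies Cauchy--Schwarz with weights $r_{e'}$ over the index set $\{e' \in E : e \in \treePath{e'}\}$. (The paper keeps tree edges $e'$ inside this set rather than peeling off the diagonal term $\optFlow(e)$ as you do; since $\treePath{e'} = \{e'\}$ for $e' \in \tree$, the two presentations are equivalent, the paper's being a touch cleaner.)

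There is, however, a real gap in your finishing step. After Cauchy--Schwarz you have, for each $e\in\tree$, a \emph{product} of two sums over $S_e$, so you cannot ``swap the order of summation'' and read off a coefficient of $r_{e''}\optFlow(e'')^2$ equal to $\sum_{e\in\treePath{e''}} r_e/r_{e''}$: in the expanded product, $r_{e''}\optFlow(e'')^2$ is multiplied by the entire first factor $\sum_{e'''\in S_e} r_e/r_{e'''}$, not merely by $r_e/r_{e''}$, so the per-$e''$ coefficient is larger than $\stretchEdge{e''}$ and the terms do not linearize. The paper's resolution, which you should adopt, is to bound the second factor \emph{uniformly}: $\sum_{e''\in S_e} r_{e''}\optFlow(e'')^2 \le \renergy{\optFlow}$ for every $e$, so it pulls out of the sum over $e\in\tree$, and then the remaining factor sums by genuine Fubini as $\sum_{e\in\tree}\sum_{e''\in S_e} r_e/r_{e''} = \sum_{e''\in E}\sum_{e\in\treePath{e''}} r_e/r_{e''} = \stretchTotal{\tree}$. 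You in fact flag exactly this bookkeeping as the anticipated obstacle; the crude uniform bound is the cure. (Your sketched alternative via $\lap_T^\dagger \preceq \stretchTotal{\tree}\,\lapPseudo$ is also correct and is precisely what the lemma's footnote alludes to.)
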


\begin{proof}
Recall that $\flow_0 \in \redgevec$ is the \emph{unique} vector that meets the demands and is nonzero only on $\tree$. Now, if for every $e \in E$ we sent $\optFlow(e)$ units of flow along $\treePath{e}$ we achieve a vector with the same properties. Therefore, $\flow_0 = \sum_{e \in E} \optFlow(e) \treePathVec{e}$ and by applying the Cauchy-Schwarz inequality we get
\[
    \renergy{\flow_0}
        = \sum_{e \in \tree} r_e
        \bigg(
            \sum_{e' \in E | e \in \treePath{e'}}
            \optFlow(e')
        \bigg)^2
\leq \sum_{e \in T} \Bigg[
        \bigg(
            \sum_{e' \in E | e \in \treePath{e'}} \frac{r_e}{r_{e'}}
        \bigg)
        \bigg(
           \sum_{e' \in E | e \in \treePath{e'}}
            r_{e'} \optFlow(e')^2
        \bigg) \Bigg]
\enspace.
\]
Applying the crude bound that
$
\sum_{e' \in E | e \in \treePath{e'}}
            r_{e'} \optFlow(e')^2 \leq \renergy{\optFlow}
$
for all edges $e \in E$ and noting that by the definition of stretch
$
    \sum_{e \in T} \sum_{e' \in E | e \in T_{e'}} \frac{r_e}{r_{e'}}
    = \sum_{e' \in E} \sum_{e \in T_e} \frac{r_e}{r_{e'}}
    = \stretchTotal{\tree}
$
the result follows immediately.
\end{proof}

Next, we show how approximate optimality is preserved within polynomial factors when rounding from an $\epsilon$-approximate electric flow to tree induced voltages.

\begin{lemma}[Tree Voltage Rounding]
\label{lemma:dual_round}
Let $\flow \in \redgevec$ be a primal feasible flow with $\renergy{\flow} \leq (1 + \epsilon) \cdot \renergy{\optFlow}$ for $\epsilon > 0$ and let $\volt \in \rvertvec$ be the tree induced voltages. Then the following holds
\[
     \big\|\volt - \pseudo{\lap} \boundary \big\|_\lap
        \leq
    \sqrt{\epsilon \cdot \tau} \big\|\pseudo{\lap} \boundary\big\|_\lap \enspace.
\]
\end{lemma}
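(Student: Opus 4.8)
The plan is to relate the quantity $\|\volt - \pseudo{\lap}\boundary\|_\lap^2$ to the duality gap $\gap(\flow,\volt)$, and then to bound that gap using the Tree Gap formula (\lemmaref{lemma:tree_gap_formula}) together with the approximate optimality of $\flow$. First I would observe that, as noted in \sectionref{sec:pre-duality}, the gap $\gap(\flow,\volt) = \renergy{\flow} - \renergyDual{\volt}$ upper-bounds $\renergyDual{\optVolt} - \renergyDual{\volt}$; and since $\renergyDual{\optVolt} - \renergyDual{\volt}$ is exactly a squared $\lap$-norm distance, I expect the clean identity $\renergyDual{\optVolt} - \renergyDual{\volt} = \|\volt - \pseudo{\lap}\boundary\|_\lap^2$ to hold. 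This follows by expanding $\renergyDual{\volt} = 2\volt^T\boundary - \volt^T\lap\volt$, substituting $\boundary = \lap\optVolt$ (which holds on the relevant subspace, using $\lap\pseudo{\lap}\boundary = \boundary$ since $\boundary \in \mathrm{im}(\lap)$ by connectivity), and completing the square: $\renergyDual{\optVolt} - \renergyDual{\volt} = (\volt - \optVolt)^T\lap(\volt-\optVolt)$. Hence $\|\volt - \pseudo{\lap}\boundary\|_\lap^2 \leq \gap(\flow,\volt)$.

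Next I would bound $\gap(\flow,\volt)$. By \lemmaref{lemma:tree_gap_formula}, $\gap(\flow,\volt) = \sum_{e \in \offtreeEdgeSet} \cyclePotential{e}{\flow}^2 / r_e$. The key step is to relate each term $\cyclePotential{e}{\flow}^2/r_e$ to a cycle-update energy decrease $\cyclePotential{e}{\flow}^2/\cycleResistance{e}$, which is bounded by the total energy distance to optimality. Concretely, since $\cycleResistance{e} = r_e(1 + \stretchEdge{e}) \le r_e \cdot \tau$ is false in general at the per-edge level but $\cycleResistance{e}/r_e \le \tau$ does hold (as $\cycleResistance{e}/r_e$ is a single summand of $\tau = \sum_{e'}\cycleResistance{e'}/r_{e'}$ and all summands are positive), we get $\cyclePotential{e}{\flow}^2/r_e = (\cycleResistance{e}/r_e)\cdot \cyclePotential{e}{\flow}^2/\cycleResistance{e} \le \tau \cdot \cyclePotential{e}{\flow}^2/\cycleResistance{e}$. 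Summing over $e \in \offtreeEdgeSet$ gives $\gap(\flow,\volt) \le \tau \sum_{e} \cyclePotential{e}{\flow}^2/\cycleResistance{e}$. The remaining task is to show $\sum_{e \in \offtreeEdgeSet} \cyclePotential{e}{\flow}^2/\cycleResistance{e} \le \renergy{\flow} - \renergy{\optFlow}$; this should follow because the $\treeCycleVec{e}$ form a basis of cycle space and one can decompose $\flow - \optFlow$ in this basis, with $\renergy{\flow} - \renergy{\optFlow} = \|\flow - \optFlow\|_\rMatrix^2$ (using KPL, \lemmaref{lemma:kpl}, so that the cross term vanishes) dominating the sum of squared projections onto the individual $\rMatrix$-normalized cycle directions — alternatively, invoke that one cycle update on any single $e$ already decreases energy by $\cyclePotential{e}{\flow}^2/\cycleResistance{e}$ without overshooting optimality, hence each such term is at most $\renergy{\flow} - \renergy{\optFlow}$, but that only gives a factor $m$; the basis/Pythagorean argument is what gives the clean sum bound.

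Combining, $\|\volt - \pseudo{\lap}\boundary\|_\lap^2 \le \gap(\flow,\volt) \le \tau(\renergy{\flow} - \renergy{\optFlow}) \le \tau \cdot \epsilon \cdot \renergy{\optFlow} = \epsilon\tau\|\pseudo{\lap}\boundary\|_\lap^2$, where the last equality uses the strong duality identity $\renergy{\optFlow} = \renergyDual{\optVolt} = \optVolt^T\lap\optVolt = \|\pseudo{\lap}\boundary\|_\lap^2$ from \sectionref{sec:pre-duality}. Taking square roots yields the claim. I expect the main obstacle to be the step $\sum_{e \in \offtreeEdgeSet}\cyclePotential{e}{\flow}^2/\cycleResistance{e} \le \renergy{\flow} - \renergy{\optFlow}$: getting the constant right (rather than an extra factor depending on how non-orthogonal the cycle basis is) requires care, and I would handle it by expressing $\flow - \optFlow$ exactly in the tree-cycle basis and arguing that, because $\optFlow$ satisfies KPL, the energy difference equals $\|\flow - \optFlow\|_\rMatrix^2$ which is at least $\sum_e \cyclePotential{e}{\flow}^2/\cycleResistance{e}$ — the latter being $\sum_e$ of the squared $\rMatrix$-inner-products of $\flow - \optFlow$ with the unit-normalized $\treeCycleVec{e}/\sqrt{\cycleResistance{e}}$, which is a sum of squared coordinates in a (not necessarily orthonormal, but this direction of the inequality should still go through for the relevant normalization) frame; if the frame is not tight I would instead lean on the single-update bound combined with the $p_e$-weighting already analyzed in \lemmaref{lemma:expected_progress}, which effectively shows $\sum_e p_e \cyclePotential{e}{\flow}^2/\cycleResistance{e} = \gap(\flow,\volt)/\tau \le (\renergy{\flow}-\renergy{\optFlow})/\tau$, i.e. exactly the bound needed after multiplying through by $\tau$ and using $p_e = \cycleResistance{e}/(r_e\tau)$.
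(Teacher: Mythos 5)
Your opening move is correct and matches the paper: $\big\|\volt - \pseudo{\lap}\boundary\big\|_\lap^2 = \renergyDual{\optVolt} - \renergyDual{\volt}$, which is at most $\gap(\flow,\volt)$ since $\renergyDual{\optVolt} = \renergy{\optFlow} \leq \renergy{\flow}$; this is the paper's identity $\gap(\optFlow,\volt) = \big\|\volt - \pseudo{\lap}\boundary\big\|_\lap^2$. The genuine gap in your primary route is the claimed inequality
\[
\sum_{e \in \offtreeEdgeSet} \frac{\cyclePotential{e}{\flow}^2}{\cycleResistance{e}} \;\leq\; \renergy{\flow} - \renergy{\optFlow} \enspace.
\]
Writing $h = \rMatrix^{1/2}(\flow - \optFlow)$ and $\hat{u}_e = \rMatrix^{1/2}\treeCycleVec{e}/\sqrt{\cycleResistance{e}}$, this reads $\sum_e (h^T\hat{u}_e)^2 \leq \|h\|_2^2$, i.e.\ Bessel's inequality for the unit vectors $\hat{u}_e$. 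But the normalized tree cycles are not orthogonal, and the inequality fails. Concretely, take the ``theta'' graph on two vertices $s,t$ with three parallel unit-resistance edges $e_1,e_2,e_3$, tree $T=\{e_1\}$, so $\treeCycleVec{e_2} = \vvar{1}_{e_2}-\vvar{1}_{e_1}$, $\treeCycleVec{e_3} = \vvar{1}_{e_3}-\vvar{1}_{e_1}$, $\cycleResistance{e_2}=\cycleResistance{e_3}=2$. For $g = \flow-\optFlow = \treeCycleVec{e_2}$ the left-hand side is $2^2/2 + 1^2/2 = 5/2$ while $\|g\|_\rMatrix^2 = 2$. So the Pythagorean route does not close, and the per-edge bound $\cycleResistance{e}/r_e \leq \treeCondition$ you use to reduce to it is lossy in a way that cannot be repaired by a tighter frame argument.

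Your fallback via \lemmaref{lemma:expected_progress} is exactly the paper's proof, and it is the route you should take. The point is that the reweighting $p_e = \cycleResistance{e}/(r_e\treeCondition)$ converts $\sum_e \cyclePotential{e}{\flow}^2/r_e$ into $\treeCondition \sum_e p_e\,\cyclePotential{e}{\flow}^2/\cycleResistance{e}$, which is $\treeCondition$ times the \emph{expected} one-step energy decrease; since a cycle update never drops the energy below $\renergy{\optFlow}$, that expected decrease is at most $\renergy{\flow}-\renergy{\optFlow}$, giving $\gap(\flow,\volt) \leq \treeCondition\,(\renergy{\flow} - \renergy{\optFlow})$ directly --- no Bessel inequality needed. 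One caution on how you stated the fallback: the inequality $\gap(\flow,\volt)/\treeCondition \leq (\renergy{\flow}-\renergy{\optFlow})/\treeCondition$ has the $\treeCondition$ on the wrong side of the right-hand term. Since $\gap(\flow,\volt) \geq \renergy{\flow}-\renergy{\optFlow}$ (the gap \emph{upper} bounds primal sub-optimality), that line as written is false. The correct form is $\gap(\flow,\volt)/\treeCondition \leq \renergy{\flow}-\renergy{\optFlow}$, which is what you actually use (correctly) in your ``Combining'' chain.
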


\begin{proof}
By \lemmaref{lemma:expected_progress} we know that one random cycle update from $\simpleSolver$ is expected to decrease $\renergy{\flow}$ by $\gap(\flow, \volt) / \treeCondition$. Therefore by the optimality of $\optFlow$ we have
\[
\renergy{\flow} - \frac{\gap(\flow, \volt)}{\treeCondition} \geq \renergy{\optFlow}
\]
and since $\renergy{\flow} \leq (1 + \epsilon) \cdot \renergy{\optFlow}$ and $\renergy{\flow} \leq \renergy{\optFlow}$ we have
$
    \gap(\optFlow, \volt) \leq \epsilon \cdot \tau \cdot \renergy{\optFlow}
$.
Finally, using that $\optVolt = \pseudo{\lap} \boundary$, $\optFlow = \rMatrix^{-1} \incMatrix \optVolt$, and $\lap = \incMatrix^T \rMatrix^{-1} \incMatrix$ it is straightforward to check
\[
\gap(\optFlow, \volt) = \big\|\volt - \pseudo{\lap} \boundary\big\|_\lap^2
\enspace \text{ and } \enspace
\renergy{\optFlow} = \big\|\pseudo{\lap} \boundary\big\|_\lap^2 \enspace.
\]
\end{proof}

Finally, we have everything we need to prove the correctness of $\simpleSolver$.

\begin{proof}[Proof of \theoremref{thm:simple_algorithm}]
Applying \theoremref{thm:convergence}, the definition of $\nIter$, and \lemmaref{lemma:initial_energy} we get
\[
\E\big[\renergy{\flow_\nIter}\big] - \renergy{\optFlow}
    \leq \left(1 - \frac{1}{\treeCondition}\right)^{\treeCondition \log (\epsilon^{-1} \stretchTotal{\tree} \treeCondition)} \left(\stretchTotal{\tree} \renergy{\optFlow} - \renergy{\optFlow}\right)
\enspace.
\]
Using several crude bounds and applying \lemmaref{lemma:dual_round} we get
\[
    \E\big[\renergy{\flow_\nIter}\big] \leq \left(1 + \frac{\epsilon}{\treeCondition}\right) \renergy{\optFlow}
    \enspace \text{ and } \enspace
    \E\big\|\volt_{\nIter} - \lapPseudo \boundary\big\|_{\lap}
    \leq \sqrt{\epsilon} \cdot \big\|\lapPseudo \boundary\big\|_{\lap}
\]
which is stronger than what we need to prove for the theorem.

All that remains is to bound the running time. To do this, we implement $\simpleSolver$ using the latest low-stretch spanning tree construction (\theoremref{thm:low-stretch}) and construct a spanning tree $\tree$ of stretch $\st(T)=O(m \log n \log \log n)$ in time $O(m \log n \log \log n)$ yielding $\treeCondition = O(m \log n \log \log n)$.

Next, to compute $\flow_0$ we note that given any demand vector $\boundary \in \rvertvec$ with $\sum_{i} \boundary_i = 0$, the quantity $\flow_0(e)$ on a tree edge $e\in T$ is uniquely determined by the summation of $\boundary_v$ where $v$ is over the vertices on one side of $e$. Therefore, $\flow_0$ can be computed via a DFS in $O(n)$ time.

To compute $\cycleResistance{e}$ for each off-tree edge $e \in \offtreeEdgeSet$ we could either use Tarjan's off-line LCA algorithm~\cite{Tarjan79} that runs in a total of $O(m)$ time, or simply use our own data structure in $O(m \log n)$ time. In fact, one can initiate a different instance of our data structure on $\tree$, and for each off-tree edge $(a, b) \in \offtreeEdgeSet$, one can call $\dsOpUpdate(b, 1)$ and $\dsOpUpdate(a, -1)$, so that $\cycleResistance{e} = \dsOpQuery(b) - \dsOpQuery(a) + r_e$.

Finally, we can initialize our data structure in $O(n\log n)$ time, set the initial flow values in $O(n \log n)$ time, perform each cycle update in $O(\log n)$ time, and compute all the tree voltages in $O(n\log n)$ time using the work in section \sectionref{sec:datastructure}. Since the total number of iterations of our algorithm is easily seen to be $O(m \log n \log
(n \epsilon^{-1}) \log \log n)$ we get the desired total running time.
\end{proof}

\section{Improved Running Time}
\label{sec:improvement}

In this section we show how to improve our algorithm's running time to $\runtimeBest$ (i.e. improve our algorithm's dependence on $\epsilon$ from $O(\log (\epsilon^{-1} n))$ to $O(\log \epsilon^{-1})$). The improvement is gained by careful application of two techniques to the standard $\simpleSolver$: (1) by changing the resistances of the edges in the $\tree$ and applying $\simpleSolver$, we can quickly compute an initial feasible $\flow_0 \in \redgevec$ of better quality, and (2) by changing the stopping criterion for cycle updates, we can achieve a stronger guarantee on the quality of flow induced tree voltages.

In \sectionref{sec:first-improvement}, we introduce these concepts by presenting a straightforward two-step algorithm that  improves the error dependence to $\log (\epsilon^{-1} \log n)$. In \sectionref{sec:second-improvement}, we present the full algorithm that by carefully applying these techniques recursively $O(\log^*n)$ times obtains an improved error dependence of $O(\log \epsilon^{-1})$ and the best asymptotic running time proven in this paper.

\subsection{An Example Algorithm with \texorpdfstring{$\runtimeBetter$ Running Time}{a Better Epsilon Dependence}}
\label{sec:first-improvement}

\begin{algorithm}
\SetAlgoLined
\LinesNumbered
\SetKwInOut{Input}{Input}
\SetKwInOut{Output}{Output}
\Input{$G = (V, E, r)$, $\boundary \in \rvertvec$, $\epsilon \in \rPos$}
\Output{$\flow$ and $\volt$}
\BlankLine
$T := $ low-stretch spanning tree of $G$\;
\SetKwComment{Comment}{}{}
\SetCommentSty{textrm}
$(\flow_0, \star) := \simpleSolver(G - T + \frac{T}{\log n}, \boundary, 1)$ \label{prog:warm-start-1}\;
\Comment*[f]{\emph{(the simple solver will use $T':=\frac{T}{\log n}$ as the spanning tree)}}

$\edgeSampleProb{e} := \frac{1}{\treeCondition(\tree)} \cdot \frac{\cycleResistance{e}}{r_e}$ for all
$e \in \offtreeEdgeSet$ \;
$\nIter := \big\lceil \treeCondition(T) \log \big(\frac{2\log n}{\epsilon}\big) \big\rceil$\;
$\nIter' := $ an integer chosen uniformly at random in $\{0, 1, \ldots, \lceil \tau(T) \rceil - 1\}$\;
\For{$i = 1$ to $\nIter+\nIter'$}
{
    Pick random $e_i \in \offtreeEdgeSet$ by probability distribution
    $\sampleProbVec$ \;
    $\flow_i = \flow_{i - 1} - \frac{\cyclePotential{e}{\flow_{i - 1}}}
        {\cycleResistance{e}}
    \treeCycleVec{e}$ \;
}
\Return{$\flow_{\nIter+\nIter'}$ and its tree induced voltages $\volt_{\nIter+\nIter'}$}
\caption{\label{algm:betteralgorithm}\betterSolver}
\end{algorithm}

Here we present an $\betterSolver$ (see \algorithmref{algm:betteralgorithm}), an algorithm that illustrates the techniques we will use in \sectionref{sec:second-improvement} to obtain the fastest running times presented in this paper. This algorithm modifies the resistances of tree edges and applies $\simpleSolver$ once to compute feasible $\flow_0$ that is $O(\log n)$ factor from optimal (rather than an $\tilde{O}(m)$ factor as in \lemmaref{lemma:initial_energy}) and then uses this $\flow_0$ as a warm start to perform cycle updates. By also randomizing over the number of cycle updates to perform, we can show that we get an improved running time as stated in the following theorem.

\begin{theorem}[$\betterSolver$]
\label{thm:better-algorithm}
The output of $\betterSolver$ satisfies
\[
\E\big[\renergy{\flow}\big] \leq (1 + \epsilon) \cdot \renergy{\optFlow}
\enspace \text{ and } \enspace
\E\big\|\volt - \lapPseudo \boundary\big\|_{\lap}
    \leq \sqrt{\epsilon} \cdot \big\|\lapPseudo \boundary\big\|_{\lap}
\]
and $\betterSolver$ can be implemented to run in time $\runtimeBetter$.
\end{theorem}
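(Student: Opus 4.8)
The plan is to prove Theorem~\ref{thm:better-algorithm} in two halves: first the energy/voltage approximation guarantees, and then the running-time bound. For the approximation guarantees, I would analyze the two phases of $\betterSolver$ separately. In the \emph{warm-start phase} (Line~\ref{prog:warm-start-1}), we run $\simpleSolver$ on the modified graph $G' := G - T + \frac{T}{\log n}$, where the tree edges have their resistances scaled up by a factor of $\log n$ (equivalently, conductances scaled down), using $T' := \frac{T}{\log n}$ as the spanning tree. The key observation is that the stretch of an off-tree edge $e$ with respect to $T'$ is $\stretchEdge{e}/\log n$ smaller than with respect to $T$ in $G$ — actually one must be careful here: scaling tree resistances by $\log n$ changes $\cycleResistance{e}$ but off-tree resistances are unchanged, so $\st_{T'}(e) = \frac{\log n \cdot \sum_{e'\in\treePath{e}} r_{e'}}{r_e}$, which is \emph{larger}. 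Wait — the right statement is that we want $\flow_0$ (the output flow of the warm-start call, which is feasible for $\boundary$ in $G'$ and hence also in $G$ since the edge \emph{set} is unchanged) to have small energy \emph{measured with the original resistances of $G$}. By \theoremref{thm:simple_algorithm} applied with $\epsilon = 1$ to $G'$, the returned flow is a $2$-approximate electrical flow in $G'$, i.e. $\renergyOp_{G'}(\flow_0) \le 2\,\renergyOp_{G'}(\optFlow^{G'})$. One then relates $\renergyOp_{G'}$ to $\renergyOp_G$: since every edge of $G'$ has resistance $\ge$ its resistance in $G$ (tree edges scaled up, off-tree unchanged), $\renergyOp_G(\flow) \le \renergyOp_{G'}(\flow)$ for any $\flow$; and $\renergyOp_{G'}(\optFlow^{G'}) \le \renergyOp_{G'}(\optFlow^G) \le \renergyOp_G(\optFlow^G) + (\log n - 1)\sum_{e\in T} r_e \optFlow^G(e)^2$... actually the cleanest route is: $T'$ has total stretch $\Theta(\st(T))$ in $G'$? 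No. Let me reorganize: the point of scaling is that $\optFlow^G$ restricted/compared appropriately, the energy of $\flow_0$ in $G$ is $O(\log n)\cdot\renergyOp_G(\optFlow^G)$. I would establish this via the identity $\flow_0 = \sum_e \optFlow^{G'}(e)\,\treePathVec{e}$-type decomposition combined with the fact that a $\st(T)/\log n$-style bound on the relevant stretch sum $\sum_{e\in\offtreeEdgeSet}\st_{T,G'}(e)$ collapses to $O(m\log n)$... The honest summary: the warm-start gives $\renergyOp_G(\flow_0) = O(\log n)\cdot\renergyOp_G(\optFlow)$, a strict improvement over the $\stretchTotal{T} = \tilde O(m)$ factor from \lemmaref{lemma:initial_energy}.

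Second, for the \emph{main phase}: starting from $\flow_0$ with $\renergyOp(\flow_0) \le O(\log n)\renergyOp(\optFlow)$, we run $\nIter + \nIter'$ cycle updates where $\nIter = \lceil\treeCondition(T)\log(2\log n/\epsilon)\rceil$ and $\nIter'$ is uniform in $\{0,\dots,\lceil\tau(T)\rceil-1\}$. By \theoremref{thm:convergence}, after $\nIter$ updates $\E[\renergyOp(\flow_\nIter)] - \renergyOp(\optFlow) \le (1-1/\tau)^{\nIter}(\renergyOp(\flow_0)-\renergyOp(\optFlow)) \le (1-1/\tau)^{\tau\log(2\log n/\epsilon)}\cdot O(\log n)\renergyOp(\optFlow) \le \frac{\epsilon}{2}\renergyOp(\optFlow)$, giving the energy guarantee. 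For the voltage guarantee, naïvely \lemmaref{lemma:dual_round} costs a $\sqrt\tau$ factor, which is too lossy; this is where randomizing $\nIter'$ enters. The trick: averaging the gap over a full ``epoch'' of $\tau$ consecutive iterations. By \lemmaref{lemma:expected_progress}, $\E[\renergyOp(\flow_i) - \renergyOp(\flow_{i-1})] = -\E[\gap(\flow_{i-1},\volt_{i-1})]/\tau$. Summing over $i = \nIter+1,\dots,\nIter+\lceil\tau\rceil$ telescopes the left side to $\E[\renergyOp(\flow_{\nIter+\lceil\tau\rceil})] - \E[\renergyOp(\flow_\nIter)]$, whose magnitude is at most $\E[\renergyOp(\flow_\nIter)] - \renergyOp(\optFlow) \le \frac{\epsilon}{2}\renergyOp(\optFlow)$. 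Hence $\frac{1}{\lceil\tau\rceil}\sum_{i} \E[\gap(\flow_{i-1},\volt_{i-1})] \le \frac{\epsilon}{2\lceil\tau\rceil}\cdot\tau\cdot\renergyOp(\optFlow)$, so the expectation over the random choice of $\nIter'$ of $\gap(\flow_{\nIter+\nIter'},\volt_{\nIter+\nIter'})$ is $O(\epsilon)\cdot\renergyOp(\optFlow)$ — no $\tau$ blow-up. Then $\E\|\volt_{\nIter+\nIter'} - \lapPseudo\boundary\|_\lap^2 = \E[\gap(\optFlow,\volt_{\nIter+\nIter'})] \le \E[\gap(\flow_{\nIter+\nIter'},\volt_{\nIter+\nIter'})] = O(\epsilon)\renergyOp(\optFlow) = O(\epsilon)\|\lapPseudo\boundary\|_\lap^2$, and Jensen gives the stated $\sqrt\epsilon$ bound (absorbing constants into $\epsilon$).

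Third, the \emph{running time}. The low-stretch tree costs $O(m\log n\log\log n)$ by \theoremref{thm:low-stretch}, with $\tau(T) = O(m\log n\log\log n)$. The warm-start call to $\simpleSolver$ on $G'$ runs in time $\runtimeSimple$ with $\epsilon$ replaced by the constant $1$; but $G'$ uses $T' = T/\log n$ as its tree, and one must check that $\tau(T')$ in $G'$ is still $O(m\log n\log\log n)$ — indeed scaling tree resistances up by $\log n$ scales each $\cycleResistance{e}$ by roughly $\log n$ but the off-tree $r_e$ are unchanged, so $\tau(T',G') = \sum_{e\in\offtreeEdgeSet}\cycleResistance{e}^{G'}/r_e \approx \log n\cdot\st(T) + m = O(m\log^2 n\log\log n)$, hence the warm-start iteration count is $O(\tau(T')\log(\tau(T')\st(T'))) = O(m\log^2 n\log\log n\log\log n)$, absorbed into the claimed bound since there is no $\epsilon$ dependence here. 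The main phase runs $\nIter+\nIter' = O(\tau(T)\log(\epsilon^{-1}\log n)) = O(m\log n\log\log n\log(\epsilon^{-1}\log n))$ iterations at $O(\log n)$ each (by Section~\ref{sec:datastructure}), giving $O(m\log^2 n\log\log n\log(\epsilon^{-1}\log n))$. Data-structure initialization, computing $\flow_0$, the $\cycleResistance{e}$'s, and the final tree voltages are all $O(m\log n)$. Summing yields $\runtimeBetter$.

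\textbf{Main obstacle.} The delicate part is the warm-start analysis: showing cleanly that running $\simpleSolver$ on $G - T + T/\log n$ produces a flow whose energy \emph{in the original graph $G$} is within $O(\log n)$ of optimal. One must track how the resistance rescaling interacts with both the energy functional and the optimal flow of the modified instance, and confirm the modified instance's tree condition number stays nearly-linear so the warm-start call itself is cheap. The randomized-$\nIter'$ averaging argument for the voltage bound is the second key idea but is, by comparison, a short telescoping computation once set up correctly.
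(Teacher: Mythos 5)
Your overall architecture matches the paper exactly: a warm-start phase that runs $\simpleSolver$ on a tree-rescaled graph to get a flow within an $O(\log n)$ factor of optimal, followed by a telescoping argument on the duality gap with a uniformly random stopping time $\nIter'$ to avoid the $\sqrt{\treeCondition}$ loss from \lemmaref{lemma:dual_round}. The randomized-stopping half of your argument is essentially verbatim the paper's \lemmaref{lemma:randomized-stopping}, and is correct.

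However, there is a genuine error in the warm-start half: you have the direction of the tree rescaling backwards, and this breaks the running-time bound. The paper defines $\tree/\kappa$ to mean tree-edge resistances \emph{decreased} by a factor of $\kappa$ (so $r'_e = r_e/\kappa$ on $\tree$), and it explicitly states the point is that ``decreasing the resistances of $\tree$ improves the tree condition number,'' giving $\treeCondition(\tree') \leq \treeCondition(\tree)/\kappa + m = O(m\log\log n)$ for $\kappa = \log n$. You instead interpret the scaling as resistances \emph{increased} by $\log n$, which makes $\treeCondition(\tree') \approx \log n \cdot \treeCondition(\tree) = O(m\log^2 n\log\log n)$. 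The warm-start call to $\simpleSolver$ then costs $\Omega\!\big(\treeCondition(\tree')\log(n)\big)$ iterations at $O(\log n)$ each, i.e.\ $\Omega(m\log^4 n\log\log n)$ time, which overshoots $\runtimeBetter$ by two $\log n$ factors. Your claim that the warm-start iteration count is only $O(m\log^2 n\log\log n\log\log n)$ and is ``absorbed into the claimed bound'' is incorrect even under your own (wrong) scaling: $\log(\treeCondition(\tree')\stretchTotal{\tree'})$ is $\Theta(\log n)$, not $\Theta(\log\log n)$, since the argument is polynomial in $m$. Note that your first instinct --- ``the stretch of an off-tree edge $e$ with respect to $T'$ is $\stretchEdge{e}/\log n$ smaller'' --- was the right one before you talked yourself out of it.

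It is worth noting that the \emph{approximation} guarantee of the warm start happens to survive your sign error, because both directions of scaling keep tree and off-tree resistances within a multiplicative $\log n$ of each other, so the chain of inequalities you wrote still yields $\renergy{\flow_0} \leq 2\log n \cdot \renergy{\optFlow}$. The paper isolates this as \lemmaref{lemma:tree-scaling} and the proof is exactly the four-line chain you sketched, with the inequalities $r_e \leq \kappa r'_e$ and $r'_e \leq r_e$ (both of which require the decrease direction). So the fix is local: flip the scaling to $r'_e = r_e/\log n$ on tree edges, recompute $\treeCondition(\tree') = O(m\log\log n)$ and $\stretchTotal{\tree'} = O(m\log\log n)$, and the warm-start cost becomes $O(m\log\log n)\cdot O(\log n)\cdot O(\log n) = O(m\log^2 n\log\log n)$, which fits inside the claimed bound with no $\epsilon$-dependence. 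Everything else in your proposal then goes through.
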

\noindent We prove the above theorem in two steps, introducing one new technique for each step.

\paragraph{Step \#1: Tree Scaling.}
Given graph $G$, recall that by \theoremref{thm:low-stretch} one can obtain a spanning tree $\tree$ with condition number $\treeCondition(\tree)=O(m\log n \log\log n)$. For $\kappa \geq 1 $ let $\tree / \kappa$ denote $\tree$ where every $e \in \tree$ has its resistance decreased by a multiplicative $\kappa$. Now replace $\tree$ with $\tree / \kappa$ and denote this new graph $G'$, the new resistances by $r'$, and the new tree by $\tree'$. This technique is similar to one used in \cite{KoutisMP10,KMP11}, and one can easily check that
\[
    \treeCondition(\tree') = \treeCondition(\tree)/\kappa + (m-n+1) < \treeCondition(T)/\kappa + m \enspace,
\]
i.e., decreasing the resistances of $\tree$ improves the tree condition number. Furthermore, since replacing $\tree$ by $\tree / \kappa$ only changes resistances, $\flow \in \redgevec$ is feasible in $G$ if and only if it is feasible in $G'$, and therefore one could compute an approximate flow in $G'$ and then use the results in $G$. The following lemma bounds the loss in approximation of such an operation.

\begin{lemma}[Tree Scaling]
\label{lemma:tree-scaling}
Let $G'$ be graph $G$ with $\tree$ replaced by $\tree / \kappa$ for $\kappa \geq 1$, let $\flow'$ be an $\epsilon$-approximate electrical flow in $G'$, and let $\optFlow$ be the optimal electrical flow in $G$. Then,
\[
    \renergy{\flow'} \leq \kappa (1 + \epsilon) \renergy{\optFlow} \enspace.
\]
\end{lemma}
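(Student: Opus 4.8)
The plan is to compare the two energy functionals directly, exploiting the only two ways in which $G$ and $G'$ differ: the feasible set $\{\flow \in \redgevec : \incMatrix^T \flow = \boundary\}$ is literally the same affine subspace in both graphs (it depends on $\incMatrix$ and $\boundary$ alone, not on the resistances), and the resistances themselves are pointwise comparable. Write $\renergyOp'(\flow) \defeq \sum_{e \in E} r'_e \flow(e)^2$ for the energy in $G'$ and let $\optFlow'$ denote the optimal electrical flow in $G'$, which exists and meets the same demand $\boundary$ as $\optFlow$.

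The first step is a two-sided pointwise bound on resistances. Since $\kappa \geq 1$ and only tree edges are rescaled, $r'_e = r_e$ on off-tree edges while $r'_e = r_e/\kappa$ on tree edges, so $\tfrac{1}{\kappa} r_e \leq r'_e \leq r_e$ for every $e \in E$. Multiplying by $\flow(e)^2$ and summing over $e$ gives, for every $\flow \in \redgevec$,
\[
  \tfrac{1}{\kappa}\,\renergy{\flow} \;\leq\; \renergyOp'(\flow) \;\leq\; \renergy{\flow}\enspace.
\]

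The second step just chains inequalities. Applying the left bound to $\flow'$, then the $\epsilon$-approximation hypothesis in $G'$, then optimality of $\optFlow'$ in $G'$ together with the fact that $\optFlow$ is itself feasible in $G'$, and finally the right bound applied to $\optFlow$, we obtain
\[
  \renergy{\flow'} \;\leq\; \kappa\,\renergyOp'(\flow') \;\leq\; \kappa(1+\epsilon)\,\renergyOp'(\optFlow') \;\leq\; \kappa(1+\epsilon)\,\renergyOp'(\optFlow) \;\leq\; \kappa(1+\epsilon)\,\renergy{\optFlow}\enspace,
\]
which is the claim. There is essentially no obstacle here; the only thing to keep straight is that $\renergy{\cdot}$ in the statement is energy with respect to the original resistances $r$, whereas $\flow'$ is only near-optimal with respect to the scaled resistances $r'$, so the factor $\kappa$ enters exactly once — in the passage from $\renergyOp'(\flow')$ back to $\renergy{\flow'}$. (The final inequality $\renergyOp'(\optFlow) \leq \renergy{\optFlow}$ is slightly wasteful on the tree edges, but this slack is not needed for the stated bound.)
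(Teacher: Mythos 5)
Your proof is correct and follows exactly the same chain of inequalities as the paper's own argument: bound $\renergy{\flow'}$ by $\kappa$ times the $G'$-energy of $\flow'$ using $r_e \leq \kappa r'_e$, apply the $\epsilon$-approximation hypothesis, use optimality of $\optFlow'$ against the feasible $\optFlow$, and finally pass back to $\renergy{\optFlow}$ using $r'_e \leq r_e$.
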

\begin{proof}
Since $\flow'$ is feasible in $G'$ it is feasible in $G$. Letting $\optFlow'$ be the optimal electrical flow in $G'$ we can then bound $\renergy{\flow'}$ as follows
\begin{align*}
\renergy{\flow'}
&\leq \kappa \cdot \xi_{r'}(\flow') \tag{Since $r_e \leq \kappa \cdot r'_e$ for any edge $e$} \\
&\leq \kappa(1 + \epsilon) \cdot \xi_{r'}(\optFlow') \tag{Since $\flow'$ is $\epsilon$-approximate in $G'$} \\
&\leq \kappa(1 + \epsilon) \cdot \xi_{r'}(\optFlow) \tag{By optimality of $\optFlow'$ in $G'$} \\
&\leq \kappa(1 + \epsilon) \cdot \renergy{\optFlow} \enspace. \tag{Since $r_e' \leq r_e$ for any edge $e$}
\end{align*}
\end{proof}

We can apply \lemmaref{lemma:tree-scaling} directly to the analysis of $\betterSolver$. Choosing $\kappa = \log n$ in line~\ref{prog:warm-start-1}, we see that $\betterSolver$ first computes feasible $\flow_0 \in \redgevec$ that is $1$-approximate electric flow in $G'$. By having the recursive $\simpleSolver$ invocation in this line use spanning tree $\tree'$ (which has $\treeCondition(T') \leq \treeCondition(T)/\kappa+m$), this step takes $O\left( \Big(\frac{\treeCondition(T)}{\log n}+m \Big)\log n \log n \right) = O(m \log^2n \log\log n)$ time and by
\lemmaref{lemma:tree-scaling} guarantees that
\[
    \E\left[\renergy{\flow_0}\right] = (1 + \log n) \cdot \renergy{\optFlow} \leq 2 \log n \cdot \renergy{\optFlow} \enspace.
\]
This is an improvement compared to what was previously guaranteed by \lemmaref{lemma:initial_energy} and lets us apply~\theoremref{thm:convergence} with $i=\nIter=\big\lceil\treeCondition(T)\log \big(\frac{2\log n}{\epsilon} \big)\big\rceil$ to conclude that:
\begin{equation}\label{eqn:better-solver-1}
\E\big[\renergy{\flow_\nIter}\big] \leq (1+\epsilon) \renergy{\optFlow} \enspace.
\end{equation}

\paragraph{Step \#2: Randomized Stopping Time.} While \equationref{eqn:better-solver-1} provides an (expected) guarantee on the primal energy for $\flow_\nIter$, converting this to a dual guarantee on its tree induced voltage vector $\volt_\nIter$ using~\lemmaref{lemma:dual_round} causes us to lose a factor of $\treeCondition=\tilde{O}(m)$ on the dependency of $\epsilon$, which translates into an extra factor of $O(\log m)$ in the running time. This motivates us to study a more efficient conversion as follows.
\begin{lemma}
\label{lemma:randomized-stopping}
For feasible $\flow_\nIter \in \redgevec$ satisfying $\renergy{\flow_\nIter} \leq (1+\epsilon) \renergy{\optFlow}$, if one applies $\nIter'$ cycle updates, where $\nIter'$ is an integer chosen uniformly at random from $\{0,1,\dots,\lceil \treeCondition\rceil -1\}$, then we have
$$\E\big[\gap(\flow_{\nIter+\nIter'},\volt_{\nIter+\nIter'})\big] \leq \epsilon \cdot \renergy{\optFlow}  \enspace.$$
\end{lemma}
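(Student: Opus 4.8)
The plan is to exploit a telescoping identity for the energy decrease over the $\lceil \treeCondition \rceil$ possible stopping times. By \lemmaref{lemma:expected_progress}, for any feasible flow $\flow$ with tree induced voltages $\volt$, one cycle update decreases the expected energy by exactly $\gap(\flow,\volt)/\treeCondition$; equivalently, writing $G_j \defeq \E[\gap(\flow_{\nIter+j},\volt_{\nIter+j})]$ and $E_j \defeq \E[\renergy{\flow_{\nIter+j}}]$, we have $E_{j+1} = E_j - G_j/\treeCondition$ for every $j \geq 0$. Summing this from $j=0$ to $j=\lceil\treeCondition\rceil-1$ gives $\sum_{j=0}^{\lceil\treeCondition\rceil-1} G_j = \treeCondition\,(E_0 - E_{\lceil\treeCondition\rceil}) \leq \treeCondition\,(E_0 - \renergy{\optFlow})$, since every $E_j \geq \renergy{\optFlow}$ by optimality of $\optFlow$.

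Next I would use the hypothesis $\renergy{\flow_\nIter} \leq (1+\epsilon)\renergy{\optFlow}$, so that $E_0 - \renergy{\optFlow} \leq \epsilon\,\renergy{\optFlow}$, yielding $\sum_{j=0}^{\lceil\treeCondition\rceil-1} G_j \leq \treeCondition\,\epsilon\,\renergy{\optFlow}$. Since $\nIter'$ is chosen uniformly from $\{0,1,\dots,\lceil\treeCondition\rceil-1\}$ and is independent of the randomness in the first $\nIter$ updates, the law of total expectation gives
\[
    \E\big[\gap(\flow_{\nIter+\nIter'},\volt_{\nIter+\nIter'})\big]
    = \frac{1}{\lceil\treeCondition\rceil}\sum_{j=0}^{\lceil\treeCondition\rceil-1} G_j
    \leq \frac{\treeCondition\,\epsilon\,\renergy{\optFlow}}{\lceil\treeCondition\rceil}
    \leq \epsilon\,\renergy{\optFlow}\enspace,
\]
which is the claimed bound.

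The one point that needs care — and which I expect to be the main obstacle, though it is minor — is justifying the exchange of expectations when $\nIter'$ is random: one must be careful that $G_j$ is defined as the expectation of the gap after exactly $j$ additional deterministic-length updates starting from $\flow_\nIter$, and that conditioning on $\nIter' = j$ and then running updates produces the same distribution. Because $\nIter'$ is drawn independently and the update rule depends only on the current flow, this is immediate, but it should be stated. A second small subtlety is that \lemmaref{lemma:expected_progress} is stated as a conditional expectation given $\gap(\flow_{i-1},\volt_{i-1})$; taking a further (outer) expectation over all randomness up to step $i-1$ converts it into the unconditional recursion $E_{j+1} = E_j - G_j/\treeCondition$ used above, via the tower property. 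Everything else is the telescoping sum and the trivial inequality $\treeCondition/\lceil\treeCondition\rceil \leq 1$.
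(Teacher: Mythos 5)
Your proposal is correct and follows essentially the same telescoping argument as the paper's proof: both sum the recursion from \lemmaref{lemma:expected_progress} over $\lceil\treeCondition\rceil$ further updates, bound the telescoped difference by $\epsilon\renergy{\optFlow}$ using the hypothesis together with $\renergy{\flow_i}\geq\renergy{\optFlow}$, and then average over the uniformly random stopping time. Your additional remarks on the tower property and the independence of $\nIter'$ are accurate and merely make explicit what the paper leaves implicit.
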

\begin{proof}
Following the notation of \lemmaref{lemma:convergence_rate}, let random variable $D_i\defeq \renergy{\flow_i} - \renergy{\optFlow}$ for $i \in \{\nIter,\nIter+1,\dots,\nIter+\lceil\treeCondition\rceil \}$. By \lemmaref{lemma:expected_progress} we have
$$
\forall i = \nIter+1,\dots,\nIter+\lceil\treeCondition\rceil, \quad \E[D_i - D_{i-1}] = -\frac{1}{\treeCondition}\E[\gap(\flow_{i-1},\volt_{i-1})]
\enspace.
$$
Now, by assumption $D_\nIter \leq \epsilon \cdot \renergy{\optFlow}$ and since $D_i \geq 0$ we can also sum over the $K'$ yielding
$$
\epsilon \cdot \renergy{\optFlow} \geq \E[D_\nIter] \geq \E[D_\nIter - D_{\nIter+\lceil\treeCondition\rceil}] \geq \frac{1}{\treeCondition}\E\left[ \sum_{\nIter'=\nIter}^{\nIter+\lceil \treeCondition \rceil-1}\gap(\flow_{\nIter'},\volt_{\nIter'}) \right] \enspace.
$$
Therefore, if one picks $\nIter' \in \{0,1,\dots,\lceil \treeCondition \rceil -1 \}$ at random, the desired inequality is obtained.
\end{proof}

Now in $\betterSolver$, our output is $\flow = \flow_{\nIter+\nIter'}$ and $\volt = \volt_{\nIter+\nIter'}$ so combining~\equationref{eqn:better-solver-1} and~\lemmaref{lemma:randomized-stopping}, we immediately have that $\E\big[\gap(\flow, \volt)\big] \leq \epsilon \cdot \renergy{\optFlow}$. This simultaneously implies the following
\begin{align*}
\E[\renergy{\flow} - \renergy{\optFlow}] = \E\big[\gap(\flow, \optVolt)\big] &\leq \E\big[\gap(\flow, \volt)\big] \leq \epsilon \cdot \renergy{\optFlow} \enspace, \text{ and} \\
\E\big[\|\volt - \pseudo{\lap} \boundary\|_\lap\big]^2 \leq \E\big[\|\volt - \pseudo{\lap} \boundary\|_\lap^2\big] = \E\big[\gap(\optFlow, \volt)\big] &\leq \E\big[\gap(\flow,\volt)\big] \leq \epsilon \cdot \renergy{\optFlow} = \epsilon \|\pseudo{\lap} \boundary\|_\lap^2 \enspace.
\end{align*}
Here the second line has used the fact that $\gap(\optFlow, \volt) = \|\volt - \pseudo{\lap} \boundary\|_\lap^2$ and $\renergy{\optFlow}=\|\pseudo{\lap} \boundary\|_\lap^2$ whose proof can be found in the proof of~\lemmaref{lemma:dual_round}. This finishes the proof of~\theoremref{thm:better-algorithm}.

\subsection{The Full Algorithm With \texorpdfstring{$\runtimeFinal$ Running Time}{our Best Epsilon Dependence}}
\label{sec:second-improvement}
Here we present $\finalSolver$ (see \algorithmref{algm:finalalgorithm}), an algorithm that achieves the best asymptotic running time presented this paper by recursively applying the tree scaling technique of the previous section $\log^*(n)$ number of times with careful chosen scaling parameters.

\begin{theorem}[$\finalSolver$]
\label{thm:final-algorithm}
The output of $\finalSolver$ satisfies
\[
\E\big[\renergy{\flow}\big] \leq (1 + \epsilon) \cdot \renergy{\optFlow}
\enspace \text{ and } \enspace
\E\big\|\volt - \lapPseudo \boundary\big\|_{\lap}
    \leq \sqrt{\epsilon} \cdot \big\|\lapPseudo \boundary\big\|_{\lap}
\]
and $\finalSolver$ can be implemented to run in time $\runtimeFinal$.
\end{theorem}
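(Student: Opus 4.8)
The plan is to generalize $\betterSolver$ (\theoremref{thm:better-algorithm}) by nesting its ``scale the tree, then refine'' idea $O(\log^*n)$ times. Fix once a low-stretch tree $\tree$ with $\tau_0:=\treeCondition(\tree)=O(m\log n\log\log n)$ via \theoremref{thm:low-stretch}, and choose a schedule of cumulative scaling factors $1=\Pi_0<\Pi_1<\dots<\Pi_L$ with ratios $\kappa_i:=\Pi_i/\Pi_{i-1}\ge 1$; let $G_i$ be $G$ with every tree edge's resistance divided by $\Pi_i$, so that by the identity $\treeCondition(\tree/\kappa)=\treeCondition(\tree)/\kappa+(m-n+1)$ its tree has condition number $\tau_i=O(\tau_0/\Pi_i+m)$, and since an off-tree edge stays off-tree with unchanged resistance, all $G_i$ share the same feasibility constraints. $\finalSolver$ runs $\simpleSolver$ on $G_L$ to accuracy $1$ (with $\tree/\Pi_L$ supplied as its tree, exactly as $\simpleSolver$ is invoked inside $\betterSolver$), and then for $i=L-1,\dots,0$ takes the flow it currently holds as a warm start for $G_i$ and performs a batch of cycle updates on $G_i$ (sampling $e$ with probability $\propto \cycleResistance{e}/r_e$ in $G_i$) to restore a constant-factor-optimal flow; at the final stage $i=0$ it instead refines to accuracy $\epsilon$ and then, as in \lemmaref{lemma:randomized-stopping}, appends $\nIter'$ further updates with $\nIter'$ uniform in $\{0,\dots,\lceil\tau_0\rceil-1\}$. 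It returns the last flow and the voltages it induces on $\tree$. Throughout we may assume $\epsilon$ is below a fixed constant, substituting that constant otherwise.

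For correctness, the approximation quality composes along the chain (all bounds in expectation, composed via the tower property). $\simpleSolver$ gives a flow whose energy in $G_L$ is at most twice optimal. If a feasible flow has energy $\le q$ times optimal in $G_i$, then \lemmaref{lemma:tree-scaling} (with $\kappa=\kappa_i$) gives energy $\le\kappa_i q$ times optimal in $G_{i-1}$, and by \theoremref{thm:convergence} applying $\lceil\tau_{i-1}\ln(2\kappa_i q)\rceil$ cycle updates on $G_{i-1}$ brings the expected energy back to $\le2$ times optimal when $i-1\ge1$, and to $\le1+\epsilon$ times optimal when $i-1=0$ using $\lceil\tau_0\ln(2\kappa_1/\epsilon)\rceil$ updates. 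Feeding the accuracy-$\epsilon$ flow $\flow$ produced at stage $0$ into \lemmaref{lemma:randomized-stopping} yields $\E[\gap(\flow,\volt)]\le\epsilon\,\renergy{\optFlow}$ for the returned pair, and then the identities $\gap(\optFlow,\volt)=\|\volt-\lapPseudo\boundary\|_\lap^2$ and $\renergy{\optFlow}=\|\lapPseudo\boundary\|_\lap^2$ from the proof of \lemmaref{lemma:dual_round}, together with $\gap(\flow,\volt)\ge\gap(\optFlow,\volt)$, give $\E[\renergy{\flow}]\le(1+\epsilon)\renergy{\optFlow}$ and, after a Jensen step, $\E\|\volt-\lapPseudo\boundary\|_\lap\le\sqrt{\epsilon}\,\|\lapPseudo\boundary\|_\lap$.

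For the running time, each cycle update costs $O(\log n)$ via the data structure of \sectionref{sec:datastructure}, rebuilt once per stage for the current rescaled tree together with its $\cycleResistance{e}$ values, an $O(m\log n)$ overhead per stage and hence $O(m\log n\log^*n)$ in total. The stage-$0$ batch uses $O(\tau_0\log(\kappa_1/\epsilon))$ updates, which is $O(\tau_0\log\epsilon^{-1})$ once $\kappa_1=O(1)$; the $\simpleSolver$ call on $G_L$ uses $O(\tau_L\log(\stretchTotal{\tree}\,\tau_L))=O(m\log n)$ updates once $\Pi_L=\Theta(\tau_0/m)$ makes $\tau_L=O(m)$; and stages $i=1,\dots,L-1$ use $\sum_i O(\tau_i\log\kappa_{i+1})$ updates. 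The one real difficulty is picking the schedule so that all of $\kappa_1=O(1)$, $\Pi_L=\Theta(\tau_0/m)$, $L=O(\log^*n)$, and $\sum_{i\ge1}\tau_i\log\kappa_{i+1}=O(\tau_0)$ hold at once. This is possible because the budget $\tau_i\log\kappa_{i+1}\le\tau_0/i^2$ (say) only demands $\log\Pi_{i+1}\lesssim\log\Pi_i+\Pi_i/i^2$ in the range $\Pi_i\le\tau_0/m$, which lets $\Pi_i$ grow at an iterated-exponential rate and thus reach $\Theta(\tau_0/m)$ after only $O(\log^*n)$ steps, while $\sum_i\tau_0/i^2=O(\tau_0)$ keeps the intermediate stages to $O(\tau_0)$ updates. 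Hence the total update count is $O(\tau_0(1+\log\epsilon^{-1})+m\log n)$; multiplying by the $O(\log n)$ per-update cost and absorbing the lower-order $O(m\log^2 n+m\log n\log^*n)$ overhead gives $O(m\log^2 n\log\log n\log\epsilon^{-1})=\runtimeFinal$.
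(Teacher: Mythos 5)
Your proof is correct and follows essentially the same strategy as the paper's: fix one low-stretch tree, scale its resistances down by a cumulative factor, run a coarse solve on the most-scaled graph, then un-scale in $O(\log^* n)$ stages, restoring a constant-factor (in-expectation) approximation at each intermediate stage via \theoremref{thm:convergence}, and finish at the original graph with an $\epsilon$-accurate batch followed by a uniformly random number $\nIter'$ of extra updates and the tree-gap conversion from \lemmaref{lemma:randomized-stopping}. The only substantive difference is how the schedule is specified: the paper picks the concrete sequence $\kappa_1 = \log n$, $\kappa_i = \log\kappa_{i-1}$ (equivalently, your $\Pi_i$ is their $\prod_{j\geq i}\kappa_j$ with the index reversed) and verifies the budget $\sum_i \tau_i\log\kappa_{i} = O(\tau_0 + m\log\log n)$ directly, whereas you postulate the budget inequality $\tau_i\log\kappa_{i+1}\le\tau_0/i^2$ and argue a satisfying schedule exists by the iterated-exponential growth it permits. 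Both get $L=O(\log^* n)$ stages, $O(\tau_0)$ intermediate updates, and the same $\runtimeFinal$ bound, so the difference is cosmetic rather than conceptual.
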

\begin{algorithm}
\SetAlgoLined
\LinesNumbered
\SetKwInOut{Input}{Input}
\SetKwInOut{Output}{Output}
\Input{$G = (V, E, r)$, $\boundary \in \rvertvec$, $\epsilon \in \rPos$}
\Output{$\flow$ and $\volt$}
\BlankLine
$T := $ low-stretch spanning tree of $G$\;
$\flow^{(0)} := $ unique flow on $T$ such that $\incMatrix^T \flow_0 = \boundary$\;
Define $\kappa^{(i)}$ and $\nIter^{(i)}$ according to \equationref{eqn:final-solver-kappa} and \equationref{eqn:final-solver-niter}\;
\For{$i = 1$ to $c+1$}
{
$\flow_0 := \flow^{(i-1)}$\;
$G^{(i)} := G - \tree + \tree^{(i)}$ where $\tree^{(i)}:=\frac{\tree}{\prod_{j=i}^{c} \kappa_j}$\;
\SetKwComment{Comment}{}{}
\SetCommentSty{textrm}
\Comment*[f]{\emph{(the symbols $R_e, r_e, \cyclePotential{e}{\cdot}$ below are with respect to this new graph $G^{(i)}$ and tree $\tree^{(i)}$)}}

$\edgeSampleProb{e} := \frac{1}{\treeCondition(\tree^{(i)})} \cdot \frac{\cycleResistance{e}}{r_e}$ for all
$e \in \offtreeEdgeSet$ \;
\For{$j = 1$ to $\nIter^{(i)}$}
{
    Pick random $e_i \in \offtreeEdgeSet$ by probability distribution
    $\sampleProbVec$ \;
    $\flow_i = \flow_{i - 1} - \frac{\cyclePotential{e}{\flow_{i - 1}}}
        {\cycleResistance{e}}
    \treeCycleVec{e}$ \;
}
$\flow^{(i)}:=\flow_{\nIter^{(i)}}$\;
}
\Return{$\flow^{(c+1)}$ and its tree induced voltages $\volt^{(c+1)}$}
\caption{\label{algm:finalalgorithm}\finalSolver}
\end{algorithm}
\noindent
We begin our proof of the above theorem by first making the parameters in $\finalSolver$ explicit. Again, $T$ is a low-stretch spanning tree of $G$ with $\treeCondition(T)=O(m\log n \log\log n)$. Let us define a sequence $(\kappa_1,\dots,\kappa_c)$ where
\begin{equation}\label{eqn:final-solver-kappa}
\kappa_1 = \log n, \quad \kappa_i = \log \kappa_{i-1}, \quad \text{and $c$ is chosen so that $\kappa_c \leq 2$.}
\end{equation}
For each $i\in[c+1]$, we define $T^{(i)}$ to be the same spanning tree $T$ but with resistances on each edge decreased by a factor of $\prod_{j=i}^{c} \kappa_j$, and let $G^{(i)}$ be the graph $G$ with its spanning tree $T$ replaced by $T^{(i)}$. Notice that $T^{(c+1)}=T$ and $G^{(c+1)}=G$. By a similar observation as in~\sectionref{sec:first-improvement}, we have that $\treeCondition(T^{(i)}) \leq \frac{\treeCondition(T)}{\prod_{j=i}^c \kappa_j} + m$ due to the scaling of the tree edges.

We now choose the number of iterations
\begin{equation}
\label{eqn:final-solver-niter}
\nIter^{(i)} =
\left\{
  \begin{array}{ll}
    \Big\lceil \big(\frac{\treeCondition(T)}{\prod_{j=1}^{c} \kappa_j} + m \big) \log(\st(T^{(1)})) \Big\rceil, & \hbox{$i=1$;} \\[1em]
    \Big\lceil \big(\frac{\treeCondition(T)}{\prod_{j=i}^{c} \kappa_j} + m \big) \log(2\kappa_{i-1}-1) \Big\rceil, & \hbox{$i=2,3,\dots,c$;} \\[1em]
    \Big\lceil \big(\treeCondition(T) + m \big) \log(\frac{2\kappa_{c}-1}{\epsilon}) \Big\rceil + \nIter', & \hbox{$i=c+1$.}
  \end{array}
\right.
\end{equation}
where $\nIter'$ is chosen from $\{0,1,\dots,\lceil \treeCondition \rceil -1 \}$ uniformly at random just like that in~\sectionref{sec:first-improvement}. Those choices induce the following property on flow vectors $\flow^{(i)}$ constructed in $\finalSolver$:
\begin{claim}
For each $i\in [c]$, $\flow^{(i)}$ is (in expectation) a $1$-approximate electrical flow for $G^{(i)}$ with demand vector $\boundary$, and also (in expectation) a $(2\kappa_i-1)$-approximate electrical flow for $G^{(i+1)}$.
\end{claim}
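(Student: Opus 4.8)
I will prove the two assertions together by induction on $i$, from $i=1$ up to $i=c$, writing $\optFlow^{(i)}$ for the optimal electrical flow of $G^{(i)}$ with demand $\boundary$ and $\xi_{r^{(i)}}(\cdot)$ for energy computed with the resistances of $G^{(i)}$ (so $G^{(c+1)}=G$ and $\optFlow^{(c+1)}=\optFlow$). The induction alternates two moves: (i) if the flow $\flow^{(i-1)}$ entering outer iteration $i$ is a good-enough approximation \emph{in $G^{(i)}$}, then the $\nIter^{(i)}$ cycle updates of that iteration are exactly the inner loop of $\simpleSolver$ run on $G^{(i)}$ with tree $T^{(i)}$, so \theoremref{thm:convergence} forces $\flow^{(i)}$ to be $1$-approximate in $G^{(i)}$; and (ii) a $1$-approximate flow in $G^{(i)}$ is automatically $(2\kappa_i-1)$-approximate in $G^{(i+1)}$, which is precisely the hypothesis needed to start iteration $i+1$. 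Since $\flow^{(i-1)}$ is random (it depends on the coin flips of iterations $1,\dots,i-1$), I apply \theoremref{thm:convergence} conditionally on those coins and then average; this is harmless because the optimal energy $\xi_{r^{(i)}}(\optFlow^{(i)})$ is a deterministic constant.

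\textbf{Move (ii): the resistance comparison.} Because $T^{(i)}=T^{(i+1)}/\kappa_i$, the graphs $G^{(i)}$ and $G^{(i+1)}$ differ only in that each tree-edge resistance is multiplied by $\kappa_i\ge 1$ in passing to $G^{(i+1)}$, while off-tree resistances agree. Hence $r^{(i)}_e\le r^{(i+1)}_e\le \kappa_i r^{(i)}_e$ for every $e$, so $\xi_{r^{(i)}}(\flow)\le \xi_{r^{(i+1)}}(\flow)\le \kappa_i\xi_{r^{(i)}}(\flow)$ for every flow $\flow$, and the left inequality applied to $\optFlow^{(i+1)}$ together with optimality of $\optFlow^{(i)}$ in $G^{(i)}$ gives $\xi_{r^{(i)}}(\optFlow^{(i)})\le\xi_{r^{(i+1)}}(\optFlow^{(i+1)})$. (This is \lemmaref{lemma:tree-scaling} with $\kappa=\kappa_i$ and $\epsilon=1$; I only need the explicit form, taking expectations through the otherwise-deterministic chain.) Consequently, once $\E[\xi_{r^{(i)}}(\flow^{(i)})]\le 2\xi_{r^{(i)}}(\optFlow^{(i)})$ is known, I get $\E[\xi_{r^{(i+1)}}(\flow^{(i)})]\le\kappa_i\E[\xi_{r^{(i)}}(\flow^{(i)})]\le 2\kappa_i\xi_{r^{(i)}}(\optFlow^{(i)})\le 2\kappa_i\xi_{r^{(i+1)}}(\optFlow^{(i+1)})$, i.e.\ $\flow^{(i)}$ is $(2\kappa_i-1)$-approximate in $G^{(i+1)}$.

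\textbf{Move (i): the two cases.} Base case $i=1$: $\flow^{(0)}$ is the unique flow supported on $T$ meeting the demands, which is also the unique such flow supported on $T^{(1)}$, so \lemmaref{lemma:initial_energy} applied in $G^{(1)}$ gives $\xi_{r^{(1)}}(\flow^{(0)})\le\st(T^{(1)})\cdot\xi_{r^{(1)}}(\optFlow^{(1)})$. Since $\treeCondition(T^{(1)})\le\treeCondition(T)/\prod_{j=1}^c\kappa_j+m$ and, by \equationref{eqn:final-solver-niter}, $\nIter^{(1)}\ge(\treeCondition(T)/\prod_{j=1}^c\kappa_j+m)\log\st(T^{(1)})\ge\treeCondition(T^{(1)})\log\st(T^{(1)})$, \theoremref{thm:convergence} together with the standard estimate $(1-1/\tau)^{\tau\log X}\le 1/X$ (here $X=\st(T^{(1)})\ge 1$) yields $\E[\xi_{r^{(1)}}(\flow^{(1)})]-\xi_{r^{(1)}}(\optFlow^{(1)})\le\xi_{r^{(1)}}(\optFlow^{(1)})$, i.e.\ $\flow^{(1)}$ is $1$-approximate in $G^{(1)}$. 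Inductive step: assume $\flow^{(i-1)}$ is (in expectation) $(2\kappa_{i-1}-1)$-approximate in $G^{(i)}$; outer iteration $i$ runs $\nIter^{(i)}\ge\treeCondition(T^{(i)})\log(2\kappa_{i-1}-1)$ cycle updates in $G^{(i)}$ from $\flow_0=\flow^{(i-1)}$, and \theoremref{thm:convergence}, applied conditionally on the earlier coins and then averaged, with $X=2\kappa_{i-1}-1$ gives $\E[\xi_{r^{(i)}}(\flow^{(i)})]\le 2\xi_{r^{(i)}}(\optFlow^{(i)})$. In either case, move (ii) upgrades $\flow^{(i)}$ to $(2\kappa_i-1)$-approximate in $G^{(i+1)}$, which closes the induction.

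\textbf{Expected main obstacle.} Every inequality above is a one-liner, so the real work is the bookkeeping, and that is where I expect the only subtlety. Concretely: (a) one must invoke \theoremref{thm:convergence}, which is stated for a deterministic initial flow, on the conditional law given the coins of iterations $1,\dots,i-1$, and check that the outer expectation passes through because $\xi_{r^{(i)}}(\optFlow^{(i)})$ is constant; (b) one must keep straight that the quantity $X$ in $(1-1/\tau)^{\tau\log X}\le 1/X$ is the \emph{same} one whose logarithm appears in $\nIter^{(i)}$ in \equationref{eqn:final-solver-niter}---$\st(T^{(1)})$ for $i=1$ and $2\kappa_{i-1}-1$ for $2\le i\le c$---and that $\nIter^{(i)}\ge\treeCondition(T^{(i)})\log X$ holds via $\treeCondition(T^{(i)})\le\treeCondition(T)/\prod_{j=i}^c\kappa_j+m$; and (c) one should verify that \equationref{eqn:final-solver-kappa} keeps $\kappa_i\ge 1$ and $2\kappa_{i-1}-1\ge 1$, so the logarithms are nonnegative and the resistance comparison runs in the stated direction.
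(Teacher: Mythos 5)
Your proof is correct and takes essentially the same route as the paper's: induction on $i$, with \lemmaref{lemma:initial_energy} giving the base case, \theoremref{thm:convergence} converting the outer iteration's cycle updates into a $1$-approximation in $G^{(i)}$, and \lemmaref{lemma:tree-scaling} (with $\epsilon=1$) transferring that to a $(2\kappa_i-1)$-approximation in $G^{(i+1)}$. Your explicit care about conditioning on the coins of earlier iterations before applying \theoremref{thm:convergence} is a nicety the paper glosses over, but it is the same argument.
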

\begin{proof}
Let $r^{(i)} \in \redgevec$ denote the resistance vector in graph $G^{(i)}$, and let $\optFlow^{(i)} \in \redgevec$ denote the electrical flow satisfying $\boundary \in \rvertvec$ in graph $G^{(i)}$. We prove the claim by induction on $i$.

In the base case ($i=1$), since $\flow^{(0)}$ is the unique flow on the tree meeting the demands \lemmaref{lemma:initial_energy} implies that $\xi_{r^{(1)}}\big( \flow^{(0)} \big) \leq \st(T^{(1)}) \xi_{r^{(1)}}\big(\optFlow^{(1)}\big)$. By our choice of $\nIter^{(1)}\geq \treeCondition(T^{(1)})\log(\st(T^{(1)}))$ and~\theoremref{thm:convergence}, we then see that $\E\big[\xi_{r^{(1)}}\big( \flow^{(1)} \big) \big] \leq (1+1) \xi_{r^{(1)}}\big(\optFlow^{(1)}\big)$ and therefore the computed $\flow^{(1)}$ is in expectation $1$-approximate for $G^{(1)}$. Furthermore, since the resistances in $T^{(1)}$ and $T^{(2)}$ are off by a multiplicative $\kappa_1 = \log n$, \lemmaref{lemma:tree-scaling} implies that $f^{(1)}$ satisfies $\E\big[\xi_{r^{(2)}}\big( \flow^{(1)} \big) \big] \leq 2\kappa_1 \xi_{r^{(2)}}\big(\optFlow^{(2)}\big)$ so is in expectation $(2\kappa_1-1)$-approximate for $G^{(2)}$.

For $i\in\{2,3,\dots,c\}$, we have by our inductive hypothesis that $\flow^{(i-1)}$ is in expectation $(2\kappa_{i-1}-1)$-approximate for $G^{(i)}$. Therefore, by our choice of $\nIter^{(i)}\geq \treeCondition(\tree^{(i)}) \log(2\kappa_{i-1}-1)$ and \theoremref{thm:convergence}, we immediately see that the constructed $\flow^{(i)}$ satisfies $\E\big[\xi_{r^{(i)}}\big( \flow^{(i)} \big) \big] \leq (1+1) \xi_{r^{(i)}}\big(\optFlow^{(i)}\big)$ and is in expectation a $1$-approximate for $G^{(i)}$. Furthermore, since the resistances in $T^{(i)}$ and $T^{(i+1)}$ are off by a multiplicative $\kappa_i$, \lemmaref{lemma:tree-scaling} implies that $f^{(i)}$ satisfies $\E\big[\xi_{r^{(i+1)}}\big( \flow^{(i)} \big) \big] \leq 2\kappa_i \xi_{r^{(i+1)}}\big(\optFlow^{(i+1)}\big)$ so is in expectation $(2\kappa_i-1)$-approximate for $G^{(2)}$.
\end{proof}

As a direct consequence of the above lemma we see that in an execution of $\finalSolver$, $\flow^{(c)}$ is in expectation a $(2\kappa_c-1)$-approximate electrical flow in graph $G=G^{(c+1)}$. Therefore, when $\flow^{(c)}$ is used as an initial, $\flow_0$, for the last iteration ($i = c + 1$), after $\nIter:=\nIter^{(c+1)}-\nIter'\geq \treeCondition(\tree) \log(\frac{2\kappa_{c}-1}{\epsilon})$ cycle updates we have
$
\E\big[\renergy{\flow_\nIter}\big] \leq (1+\epsilon) \renergy{\optFlow}
$. Therefore, by the same analysis of randomized stopping time as in~\sectionref{sec:first-improvement}, after running the final $\nIter'$ cycle updates the final flow $\flow^{(c + 1)} =\flow_{\nIter+\nIter'}$ and $\volt^{(c + 1)} = \volt_{\nIter+\nIter'}$ satisfy the error guarantee of~\theoremref{thm:final-algorithm}.
\[
\E[\renergy{\flow^{(c + 1)}}] \leq (1 + \epsilon) \cdot \renergy{\optFlow}
\enspace \text{ and } \enspace
\E\|\volt^{(c + 1)} - \lapPseudo \boundary\|_{\lap}
    \leq \sqrt{\epsilon} \cdot \|\lapPseudo \boundary\|_{\lap}
\]

To prove \theoremref{thm:final-algorithm} all that remains is to bound the running time of $\finalSolver$. As in $\simpleSolver$ each cycle update takes $O(\log n)$ time so the total running time of $\finalSolver$ is

\begin{align*}
O\left( \sum_{i=1}^{c+1} \nIter^{(i)} \log n \right)
& = \runtimeFinal + O\left( \sum_{i=2}^{c} \nIter^{(i)} \log n \right)
\enspace.
\end{align*}
To upper bound the second term we note the following
\begin{align*}
O\left( \sum_{i=2}^{c} \nIter^{(i)} \log n \right)
& = O\left( \sum_{i=2}^{c}  \Big(\frac{\treeCondition(T)}{\prod_{j=i}^{c} \kappa_j} + m \Big) \log(2\kappa_{i-1}-1) \log n \right) \\
& = O\left( \sum_{i=2}^{c}  \Big(\frac{\treeCondition(T)}{\prod_{j=i}^{c} \kappa_j} + m \Big) \kappa_i \log n \right)  \tag{using $\kappa_i = \log \kappa_{i-1}$}\\
& = O\left( \sum_{i=2}^{c}  \frac{\treeCondition(T)}{\prod_{j=i+1}^{c} \kappa_j} \log n +
            \sum_{i=2}^{c}  m \kappa_i\log n \right) \\
& = O\left( \treeCondition(T) \log n + m \kappa_2\log n \right) = O(m \log^2 n \log \log n) \enspace.
\end{align*}
The last step used that all $\kappa_j \geq 2$ so $\sum_{i=2}^{c} \frac{1}{\prod_{j=i+1}^{c} \kappa_j} \leq 1 + \frac{1}{2} + \frac{1}{4} + \cdots = O(1)$ and $\sum_{i\geq 2} \kappa_i = O(\kappa_2)$. This ends the proof of~\theoremref{thm:final-algorithm} with the desired running time $\runtimeFinal$.

\section{Numerical Stability}
\label{sec:stability}

Up until this point our analysis has assumed that all arithmetic operations are exact. In this section, we show that our algorithm is numerically stable and achieves the same convergence guarantees when implemented with finite-precision arithmetic. We start with the analysis of $\simpleSolver$.

For simplicity of exposition, we assume that the resistances $r_e$ and the coordinates of the demand vector $\boundary$ are all represented as $b$-bit integers with absolute values bounded by $N=2^b$. We show that $\simpleSolver$ works when arithmetic operations are implemented with $O(\max (b,\log n, \log 1/\epsilon))$ bits of precision (which is necessary simply to guarantee we can represent an answer meeting the required error bound).  In particular, if the entries of the input and $\epsilon$ can be stored in $\log n$-bit words,  our running time guarantees hold in the standard unit cost RAM model, which only allows arithmetic operations on $O(\log n)$-bit numbers to be done  in constant time.

We start our algorithm by multiplying the demand vector $\boundary$ by $\lceil 4mN^2/\epsilon  \rceil$, to ensure that in $\optFlow$ there exist at least $\lceil 4mN^2/\epsilon  \rceil$ total units of flow on the edges. Since the resistances are at least 1, this guarantees that $\renergy{\optFlow} \geq \lceil 4mN^2/\epsilon  \rceil$. Next, we ensure that throughout our algorithm all flow vectors $\flow_i$ are integer vectors. At each iteration of $\simpleSolver$, when we are given a primal feasible flow $\flow$ and want to compute the optimal cycle update $\alpha^* \defeq \frac{\flow^T \rMatrix \treeCycleVec{e}}{\treeCycleVec{e}^T \rMatrix \treeCycleVec{e}}$, we round this fraction to the nearest integer and suppose we pick some $\tilde{\alpha}\in\mathbb{Z}$ such that $|\alpha^*-\tilde{\alpha}|\leq \frac{1}{2}$.

Based on the modifications above, it is easy to see that our algorithm can be run on a RAM machine with word size $O(\max(b,\log n, \log 1/\epsilon))$,  since all flow values are integers bounded above by $O(\poly(N,n,1/\epsilon))$. All we have left to show is the convergence analysis for such integral updates.

We do so by strengthening our cycle update lemma, \lemmaref{lemma:cycle-update}. For each cycle update, if we add  $\tilde{\alpha}=\alpha^*(1+\delta)$ units of flow on the cycle instead of $\alpha^*$,  the corresponding energy decrease is
\begin{align*}
 \renergy{\flow - \alpha^*(1 + \delta)\treeCycleVec{e}} - \renergy{\flow}
   &=\ (\flow - \alpha^*(1 + \delta) \treeCycleVec{e})^T \rMatrix (\flow - \alpha^*(1 + \delta)\treeCycleVec{e}) - \flow^T \rMatrix \flow
   \\&=\ - 2 \alpha^*(1+\delta) \flow^T \rMatrix \treeCycleVec{e}
        + \left(\alpha^*(1+\delta)\right)^2 \treeCycleVec{e}^T \rMatrix \treeCycleVec{e}
   = - \frac{\cyclePotential{e}{\flow}^2}{\cycleResistance{e}}(1 - \delta^2) \enspace,
\end{align*}
where the last equality has used the fact that $\cycleResistance{e} = \treeCycleVec{e}^T \rMatrix \treeCycleVec{e}$. We have $|\delta| \leq \frac{1}{2\alpha^*}$, so, as long as $\alpha^*\geq 1$, the decrease in the energy decrease is at least $\frac{3}{4}$ of what it would be for $\delta=0$. We call an off-tree edge ``good'' if its $\alpha^*\geq 1$, and ``bad'' otherwise.  We can rewrite the duality gap as
\begin{align*}
\gap(\flow, \volt)
 &= \sum_{e \in E \setminus \tree} \frac{\cyclePotential{e}{\flow}^2}{r_e}
 = \sum_{e \in E \setminus \tree} \frac{\cyclePotential{e}{\flow}^2}{R_e} \frac{R_e}{r_e}
= \sum_{\substack{e \in E \setminus \tree\\\text{$e$ is bad}}} \frac{\cyclePotential{e}{\flow}^2}{R_e^2} \frac{R_e^2}{r_e}
 + \sum_{\substack{e \in E \setminus \tree\\\text{$e$ is good}}} \frac{\cyclePotential{e}{\flow}^2}{R_e} \frac{R_e}{r_e} \\
 &\leq mN^2 + \sum_{\substack{e \in E \setminus \tree\\\text{$e$ is good}}} \frac{\cyclePotential{e}{\flow}^2}{R_e} \frac{R_e}{r_e} \enspace.
\end{align*}

As a consequence, if one  samples\footnote{Recall that \emph{exact} sampling can be done in \emph{expected} constant time on a machine with $O(\log n)$-sized words.} each tree cycle $\treeCycleVec{e}$ with probability to $\frac{R_e}{r_e\treeCondition}$, then the expected energy decrease is at least:
\begin{align*}
 \E&\left[\renergy{\flow_{i}} - \renergy{\flow_{i-1}} \middle| \gap(\flow_{i-1},\volt_{i-1})\right]
\leq \sum_{\substack{e\in E\setminus \tree\\\text{$e$ is good}}} \left( \frac{R_e}{r_e\treeCondition} \right)\left( - \frac{\cyclePotential{e}{\flow}^2}{\cycleResistance{e}}\frac{3}{4}\right) \\
\leq& -\frac{\left(\gap(\flow_{i-1},\volt_{i-1}) - mN^2\right)}{4\treeCondition/3}
\leq -\frac{\left(\renergy{\flow_{i-1}}-\renergy{\optFlow} - mN^2\right)}{4\treeCondition/3} \enspace.
\end{align*}

If one defines a random variable $D_i\defeq \renergy{\flow_i} - \renergy{\optFlow} - mN^2$, then
$$ \E[D_i | D_{i-1}] \leq \left(1-\frac{1}{4\treeCondition/3}\right)D_{i-1} \enspace.$$
Therefore, using the same analysis as before, after $\nIter=\frac{4\treeCondition}{3} \log\frac{2\stretchTotal{\tree}}{\epsilon \cdot p}$, we have   with probability at least $1-p$ that:
\begin{align*}
\renergy{\flow_\nIter} \leq \left(1+\frac{\epsilon}{2}\right) \left(\renergy{\optFlow} + mN^2\right) \leq (1+\epsilon)\renergy{\optFlow}\enspace.
\end{align*}

We note that the above analysis extends to $\betterSolver$ and $\finalSolver$. In fact, one can always round those scaling factor $\kappa$'s to their nearest integers and verify that the same analysis for running time and convergence holds. This ensures that throughout our algorithm each graph $G^{(i)}$ has its edge resistances all being rational numbers with the same integer denominator that is bounded above by $\tilde{O}(\log n)$. Therefore, when solving the approximate electrical flow for each graph $G^{(i)}$, we can simply scale up all resistances by their common denominator, and this reduces the analysis on numerical stability to the earlier case of $\simpleSolver$.

\section{A Geometric Interpretation via Alternating Projections} \label{sec:geoview}

In this section, we present a geometric interpretation of our algorithm and provide an alternative analysis of its convergence based on the Method of Alternating Projections~\cite{BoydV, EscalanteR}.
This method provides a generic framework for iteratively finding a point in the intersection of a number of convex constraint sets by sequentially projecting the current solution onto a set at the time.
The specialization of this method to systems of linear equations is known as the Kaczmarz method~\cite{Kaczmarz} and is commonly used in practice, especially in the field of Computerized Tomography~\cite{Natterer} to solve large linear systems efficiently.

An important contribution towards a better understanding of the Kaczmarz method was given by Strohmer and Vershynin~\cite{SV}, who showed that a randomized version of such method converges exponentially with a rate equal to
the scaled condition number of the system.
We remark that $\simpleSolver$ can be interpreted as applying this randomized Kaczmarz method to the scaled system of KPL equations:
$$ \forall e \in \offtreeEdgeSet \enspace , \enspace \frac{1}{\sqrt{r_e}} \flow^T \rMatrix \circVec_{e} = 0\enspace.
$$
Our analysis effectively shows that the tree condition number $\tau$ plays exactly the role of the scaled condition number of the system in the randomized Kaczmarz method.

\subsection{Geometric View}

Given the low-stretch spanning tree $T$ and the corresponding basis $\{\circVec_e\}_{e \in \offtreeEdgeSet}$ of the cycle space of $G$, let $P_e \defeq \{\flow \in \R^E : \flow^T \rMatrix \circVec_e = 0\}$ be the hyperplane of flows $\flow$ respecting the KPL condition over circulation $\circVec_e$.
Then, the optimality condition with respect to a basis of the cycle space can be viewed as requiring the flow $\flow$ to be in the intersection of the $(m-n+1)$ hyperplanes $\{P_e\}_{e \in \offtreeEdgeSet}$.

From this geometric perspective, at every iteration, our algorithm picks a hyperplane $P_e$ associated with a basis vector $\circVec_e$, and projects the current flow $\flow_i$ onto $P_e$.
Formally, we can check that, at iteration $i,$ our algorithms applies to the current flow the projection $\Pi_{e_i}$ onto $P_{e_i}$:
\begin{equation}\label{eq:cycleproj}
\Pi_{e_i} \flow_i = \bigg(\iMatrix - \frac{\circVec_{e_i} \circVec_{e_i}^T \rMatrix} {\| \circVec_{e_i} \|^2_\rMatrix}\bigg) \flow_i = \flow_i - \frac{\flow_i^T \rMatrix \circVec_{e_i}}{\| \circVec_{e_i} \|^2_\rMatrix} \circVec_{e_i}  = \flow_{i+1}
\enspace.
\end{equation}
Notice that, as this update adds a circulation to $\flow_i$, the resulting flow $\flow_{i+1}$ meets the demands $\boundary$.

Our analysis shows that by iteratively projecting $\flow_i$ onto different hyperplanes in $\{P_e\}_{e \in \offtreeEdgeSet}$, the final flow can be made arbitrarily close to the intersection $\bigcap_{e \in \offtreeEdgeSet} P_e$, i.e. the unique electrical flow.

\subsection{Alternating Projections}

The geometric interpretation above casts our algorithm as an instance of the Method of Alternating Projections~\cite{BoydV, EscalanteR}. This method is an algorithmic paradigm used to solve the following generic problem. Given $k$ constraints in the form of convex sets $S_1, \ldots, S_k \in \R^n$ such that $\bigcap_{i=1}^k S_i$ is non-empty, and we have the ability to project onto each individual $S_i,$ i.e. given $x \in \R^n$ we can efficiently compute the projected point $\Pi_i(x) \in S_i$ that is closest to $x$. The goal is to find (or come arbitrarily close to) a point in  $\bigcap_{i=1}^k S_i$.

A generic alternating-projection algorithm solves this problem by iteratively applying a sequence of projections $\Pi_{i_1}, \Pi_{i_2}, \ldots$ to a starting point $x_0 \in \R^n$. Given the current solution $x_t,$ the next iterate is defined as $x_{t+1} = \Pi_{i_t}(x_t)$.
In the case of our electrical-flow algorithm, the convex sets $\{S_i\}$ consist of the hyperplanes $\{P_e\}_{e \in E\setminus T}$ and the notion of distance is given by the resistance norm  $\|x\|_\rMatrix = \sqrt{x^T \rMatrix x}$.

The analysis of the convergence of alternating-projection algorithms exploits the convexity of the sets to argue that, at each iteration $t$, for any $x^*$ in the target set $\bigcap_{i=1}^n S_i,$
\begin{equation}
\|x_{t+1} - x^*\|^2 \leq \|x_{t} - x^* \|^2 - \|x_{t+1} - x_{t}\|^2 \enspace. \label{eqn:altproj}
\end{equation}
In words, the additive progress in the direction of $x^*$ in one iteration is at least the squared distance between $x_{t+1}$ and $x_{t},$ i.e. the amount our solution has moved as a result of the projection.
Hence, the convergence is fastest when we are able to pick the next projection to be onto a constraint that is substantially violated.

\subsection{A Randomized Kaczmarz Algorithm}
When the convex sets are hyperplanes, the alternating-projection algorithm specializes to the Kaczmarz method~\cite{Kaczmarz}. Different variants of this method have proved effective at solving systems of sparse linear equations in computerized tomography~\cite{Natterer} and image reconstruction~\cite{Herman}.

The convergence analysis still relies on \equationref{eqn:altproj}. For all $e \in \offtreeEdgeSet,$ define the normalized vector $\hat{c}_e \defeq \frac{\circVec_e}{\| \circVec_e\|_\rMatrix}$.
For the specific case of our algorithm, \equationref{eqn:altproj}  yields
\begin{equation} \label{eqn:potential}
\|\flow_{t} - \optFlow\|_\rMatrix^2 - \|\flow_{t+1} - \optFlow\|_\rMatrix^2 \geq \|\flow_{t+1} - \flow_t\|_\rMatrix^2 = (\flow_t^T \rMatrix \hat{c}_{e_t})^2 \enspace.
\end{equation}
Equivalently, we make fast progress at iteration $t$ if we project over $P_{e_t}$ such that $(\flow_t^T \rMatrix \hat{c}_{e_t})^2$ is large with respect to our current distance from optimal.

Recalling that, by \lemmaref{lemma:kpl} at optimality $\optFlow^T \rMatrix \circVec =0$ for all circulations $\circVec$, we can also rephrase our task as finding a basis direction $\hat{c}_{e}$, $e \in \offtreeEdgeSet$, such that $\big( (\flow_t - \optFlow)^T \rMatrix \hat{c}_{e} \big)^2$ is a large fraction of $\|\flow_{t} - \optFlow\|_\rMatrix^2,$ i.e. $\hat{c}_{e}$ captures a large component of the current error $\flow_t - \optFlow$.

This observation raises two main challenges. First, there may be no basis direction $\hat{c}_e$ that is well correlated with the error. Second, even if such a basis direction exists, the computation required to find it may be too slow for our purposes, e.g. it might require looking at $\circVec_e$ for all $e \in \offtreeEdgeSet$.

Our algorithm solves both these issues by a randomized approach, that is to sample a direction $\hat{c}_e$ for $e \in \offtreeEdgeSet$  with probability $p_e$ proportional to $\frac{R_e}{r_e}$, and then apply the corresponding projection $\Pi_e$. We prove that this makes sufficient progress in expectation by exploiting the fact that the basis $\{\circVec_e\}_{e \in E\setminus T}$ comes from a low-stretch spanning tree. In particular, the following property of the basis $\{\circVec_e\}_{e \in \offtreeEdgeSet}$ is crucial to our argument, and its proof is similar to that of \lemmaref{lemma:expected_progress}.

\begin{lemma} \label{lemma:geocondition}
For any circulation $\vec{g},$ we have
\begin{equation}
\sum_{e \in \offtreeEdgeSet} p_e (\vec{g}^T \rMatrix \hat{c}_{e})^2 \geq \frac{\|\vec{g}\|^2_\rMatrix}{\treeCondition}
\enspace.
\label{eqn:treecover}
\end{equation}
\end{lemma}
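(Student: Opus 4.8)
The plan is to unpack the left-hand side using the definitions of \sectionref{sec:pre-cycles} and \sectionref{sec:algorithm} — just as in the proof of \lemmaref{lemma:expected_progress} — and then reduce the claim to a one-line inequality that falls out of the duality theory of \sectionref{sec:pre-duality}. Recall that $p_e = \frac{1}{\treeCondition}\cdot\frac{\cycleResistance{e}}{r_e}$, that $\hat c_e = \circVec_e/\|\circVec_e\|_\rMatrix$ with $\|\circVec_e\|_\rMatrix^2 = \circVec_e^T\rMatrix\circVec_e = \cycleResistance{e}$, and that $\vec g^T\rMatrix\circVec_e = \cyclePotential{e}{\vec g}$ by the definition of the cycle potential. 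Substituting all of this,
\[
\sum_{e\in\offtreeEdgeSet} p_e\,(\vec g^T\rMatrix\hat c_e)^2
= \sum_{e\in\offtreeEdgeSet} \frac{1}{\treeCondition}\cdot\frac{\cycleResistance{e}}{r_e}\cdot\frac{(\vec g^T\rMatrix\circVec_e)^2}{\cycleResistance{e}}
= \frac{1}{\treeCondition}\sum_{e\in\offtreeEdgeSet}\frac{\cyclePotential{e}{\vec g}^2}{r_e}\enspace,
\]
so the lemma is equivalent to the assertion that $\sum_{e\in\offtreeEdgeSet}\frac{\cyclePotential{e}{\vec g}^2}{r_e}\ge\|\vec g\|_\rMatrix^2$ for every circulation $\vec g$.

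To establish this remaining inequality I would reuse the computation in the proof of \lemmaref{lemma:tree_gap_formula}. Let $\volt\in\rvertvec$ be the tree induced voltages of $\vec g$ (\definitionref{def:tree-induced-voltages}). Exactly as there, uniqueness of tree paths gives $\potentialdrop{\volt}{a,b}=\sum_{e\in\treePath{(a,b)}}\vec g(e) r_e$, so $\vec g(e) r_e - \potentialdrop{\volt}{e}=0$ for $e\in\tree$ and $\vec g(e) r_e-\potentialdrop{\volt}{e}=\cyclePotential{e}{\vec g}$ for $e\in\offtreeEdgeSet$. Adding the vanishing tree-edge terms and using $\lap=\incMatrix^T\rMatrix^{-1}\incMatrix$,
\[
\sum_{e\in\offtreeEdgeSet}\frac{\cyclePotential{e}{\vec g}^2}{r_e}
= \big(\rMatrix\vec g-\incMatrix\volt\big)^T\rMatrix^{-1}\big(\rMatrix\vec g-\incMatrix\volt\big)
= \vec g^T\rMatrix\vec g - 2(\incMatrix^T\vec g)^T\volt + \volt^T\lap\volt\enspace.
\]
Since $\vec g$ is a circulation, $\incMatrix^T\vec g=0$, the middle term drops, and the right-hand side equals $\|\vec g\|_\rMatrix^2+\|\volt\|_\lap^2\ge\|\vec g\|_\rMatrix^2$, as required. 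Combining with the displayed identity for the left-hand side yields the factor $1/\treeCondition$ and completes the proof.

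An even shorter variant of the second step: view $\vec g$ as a feasible flow for the zero demand vector; then the electrical flow is $\vec 0$ with $\renergy{\vec 0}=0$, and \lemmaref{lemma:tree_gap_formula} together with the fact that the duality gap upper-bounds $\renergy{\vec g}-\renergy{\optFlow}$ gives $\sum_{e\in\offtreeEdgeSet}\frac{\cyclePotential{e}{\vec g}^2}{r_e}=\gap(\vec g,\volt)\ge\renergy{\vec g}=\|\vec g\|_\rMatrix^2$ at once. I do not anticipate a genuine obstacle here; the only points requiring care are tracking the normalization of $\hat c_e$ (which exactly cancels the $\cycleResistance{e}$ in $p_e$) and noting that the cross term $(\incMatrix^T\vec g)^T\volt$ vanishes precisely because $\vec g$ is a circulation — that is exactly where the circulation hypothesis of the lemma is used.
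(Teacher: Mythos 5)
Your proof is correct, but it takes a genuinely different route from the paper's. After the same normalization step (absorbing $\cycleResistance{e}$ into $p_e$), the paper argues from first principles: it writes $\vec{g}^T\rMatrix\treeCycleVec{e}=g_er_e-\vec{g}^T\rMatrix\treePathVec{e}$, applies the elementary inequality $(x-y)^2\ge x^2-2xy$, and then uses the circulation hypothesis to identify $-\sum_{e\in\offtreeEdgeSet}g_e\treePathVec{e}$ with the restriction of $\vec g$ to the tree edges, so the cross term contributes $2\sum_{e\in \tree}r_eg_e^2$ and the bound follows. You instead invoke \lemmaref{lemma:tree_gap_formula} together with the weak-duality machinery of \sectionref{sec:pre-duality}: taking $\vec g$ as a feasible flow for the zero demand vector (so $\optFlow=\vec 0$), with $\volt$ its tree-induced voltages, you get $\sum_{e\in\offtreeEdgeSet}\cyclePotential{e}{\vec g}^2/r_e=\gap(\vec g,\volt)\ge\renergy{\vec g}=\|\vec g\|_\rMatrix^2$ in one line. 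What the paper's proof buys is self-containment, since \sectionref{sec:geoview} is meant to be a stand-alone geometric re-derivation that does not lean on \sectionref{sec:convergence}; what yours buys is brevity and a cleaner conceptual reading (the lemma is exactly weak duality specialized to circulations, and the slack in the bound is seen to be $\|\volt\|_\lap^2$ rather than the less transparent $\sum_{e\in \tree}r_eg_e^2$). Both arguments correctly isolate the circulation hypothesis as the point where $\incMatrix^T\vec g=0$ enters.
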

\begin{proof}
Recall that, for $e = (a,b) \in \offtreeEdgeSet$, $\treePathVec{e} \in \redgevec$ is the vector corresponding to the unique path from $a$ to $b$ by $\tree$.
By the definition of $\treeCycleVec{e},$ we have:
\begin{align*}
\sum_{e \in \offtreeEdgeSet} p_e (\vec{g}^T \rMatrix \treeCycleVec{e})^2 &=
\frac{1}{\treeCondition} \sum_{e \in \offtreeEdgeSet} \frac{\left( g_e r_e + \vec{g}^T \rMatrix (- \treePathVec{e}) \right)^2}{r_e} \\
\geq \frac{1}{\treeCondition} \left(\sum_{e \in \offtreeEdgeSet} r_e g_e^2
- 2 \sum_{e \in \offtreeEdgeSet} g_e \cdot (\vec{g}^T \rMatrix \treePathVec{e}) \right)
&= \frac{1}{\treeCondition} \left(\sum_{e \in \offtreeEdgeSet} r_e g_e^2
+ 2 \vec{g}^T \rMatrix \bigg(- \sum_{e \in \offtreeEdgeSet} g_e \treePathVec{e} \bigg)\right).
\end{align*}
Since $\vec{g}$ is a circulation, its values on the off-tree edges determines its values on the tree edges. Consequently,
$$
\forall e' \in E \enspace \text{:} \enspace
\left(- \sum_{e \in \offtreeEdgeSet} g_e \vec{p}_e\right)_{e'} =
\begin{cases}
	g_{e'} & e' \in T\\
	0		 & e' \notin T\\
\end{cases}
\enspace.
$$
This yields:
$$
\sum_{e \in \offtreeEdgeSet} p_e (\vec{g}^T \rMatrix \hat{c}_{e})^2 \geq
  \frac{1}{\tau} \left(\sum_{e \in \offtreeEdgeSet} r_e g_e^2
+ 2 \sum_{e \in T} r_{e} g_{e}^2 \right) \geq \frac{\|\vec{g}\|^2_\rMatrix}{\treeCondition}.
$$
\end{proof}

The geometric meaning of this lemma is that, in expectation under distribution $\vec{p}$, any circulation $\circVec$ has correlation at least $\frac{1}{\tau}$ with the basis directions $\{\circVec_e\}_{e \in E\setminus T}$.
We remark here that the uniform distribution over an arbitrary orthonormal basis of the cycle space would satisfy \equationref{eqn:treecover} with $\treeCondition$ replaced by the number of dimensions $(m - n + 1)$, and would yield the best possible correlation bound for our randomized approach. However, it is not known how to produce such a basis efficiently enough. Moreover, we also rely on the compact representation of our basis $\{\circVec_e\}_{e \in E\setminus T}$ as a spanning tree to design our data structure in \sectionref{sec:datastructure}.
By \lemmaref{lemma:geocondition}, we obtain that our expected progress is
\begin{align*}
\E_{e_t \leftarrow \vec{p}} \left[ \big( (\flow_t - \optFlow)^T \rMatrix \hat{c}_{e_t} \big)^2 \right] =
\sum_{e \in \offtreeEdgeSet} p_e \big((\flow_t - \optFlow)^T \rMatrix \hat{c}_{e}\big)^2 \geq
 \frac{1}{\treeCondition} \cdot \|\flow_t - \optFlow \|^2_\rMatrix\enspace.
\end{align*}
Hence, by \equationref{eqn:potential}, the expected distance of our current solution from optimal decreases as
\begin{equation}
\E \left[ \|\flow_{t} - \optFlow\|_\rMatrix^2 \right] \leq \Big(1 - \frac{1}{\treeCondition}\Big)^t \|\flow_{0} - \optFlow\|_\rMatrix^2 \enspace. \label{eqn:potred}
\end{equation}
To bound our initial distance from optimum, we we note that for any feasible $\flow \in \redgevec$ the optimality of $\optFlow$ implies
$$
\|\flow\|^2_\rMatrix = \|\optFlow\|^2_\rMatrix + \|\flow - \optFlow\|^2_\rMatrix \enspace.
$$
Then, by \lemmaref{lemma:initial_energy}, it must be the case that  $\|\flow_0 - \optFlow\|^2_\rMatrix \leq \stretchTotal{T} \cdot \|\optFlow\|^2_\rMatrix$.
Combined with \equationref{eqn:potred}, this shows that after $\nIter = \treeCondition \log \frac{\stretchTotal{\tree}}{\epsilon}$ iterations,
$ \E \left[ \|\flow_{\nIter} - \optFlow\|_\rMatrix^2 \right]\leq \epsilon \|\optFlow\|^2_\rMatrix$.

\section{An Operator View: Linearity and Approximating \texorpdfstring{$\pseudo{\lap}$}{The Laplacian Pseudoinverse}}
\label{sec:op-view}

In this section, we depart from the electrical perspective of \sectionref{sec:algorithm} and the geometric one of \sectionref{sec:geoview} to present an interpretation of the $\simpleSolver$ algorithm as a composition of simple linear operators.
The main results of this section are \theoremref{thm:linearity}, which shows the linearity of $\simpleSolver$, and \theoremref{thm:projapprox} and its corollaries, which prove that the linear operator corresponding to an execution of $\simpleSolver$ is an approximation of the Laplacian pseudoinverse with large probability.

\subsection{Linearity of \texorpdfstring{$\simpleSolver$}{simpleSolver}}

In \sectionref{sec:geoview} we showed that each cycle update of $\simpleSolver$ is the application of projector $\Pi_{e_i}$ to $\flow_i$ where $e_i$ is the off-tree edge sample in iteration $i$ and the projectors are defined by
$$
\forall e \in \offtreeEdgeSet \enspace : \enspace
\Pi_e \defeq
\bigg(\iMatrix - \frac{\treeCycleVec{e} \treeCycleVec{e}^T \rMatrix} {\| \circVec_{e} \|^2_\rMatrix}\bigg)\enspace.
$$
The projectors $\{\Pi_e\}_{e \subset \offtreeEdgeSet}$ as well as the incidence matrix for the spanning tree $T$, denoted $\incMatrix_T \in \R^{E \times V}$, and the Laplacian of $\tree$, denoted $\lap_T \defeq \incMatrix_T^T \rMatrix^{-1} \incMatrix_T$, give us the necessary building blocks to describe $\simpleSolver$ as a linear operator.

\begin{theorem}
\label{thm:linearity}
Consider an execution of $\simpleSolver$ on demand vector $\boundary$ for $\nIter$ iterations of cycle updates.
Let $e_1, \ldots, e_\nIter$ be the random off-tree edges picked from the
probability distribution $\{p_e\}$ by the algorithm.
Then, the final flow $\flow_\nIter$ and tree induced voltages $\volt_\nIter$ are given by
\[
\flow_\nIter = \mvar{F}_\nIter \boundary \enspace, \enspace \volt_\nIter = \mvar{V}_\nIter \boundary,
\]
where the \emph{flow operator} $\mvar{F}_\nIter$ and the \emph{voltage operator} $\mvar{V}_\nIter$ are defined as:
\[
\mvar{F}_\nIter \defeq \left(\prod_{j = 1}^\nIter \Pi_{e_j} \right) \rMatrix^{-1} \incMatrix_T \pseudo{\lap_T}
\enspace, \enspace
 \mvar{V}_\nIter \defeq \pseudo{\lap_T} \incMatrix_T^T  \left(\prod_{j = 1}^\nIter \Pi_{e_j} \right)  \rMatrix^{-1}
\incMatrix_T
\pseudo{\lap_T}.
\]
\end{theorem}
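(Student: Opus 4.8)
The plan is to peel the algorithm into three linear pieces --- the construction of $\flowInitial$ from $\boundary$, the sequence of cycle updates, and the passage to tree induced voltages --- and then compose them, reusing facts already established in \sectionref{sec:geoview} and in the proof of \lemmaref{lemma:tree_gap_formula}.

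\emph{Step 1: $\flowInitial$ is linear in $\boundary$.} I would show $\flowInitial = \rMatrix^{-1}\incMatrix_T\pseudo{\lap_T}\boundary$. Put $\volt^\star \defeq \pseudo{\lap_T}\boundary$; since $G$ is connected, $\ker\lap_T = \mathrm{span}(\vec 1)$, and since $\sum_v \boundary(v) = 0$ we have $\lap_T\volt^\star = \boundary$. The flow $\rMatrix^{-1}\incMatrix_T\volt^\star$ is supported on $\tree$ because the off-tree rows of $\incMatrix_T$ vanish, and it is feasible: being tree-supported, $\incMatrix^T(\rMatrix^{-1}\incMatrix_T\volt^\star) = \incMatrix_T^T\rMatrix^{-1}\incMatrix_T\volt^\star = \lap_T\volt^\star = \boundary$. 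As a tree carries a unique feasible flow, this vector is exactly $\flowInitial$.

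\emph{Step 2: the cycle updates.} By \equationref{eq:cycleproj}, iteration $j$ of $\simpleSolver$ sends $\flow_{j-1}$ to $\flow_j = \Pi_{e_j}\flow_{j-1}$. Composing over $j=1,\dots,\nIter$ (with the product ordered so that $\Pi_{e_1}$ is applied first) and substituting Step 1 gives $\flow_\nIter = \big(\prod_{j=1}^\nIter \Pi_{e_j}\big)\rMatrix^{-1}\incMatrix_T\pseudo{\lap_T}\boundary = \mvar{F}_\nIter\boundary$.

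\emph{Step 3: the tree induced voltages.} The proof of \lemmaref{lemma:tree_gap_formula} already shows that for any $\flow$ its tree induced voltages $\volt$ satisfy $\flow(e)r_e = \potentialdrop{\volt}{e}$ for every $e\in\tree$; equivalently the flow induced by $\volt$ agrees with $\flow$ on $\tree$, so $\incMatrix_T^T\rMatrix^{-1}\incMatrix_T\volt = \incMatrix_T^T\flow$, i.e. $\lap_T\volt = \incMatrix_T^T\flow$. Hence $\volt$ and $\pseudo{\lap_T}\incMatrix_T^T\flow$ differ by an element of $\ker\lap_T = \mathrm{span}(\vec 1)$, i.e. by a global additive constant, which is immaterial (cf. the footnote to \definitionref{def:tree-induced-voltages}). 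Taking $\flow = \flow_\nIter$ and inserting $\flow_\nIter = \mvar{F}_\nIter\boundary$ yields $\volt_\nIter = \pseudo{\lap_T}\incMatrix_T^T\mvar{F}_\nIter\boundary = \mvar{V}_\nIter\boundary$.

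I do not expect a genuine obstacle here; the only points needing care are the pseudoinverse bookkeeping --- $\pseudo{\lap_T}\lap_T$ is the orthogonal projection onto $\vec 1^{\perp}$, which is precisely why feasibility in Step 1 and the voltage identity in Step 3 hold only modulo $\vec 1$ --- and being explicit about the order in which the projectors $\Pi_{e_j}$ are composed.
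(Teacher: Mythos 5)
Your proposal is correct and follows the paper's own proof essentially step for step: express $\flow_0$ as $\rMatrix^{-1}\incMatrix_T\pseudo{\lap_T}\boundary$, compose the cycle-update projectors $\Pi_{e_j}$, and pass to voltages via $\pseudo{\lap_T}\incMatrix_T^T$. You supply detail the paper leaves implicit (in particular that the tree induced voltages agree with $\pseudo{\lap_T}\incMatrix_T^T\flow_\nIter$ only up to a multiple of $\vec{1}$, which is harmless for the way the theorem is used), but the route is the same.
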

\begin{proof}
The initial flow $\flow_0$ is the electrical flow over the tree $T$ and therefore
$
    \flow_0 = \rMatrix^{-1} \incMatrix_T \pseudo{\lap_T} \boundary
$.
Since applying $\Pi_e$ corresponds to a cycle update we have
$
    \flow_{\nIter} = \left(\prod_{j = 1}^\nIter \Pi_{e_j} \right) f_0
$.
Furthermore, we see that the final tree induced voltages for $\flow_{\nIter}$ can be obtained by
$
\volt_{\nIter} =  \lapPseudo \incMatrix_T^T \flow_{\nIter}
$. Combining these facts yields the result.

\end{proof}

The ideas of \theoremref{thm:linearity} apply to $\finalSolver$ and one can show that for a fixed choice of sampled edges $\finalSolver$ is also linear operator. Furthermore, for $\simpleSolver$ we see that we could make the flow and voltage operators symmetric without a loss in asymptotic running time or error guarantees by simply applying the projections a second time in reverse.

\subsection{Approximation of Laplacian Pseudoinverse}
\newcommand{\Proj}{\tilde{\Pi}}
\newcommand{\norm}[1]{\left\lvert \left\lvert #1 \right\rvert \right\rvert}

In this section, we show that the flow and voltage operators $\mvar{F}_\nIter$ and $\mvar{V}_\nIter$ arising from $\simpleSolver$  yield an approximate representation of the pseudoinverse $\lap^\dagger$ and related operators as the composition of a number of simple linear operators.
Our notion of approximation will be with respect to the Frobenius norm. We denote the Frobenius norm of a matrix $\mvar{X}$ by $\norm{\mvar{X}}_F$ and recall that
$\norm{\mvar{X}}_F^2 \defeq \Tr(\mvar{X}^T \mvar{X}).$

For the purpose of this section, the natural notion of inner product in $\R^E$ is given by the resistance operator $\rMatrix.$ To make the proof simpler to follow, we scale some of the operators defined so far as to be able to use the standard notion of inner product. We define
$$
\forall e \in \offtreeEdgeSet ~ : ~
\Proj_e \defeq \rMatrix^{1/2} \Pi_e \rMatrix^{-1/2} = \bigg(\iMatrix - \frac{\rMatrix^{1/2} \circVec_{e} \circVec_{e}^T \rMatrix^{1/2}}{\| \circVec_{e} \|^2_\rMatrix} \bigg)\enspace.
$$
This shows that $\Proj_e$ is an orthogonal projection. Equivalently, before the scaling, the operator $\Pi_e$ is an orthogonal projection with respect to the inner product defined by $\rMatrix.$

The main theorem of this section involves showing a relation between the operator
$\left(\prod_{j = 1}^\nIter \Proj_{e_j} \right),$ which describes the cycle updates performed by $\simpleSolver$, and the projection operator $\Pi_G \in \R^{E \times E},$ defined as:
$$
\Pi_G \defeq \rMatrix^{-1/2} \incMatrix \pseudo{\lap} \incMatrix^T \rMatrix^{-1/2}\enspace.
$$
The projection $\Pi_G$ takes any flow $\rMatrix^{1/2} \flow$ and projects it orthogonally onto the subspace of KPL-respecting flows, i.e. electrical flows.
The operator $\Pi_G$ plays an important role in the spectral study of graphs and is the main object of study in spectral sparsification~\cite{spielman2011graph}.

\begin{theorem}\label{thm:projapprox}
Let $e_1, \ldots, e_\nIter$ be random off-tree edges picked independently from the $\{p_e\}$
probability distribution.
Let $\nIter = \lceil \treeCondition \log \frac{n}{\epsilon p} \rceil.$
Then, with probability at least $1-p$,
$$
\norm{\left(\prod_{j = 1}^\nIter \Proj_{e_j} \right) - \Pi_G}^2_F \leq \epsilon \enspace.
$$
\end{theorem}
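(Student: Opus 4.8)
The plan is to leverage that every operator involved is an orthogonal projection, which collapses the Frobenius-norm statement into a sum of scalar contraction recursions. Let $W \defeq \mathrm{span}\{\rMatrix^{1/2} \circVec_e : e \in \offtreeEdgeSet\} \subseteq \R^E$ be the \emph{scaled cycle space}; it has dimension $m-n+1$, and the unit vectors $\hat u_e \defeq \rMatrix^{1/2}\circVec_e / \norm{\circVec_e}_\rMatrix$ span it. Then $\Proj_e = \iMatrix - \hat u_e \hat u_e^T$ is the orthogonal projection annihilating $\hat u_e$, and — since $\mathrm{im}(\rMatrix^{-1/2}\incMatrix) = W^\perp$ — the operator $\Pi_G$ is exactly the orthogonal projection onto $W^\perp$, so $\iMatrix - \Pi_G$ is the orthogonal projection onto $W$. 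Each $\Proj_e$ therefore fixes $W^\perp$ pointwise (because $\hat u_e \in W$) and maps $W$ into itself; hence so does $Q \defeq \prod_{j=1}^\nIter \Proj_{e_j}$, and a short check gives the factorization $Q - \Pi_G = Q(\iMatrix - \Pi_G)$. Picking an orthonormal basis $w_1, \ldots, w_{m-n+1}$ of $W$, this yields $\norm{Q - \Pi_G}_F^2 = \sum_{k} \norm{Q w_k}_2^2$.

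Next I would establish the decay of each term $\norm{Q w_k}_2^2$. Fix a unit vector $w \in W$ and set $x_t \defeq \prod_{j=1}^t \Proj_{e_j} w \in W$. Since $\Proj_{e_{t+1}}$ is the orthogonal projection onto the hyperplane $\{x : \hat u_{e_{t+1}}^T x = 0\}$, which passes through the origin, the Pythagorean identity (i.e.\ \equationref{eqn:altproj} applied with target point $\vvar{0}$) gives $\norm{x_{t+1}}_2^2 = \norm{x_t}_2^2 - (x_t^T \hat u_{e_{t+1}})^2$. Taking the conditional expectation over $e_{t+1} \sim \{p_e\}$ and invoking \lemmaref{lemma:geocondition} in its rescaled form — namely $\sum_{e \in \offtreeEdgeSet} p_e (x^T \hat u_e)^2 \geq \norm{x}_2^2 / \treeCondition$ for every $x \in W$, which is immediate from the identity $\vec g^T \rMatrix \hat c_e = (\rMatrix^{1/2}\vec g)^T \hat u_e$ — gives $\E[\norm{x_{t+1}}_2^2 \mid x_t] \leq (1 - \tfrac{1}{\treeCondition}) \norm{x_t}_2^2$. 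Iterating over $t = 0, \ldots, \nIter - 1$ and taking total expectations gives $\E[\norm{x_\nIter}_2^2] \leq (1 - \tfrac1\treeCondition)^\nIter$.

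Finally, summing this bound over $w \in \{w_1, \ldots, w_{m-n+1}\}$ gives $\E[\norm{Q - \Pi_G}_F^2] \leq (m-n+1)(1 - \tfrac1\treeCondition)^\nIter \leq (m-n+1)\, e^{-\nIter/\treeCondition}$. With $\nIter = \lceil \treeCondition \log\tfrac{n}{\epsilon p} \rceil$ — absorbing the $m-n+1 = \mathrm{poly}(n)$ factor into a constant inside the logarithm, which leaves $\nIter = O(\treeCondition \log \tfrac n{\epsilon p})$ — this expectation is at most $\epsilon p$, and Markov's inequality then yields $\Pr[\norm{Q - \Pi_G}_F^2 > \epsilon] \leq p$, as claimed.

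The only delicate point is the structural reduction in the first paragraph: identifying $\Pi_G$ with the orthogonal projection onto $W^\perp$ and verifying $Q - \Pi_G = Q(\iMatrix - \Pi_G)$, since this is exactly what turns a statement about an operator into a sum of $m-n+1$ decoupled scalar recursions. After that, the argument is precisely the randomized-Kaczmarz energy-decay analysis of \sectionref{sec:geoview} run in parallel on an orthonormal basis of the cycle space, with \lemmaref{lemma:geocondition} supplying the per-step contraction and the crude initial-energy estimate replaced by the trivial unit normalization of the basis vectors. A minor point worth stating explicitly is that the Pythagorean step requires each projection hyperplane to be linear (to pass through $\vvar{0}$), which holds because the $P_e$ are subspaces.
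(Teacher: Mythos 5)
Your argument is correct, but it takes a genuinely different route from the paper's. The paper works entirely at the operator level: it sets $\Phi_t \defeq \|\prod_{j\le t}\tilde\Pi_{e_j} - \Pi_G\|_F^2$, reduces it to a trace via the absorption identity $\Pi_G\tilde\Pi_e = \tilde\Pi_e\Pi_G = \Pi_G$, and then uses \lemmaref{lem:projexpectation} (which computes $\E[\tilde\Pi_e]$ in terms of the oblique tree projector $\Pi_T$) followed by \lemmaref{lem:tree-projection-2} (which lower-bounds $(\iMatrix - \Pi_T)^T(\iMatrix - \Pi_T)$ by $\iMatrix - \Pi_G$) to extract the per-step $(1 - 1/\treeCondition)$ contraction of $\Phi_t$. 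You instead diagonalize: after verifying $Q - \Pi_G = Q(\iMatrix - \Pi_G)$ --- which relies on $Q$ fixing the range of $\Pi_G$ pointwise --- you write the Frobenius norm as $\sum_k \|Q w_k\|_2^2$ over an orthonormal basis $\{w_k\}$ of the scaled cycle space $W$ and run the Kaczmarz energy-decay recursion on each $w_k$ separately, invoking the $\rMatrix^{1/2}$-rescaled form of \lemmaref{lemma:geocondition} (which the paper itself remarks is the vector analogue of \lemmaref{lem:tree-projection-2}) for the per-step contraction. The underlying mathematical content is the same, but your version makes explicit that the theorem is exactly the randomized-Kaczmarz contraction of \sectionref{sec:geoview} run in parallel on an orthonormal cycle-space basis, each strand initialized at unit energy. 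One detail worth flagging: you correctly compute the initial value as $\Tr(\iMatrix - \Pi_G) = m - n + 1$, whereas the paper's proof states $\Phi_0 = \|\Pi_G\|_F^2 = n - 1$, which is a small slip (at $t = 0$ the empty product is the identity, not the zero operator); since $m \le n^2$ this only changes $\nIter$ by a factor of at most two inside the logarithm, a constant-factor discrepancy against the stated $\nIter = \lceil\treeCondition\log\frac{n}{\epsilon p}\rceil$ that you acknowledge explicitly and that the paper inherits as well.
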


We defer the proof to the end of this section and first show how two useful bounds involving the flow and voltage operators follow directly from \theoremref{thm:projapprox}.

\begin{theorem}\label{thm:lapapprox}
Let $e_1, \ldots, e_\nIter$ be random off-tree edges picked independently from the $\{p_e\}$
probability distribution.
Let $\nIter = \lceil \treeCondition \log \frac{n}{\epsilon p} \rceil.$
Then, with probability at least $1-p$,
\begin{equation} \label{eq:flow-operator-approx}
(\mvar{F}_\nIter - \rMatrix^{-1} \incMatrix \pseudo{\lap})^T \rMatrix (\mvar{F}_\nIter - \rMatrix^{-1} \incMatrix \pseudo{\lap})
\preceq \epsilon \pseudo{\lap_T} \preceq \epsilon \stretchTotal{T} \pseudo{\lap}
\end{equation}
and
\[
(\mvar{V}_\nIter - \pseudo{\lap})^T \lap (\mvar{V}_\nIter - \pseudo{\lap}) \preceq \epsilon (\stretchTotal{T})^2 \pseudo{\lap}\enspace.
\]
\end{theorem}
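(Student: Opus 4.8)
The plan is to factor both error operators through the single random matrix $\mvar{E} \defeq \big(\prod_{j=1}^{\nIter} \Proj_{e_j}\big) - \Pi_G$, whose Frobenius norm is controlled by \theoremref{thm:projapprox}, and then propagate the resulting Loewner bound through two fixed ``sandwiching'' matrices. Throughout I would write $\mvar{M} \defeq \rMatrix^{-1/2}\incMatrix_T\pseudo{\lap_T}$, so that $\mvar{M}^T\mvar{M} = \pseudo{\lap_T}\,\lap_T\,\pseudo{\lap_T} = \pseudo{\lap_T}$.

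First I would rewrite $\mvar{F}_\nIter$ and $\mvar{V}_\nIter$ from \theoremref{thm:linearity} in terms of the scaled projections via $\Pi_e = \rMatrix^{-1/2}\Proj_e\rMatrix^{1/2}$; telescoping gives $\mvar{F}_\nIter = \rMatrix^{-1/2}\big(\prod_j \Proj_{e_j}\big)\mvar{M}$ and $\mvar{V}_\nIter = \mvar{M}^T\big(\prod_j \Proj_{e_j}\big)\mvar{M}$. The crucial observation is that replacing $\prod_j \Proj_{e_j}$ by $\Pi_G$ in these two expressions yields exactly the targets, i.e. $\rMatrix^{-1/2}\Pi_G\mvar{M} = \rMatrix^{-1}\incMatrix\pseudo{\lap}$ and $\mvar{M}^T\Pi_G\mvar{M} = \pseudo{\lap}$. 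Both reduce to the single identity $\incMatrix^T\rMatrix^{-1}\incMatrix_T\pseudo{\lap_T} = \Pi_{\mathbf{1}^\perp}$ (the orthogonal projection onto $\mathbf{1}^\perp$), which simply expresses that the tree flow $\flow_0 = \rMatrix^{-1}\incMatrix_T\pseudo{\lap_T}\boundary$ is feasible, together with $\pseudo{\lap}\,\Pi_{\mathbf{1}^\perp} = \pseudo{\lap}$ and its transpose. Hence $\mvar{F}_\nIter - \rMatrix^{-1}\incMatrix\pseudo{\lap} = \rMatrix^{-1/2}\mvar{E}\mvar{M}$ and $\mvar{V}_\nIter - \pseudo{\lap} = \mvar{M}^T\mvar{E}\mvar{M}$.

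\theoremref{thm:projapprox} then gives, with probability at least $1-p$, $\|\mvar{E}\|_2^2 \le \|\mvar{E}\|_F^2 \le \epsilon$, i.e. $\mvar{E}^T\mvar{E} \preceq \epsilon\iMatrix$. For the flow operator this immediately yields $(\mvar{F}_\nIter - \rMatrix^{-1}\incMatrix\pseudo{\lap})^T \rMatrix (\mvar{F}_\nIter - \rMatrix^{-1}\incMatrix\pseudo{\lap}) = \mvar{M}^T\mvar{E}^T\mvar{E}\mvar{M} \preceq \epsilon\,\mvar{M}^T\mvar{M} = \epsilon\pseudo{\lap_T}$. For the voltage operator, $(\mvar{V}_\nIter - \pseudo{\lap})^T \lap (\mvar{V}_\nIter - \pseudo{\lap}) = \mvar{M}^T\mvar{E}^T(\mvar{M}\lap\mvar{M}^T)\mvar{E}\mvar{M}$, and I would bound $\mvar{M}\lap\mvar{M}^T \preceq \stretchTotal{T}\,\iMatrix$: since $\lap \preceq \stretchTotal{T}\,\lap_T$ (the fact that $T$ is an $\stretchTotal{T}$-spectral sparsifier of $G$, cf.\ the footnote of \lemmaref{lemma:initial_energy}), we get $\mvar{M}\lap\mvar{M}^T \preceq \stretchTotal{T}\,\rMatrix^{-1/2}\incMatrix_T\pseudo{\lap_T}\incMatrix_T^T\rMatrix^{-1/2}$, and the latter matrix has the form $Y\pseudo{(Y^TY)}Y^T$ with $Y = \rMatrix^{-1/2}\incMatrix_T$, hence is an orthogonal projection and is $\preceq \iMatrix$. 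Combining, $(\mvar{V}_\nIter - \pseudo{\lap})^T \lap (\mvar{V}_\nIter - \pseudo{\lap}) \preceq \stretchTotal{T}\,\mvar{M}^T\mvar{E}^T\mvar{E}\mvar{M} \preceq \epsilon\,\stretchTotal{T}\,\pseudo{\lap_T}$.

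Finally I would pass from $\pseudo{\lap_T}$ to $\pseudo{\lap}$ using that $\lap_T \preceq \lap \preceq \stretchTotal{T}\,\lap_T$ with both Laplacians having kernel exactly $\mathrm{span}(\mathbf{1})$, so taking pseudoinverses gives $\pseudo{\lap} \preceq \pseudo{\lap_T} \preceq \stretchTotal{T}\,\pseudo{\lap}$; this turns the two bounds above into $\epsilon\pseudo{\lap_T} \preceq \epsilon\,\stretchTotal{T}\,\pseudo{\lap}$ and $\epsilon\,\stretchTotal{T}\,\pseudo{\lap_T} \preceq \epsilon\,(\stretchTotal{T})^2\,\pseudo{\lap}$, which are precisely the claimed inequalities. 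The main obstacle is the first step: correctly deriving the two sandwiched representations — in particular verifying $\incMatrix^T\rMatrix^{-1}\incMatrix_T\pseudo{\lap_T} = \Pi_{\mathbf{1}^\perp}$ and that both targets arise by substituting $\Pi_G$ — without sign or kernel errors; once the factorization through $\mvar{E}$ is in place, everything else is routine manipulation of the Loewner order together with \theoremref{thm:projapprox} and the spectral-sparsifier inequality.
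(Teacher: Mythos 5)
Your proposal is correct and follows essentially the same route as the paper: both conjugate the Loewner bound $\mvar{E}^T\mvar{E} \preceq \epsilon\iMatrix$ (from \theoremref{thm:projapprox}) by the matrix $\rMatrix^{-1/2}\incMatrix_T\pseudo{\lap_T}$ to obtain the flow-operator bound, and both use $\lap \preceq \stretchTotal{T}\lap_T$ to pass to $\pseudo{\lap}$ and to absorb the extra $\stretchTotal{T}$ factor for the voltage operator. Your presentation of the voltage bound (directly bounding $\mvar{M}\lap\mvar{M}^T \preceq \stretchTotal{T}\iMatrix$) is a clean rephrasing of the paper's step of ``replacing $\rMatrix$ by $\frac{1}{\stretchTotal{T}}\incMatrix_T\pseudo{\lap_T}\lap\pseudo{\lap_T}\incMatrix_T^T$,'' and your explicit verification of the identities $\rMatrix^{-1/2}\Pi_G\mvar{M} = \rMatrix^{-1}\incMatrix\pseudo{\lap}$ and $\mvar{M}^T\Pi_G\mvar{M} = \pseudo{\lap}$ (via $\incMatrix^T\rMatrix^{-1}\incMatrix_T\pseudo{\lap_T}=\Pi_{\mathbf{1}^\perp}$) makes precise a step the paper leaves implicit.
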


In particular, the first equation implies that, for $\delta > 0$ with probability at least $1-p$, a random sequence of $\nIter' = \lceil \tau \log \frac{n \cdot \stretchTotal{T}}{\delta p} \rceil$ cycle updates yield a flow operator $\mvar{F}_{\nIter'}$ which is a multiplicative $\delta$-approximation to the electrical-flow operator $\rMatrix^{-1} \incMatrix \pseudo{\lap}.$ In practice, this means that, if we choose $\nIter'$ cycles at random according to $\{p_e\},$ the resulting sequence of updates will yield a flow whose energy is at most a factor of $(1+\delta)$ away from optimum, {\it for all} possible initial demand vectors $\boundary.$

Similarly, the second equation implies that $\nIter' = \lceil \tau \log \frac{n \cdot \stretchTotal{T}^2}{\delta p} \rceil$ randomly selected cycle updates yields with probability at least $1 - p$ a voltage operator that when applied to any demand $\boundary$ results in a $\sqrt{\delta}$-approximate solution to $\lap \vvar{\volt} = \boundary$.

\begin{proof}[Proof of \theoremref{thm:lapapprox}]
\theoremref{thm:projapprox} implies:
$$
\left(\left(\prod_{j = 1}^\nIter \Proj_{e_j} \right) - \Pi_G\right)^T \left(\left(\prod_{j = 1}^\nIter \Proj_{e_j} \right) - \Pi_G\right) \preceq \epsilon \iMatrix \enspace.
$$
It suffices to left- and right-multiply this inequality by $ \rMatrix^{-1/2} \incMatrix_T^T \pseudo{\lap_T} $ to obtain:
$$
(\mvar{F}_\nIter - \rMatrix^{-1} \incMatrix \pseudo{\lap})^T \rMatrix (\mvar{F}_\nIter - \rMatrix^{-1} \incMatrix \pseudo{\lap}) \preceq \epsilon \pseudo{\lap_T}
$$
as required. To obtain the inequality involving $\pseudo{\lap},$ we can directly apply the fact that $\lap \preceq \stretchTotal{T} \lap_T$ (cf. \cite[Lemma 9.2]{ST08c}).

We now derive the bound for the voltage operator. This requires the following chain of inequalities:
$$
\rMatrix \succeq \rMatrix^{1/2} (\rMatrix^{-1/2} \incMatrix_T \pseudo{\lap_T} \incMatrix_T^T \rMatrix^{-1/2}) \rMatrix^{1/2} = \incMatrix_T \pseudo{\lap_T} \incMatrix_T^T \succeq \frac{1}{\stretchTotal{T}} \incMatrix_T \pseudo{\lap_T} \lap \pseudo{\lap_T}  \incMatrix_T^T \enspace.
$$
The first inequality follows as $\rMatrix^{-1/2} \incMatrix_T \pseudo{\lap_T} \incMatrix_T^T \rMatrix^{-1/2}$ is an orthogonal projection.
The second one is again a consequence of the fact that  $\lap \preceq \stretchTotal{T} \lap_T$.
It now suffices to replace $\rMatrix$ by $\frac{1}{\stretchTotal{T}} \incMatrix_T \pseudo{\lap_T} \lap \pseudo{\lap_T}  \incMatrix_T^T$ in \equationref{eq:flow-operator-approx} to obtain the required bound.
\end{proof}

The rest of the section is dedicated to proving \theoremref{thm:projapprox}.
This could be achieved by slightly modifying the proof of \sectionref{sec:convergence}. However, in the following, we prefer to briefly give an operator-based proof of the theorem. We believe that this complements the view of \sectionref{sec:convergence} and helps to convey its geometric flavor.

\subsubsection{Proof of \texorpdfstring{\theoremref{thm:projapprox}}{Projection Operator Approximation}}

The spanning tree $\tree$ allows us to construct a projector $\Pi_\tree$ that can be used to relate $\Pi_G$ and $\left(\prod_{j = 1}^\nIter \Proj_{e_j} \right).$
The projector $\Pi_\tree$ is defined as
$$
\Pi_\tree \defeq \rMatrix^{-1/2} \incMatrix \pseudo{\lap_T} \incMatrix_{\tree}^T \rMatrix^{-1/2}\enspace.
$$
Here, it is important to notice that $\Pi_T$ is not symmetric, i.e. it is not an orthogonal projector, but an oblique one. The range of $\Pi_T$ is the range of $\rMatrix^{-1/2} \incMatrix^T$ i.e.  the space of KPL-respecting flows.
The following lemma relates $\Pi_T$ to the tree cycle vectors $\{\circVec_e\}_{e \in \offtreeEdgeSet}.$
\begin{lemma}\label{lem:tree-projection}
Let $\vec{1}_e \in \R^E$ be the standard unit vector associated to $e \in E.$
Then:
$$
(\iMatrix - \Pi_T)^T \rMatrix^{1/2}  \vec{1}_e  =
\begin{cases}
 \rMatrix^{1/2} \circVec_e , &\enspace e \in \offtreeEdgeSet\enspace,\\
0, &\enspace e \in T \enspace.
\end{cases}
$$
\end{lemma}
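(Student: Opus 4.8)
The plan is to expand $(\iMatrix - \Pi_T)^T \rMatrix^{1/2}\vec{1}_e$ directly and show it equals $\rMatrix^{1/2}\big(\vec{1}_e - \treePathVec{e}\big)$, from which both cases drop out immediately. First I would transpose $\Pi_T = \rMatrix^{-1/2}\incMatrix\pseudo{\lap_T}\incMatrix_T^T\rMatrix^{-1/2}$; since $\rMatrix$ and $\pseudo{\lap_T}$ are symmetric, $\Pi_T^T = \rMatrix^{-1/2}\incMatrix_T\pseudo{\lap_T}\incMatrix^T\rMatrix^{-1/2}$, so after cancelling the $\rMatrix^{\pm 1/2}$ factors that meet,
\[
(\iMatrix - \Pi_T)^T \rMatrix^{1/2}\vec{1}_e \;=\; \rMatrix^{1/2}\vec{1}_e \;-\; \rMatrix^{1/2}\cdot\rMatrix^{-1}\incMatrix_T\pseudo{\lap_T}\incMatrix^T\vec{1}_e .
\]

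The crux is identifying $\rMatrix^{-1}\incMatrix_T\pseudo{\lap_T}\incMatrix^T\vec{1}_e$ with $\treePathVec{e}$. Writing $e=(a,b)$, by \claimref{claim:basicfacts} the vector $\incMatrix^T\vec{1}_e$ is the vertex demand that routes one unit of flow between the endpoints of $e$; it has zero coordinate sum, hence lies in the range of $\lap_T$ since $T$ is connected. The vector $\rMatrix^{-1}\incMatrix_T\pseudo{\lap_T}\incMatrix^T\vec{1}_e$ is therefore supported on the tree edges ($\incMatrix_T$ vanishes on off-tree rows), and $\incMatrix_T^T\big(\rMatrix^{-1}\incMatrix_T\pseudo{\lap_T}\incMatrix^T\vec{1}_e\big) = \lap_T\pseudo{\lap_T}\incMatrix^T\vec{1}_e = \incMatrix^T\vec{1}_e$ using $\lap_T = \incMatrix_T^T\rMatrix^{-1}\incMatrix_T$, so (as the flow is supported on $T$, $\incMatrix^T$ and $\incMatrix_T^T$ agree on it) it is a flow supported on $T$ meeting the demand $\incMatrix^T\vec{1}_e$. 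By uniqueness of the flow supported on a tree that meets a given demand, this flow is exactly $\treePathVec{e}$. Plugging back in, $(\iMatrix - \Pi_T)^T\rMatrix^{1/2}\vec{1}_e = \rMatrix^{1/2}\big(\vec{1}_e - \treePathVec{e}\big)$.

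Finally I would split on membership in $T$. If $e\in T$, the unique $a$-to-$b$ tree path is $e$ itself, so $\treePathVec{e}=\vec{1}_e$ and the expression is $0$. If $e\in\offtreeEdgeSet$, the tree cycle is $\treeCycle{e}=\{(a,b)\}\cup\treePath{(b,a)}$, so the unit circulation $\circVec_e=\treeCycleVec{e}$ consists of one unit along $(a,b)$ together with one unit along the tree path from $b$ to $a$; the latter is the vector $-\treePathVec{e}$, whence $\circVec_e=\vec{1}_e-\treePathVec{e}$ and $(\iMatrix - \Pi_T)^T\rMatrix^{1/2}\vec{1}_e=\rMatrix^{1/2}\circVec_e$, as claimed. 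I expect the only delicate point to be the middle paragraph: keeping the orientation conventions of $\incMatrix$, $\incMatrix_T$, $\treePathVec{e}$, and $\circVec_e$ mutually consistent and applying the uniqueness of tree flows cleanly; the rest is routine matrix bookkeeping.
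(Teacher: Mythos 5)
Your proposal is correct and follows the same route as the paper's proof: transpose $\Pi_T$, cancel the $\rMatrix^{\pm 1/2}$ factors, identify $\rMatrix^{-1}\incMatrix_T\pseudo{\lap_T}\incMatrix^T\vec{1}_e$ with the tree flow $\treePathVec{e}$ (equal to $\vec{1}_e$ when $e\in T$), and finish with $\circVec_e=\vec{1}_e-\treePathVec{e}$. The only difference is that you spell out the uniqueness argument for why $\rMatrix^{-1}\incMatrix_T\pseudo{\lap_T}\incMatrix^T\vec{1}_e=\treePathVec{e}$ (support on $T$, the $\lap_T\pseudo{\lap_T}$ projection argument, and uniqueness of tree flows), a step the paper states without elaboration.
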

\begin{proof}
For $e \in T,$ we have:
$$
(\iMatrix - \Pi_T)^T \rMatrix^{1/2}  \vec{1}_e  =
\rMatrix^{1/2}  \vec{1}_e - \rMatrix^{-1/2} \incMatrix^T \pseudo{\lap_T} \incMatrix_T \vec{1}_e = \rMatrix^{1/2}  (\vec{1}_e - \vec{1}_e) = 0\enspace.
$$
For $e \in \offtreeEdgeSet:$
$$
(\iMatrix - \Pi_T)^T \rMatrix^{1/2}  \vec{1}_e  =
\rMatrix^{1/2}  \vec{1}_e - \rMatrix^{-1/2} \incMatrix^T \pseudo{\lap_T} \incMatrix_T \vec{1}_e = \rMatrix^{1/2}  (\vec{1}_e - \vec{p}_e) = \rMatrix^{1/2}  \circVec_e\enspace.
$$
\end{proof}
In words, the rows corresponding to off-tree edges in the matrix representation of $\rMatrix^{1/2} (\iMatrix - \Pi_T) \rMatrix^{-1/2}$ in the standard basis are exactly the tree cycle vectors $\{\circVec_e\}_{e \in \offtreeEdgeSet}.$ The remaining rows are $0.$

In the next lemma, we relate $\Pi_T$ to the cycle projections $\{\Proj_e\}_{e \in \offtreeEdgeSet}$ used in $\simpleSolver$.
\begin{lemma} \label{lem:projexpectation}
Let $e \in \offtreeEdgeSet$ be sampled according to probability $\{p_e\},$ as in $\simpleSolver.$ Then:
\[
\E[\Proj_e] = \iMatrix - \frac{1}{\tau} (\iMatrix - \Pi_T)^T(\iMatrix - \Pi_T)\enspace.
\]
\end{lemma}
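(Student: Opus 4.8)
The plan is to expand $\E[\Proj_e]$ directly over the sampling distribution of $\simpleSolver$ and then recognize the resulting sum of rank-one matrices as the Gram matrix $(\iMatrix - \Pi_T)^T(\iMatrix - \Pi_T)$, with \lemmaref{lem:tree-projection} serving as the bridge between the two.

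First I would substitute the explicit form $\Proj_e = \iMatrix - \rMatrix^{1/2}\circVec_e\circVec_e^T\rMatrix^{1/2}\,/\,\|\circVec_e\|_\rMatrix^2$, use that $\|\circVec_e\|_\rMatrix^2 = \cycleResistance{e}$, and plug in the sampling probabilities $p_e = \frac{1}{\treeCondition}\cdot\frac{\cycleResistance{e}}{r_e}$ used by $\simpleSolver$. The factor $\cycleResistance{e}$ cancels against the $1/\cycleResistance{e}$ in the projector, so that $\E[\Proj_e] = \iMatrix - \frac{1}{\treeCondition}\sum_{e\in\offtreeEdgeSet}\tfrac{1}{r_e}\,\rMatrix^{1/2}\circVec_e\circVec_e^T\rMatrix^{1/2}$. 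The remaining task is the matrix identity $\sum_{e\in\offtreeEdgeSet}\tfrac{1}{r_e}\,\rMatrix^{1/2}\circVec_e\circVec_e^T\rMatrix^{1/2} = (\iMatrix-\Pi_T)^T(\iMatrix-\Pi_T)$.

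To prove this identity I would expand the right-hand side in the standard basis of $\redgevec$ as $\sum_{e\in E}\big((\iMatrix-\Pi_T)^T\vec{1}_e\big)\big((\iMatrix-\Pi_T)^T\vec{1}_e\big)^T$, i.e. the sum of outer products of the columns of $(\iMatrix-\Pi_T)^T$. Since $\rMatrix$ is diagonal, $\rMatrix^{1/2}\vec{1}_e = \sqrt{r_e}\,\vec{1}_e$, hence $(\iMatrix-\Pi_T)^T\vec{1}_e = \tfrac{1}{\sqrt{r_e}}(\iMatrix-\Pi_T)^T\rMatrix^{1/2}\vec{1}_e$, and \lemmaref{lem:tree-projection} evaluates this to $\tfrac{1}{\sqrt{r_e}}\rMatrix^{1/2}\circVec_e$ when $e\in\offtreeEdgeSet$ and to $0$ when $e\in T$. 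Substituting these columns into the sum yields exactly $\sum_{e\in\offtreeEdgeSet}\tfrac{1}{r_e}\,\rMatrix^{1/2}\circVec_e\circVec_e^T\rMatrix^{1/2}$, completing the proof.

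There is no deep obstacle here; the only place that requires care is the bookkeeping of the three different normalizations --- the $\cycleResistance{e}/r_e$ in $p_e$, the $1/\cycleResistance{e}$ in $\Proj_e$, and the extra $1/r_e$ incurred when passing from $\rMatrix^{1/2}\vec{1}_e$ (the vector on which \lemmaref{lem:tree-projection} is stated) to the plain $\vec{1}_e$ needed for the column expansion of the Gram matrix. These combine to the single factor $1/(\treeCondition\, r_e)$ appearing in the statement. The easy trap to avoid is expanding $(\iMatrix-\Pi_T)^T(\iMatrix-\Pi_T)$ with respect to the $\rMatrix$-weighted inner product rather than the standard one: that would produce $(\iMatrix-\Pi_T)^T\rMatrix(\iMatrix-\Pi_T) = \sum_{e\in\offtreeEdgeSet}\rMatrix^{1/2}\circVec_e\circVec_e^T\rMatrix^{1/2}$, which is missing the crucial $1/r_e$.
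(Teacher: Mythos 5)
Your proof is correct and follows the same route as the paper: expand $\E[\Proj_e]$ over the sampling distribution so that $\cycleResistance{e}$ cancels, then invoke \lemmaref{lem:tree-projection} to identify $\sum_{e\in\offtreeEdgeSet}\tfrac{1}{r_e}\rMatrix^{1/2}\circVec_e\circVec_e^T\rMatrix^{1/2}$ with $(\iMatrix-\Pi_T)^T(\iMatrix-\Pi_T)$. The paper states this in one line; you simply make the column-by-column Gram expansion and the $\sqrt{r_e}$ bookkeeping explicit, and your cautionary remark about the $\rMatrix$-weighted inner product is accurate.
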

\begin{proof}
This follows from the definition of $\{p_e\}$ and \lemmaref{lem:tree-projection}:
$$
\E[\Proj_e] = \iMatrix - \frac{1}{\tau}  \sum_{e \in \offtreeEdgeSet} \frac{ \rMatrix^{1/2} \circVec_e \circVec^T_e \rMatrix^{1/2} }{r_e} =  \iMatrix - \frac{1}{\tau} (\iMatrix - \Pi_T)^T(\iMatrix - \Pi_T)\enspace.
$$
\end{proof}

Next, we show how to relate $(\iMatrix - \Pi_T)^T(\iMatrix - \Pi_T)$ to $\iMatrix - \Pi_G.$
We will show that, for any flow $\rMatrix^{1/2} \flow,$ the distance  $\big\|\rMatrix^{1/2} \flow - \Pi_G  \rMatrix^{1/2}  \flow\big\|_2$ to its $\Pi_G$-projection is smaller than $\big\|\rMatrix^{1/2} \flow - \Pi_T \rMatrix^{1/2}  \flow \big\|_2,$ the distance to its $\Pi_T$-projection.
It is easy to see that this should be the case, because $\Pi_G$ and $\Pi_T$ project onto the same subspace and $\Pi_G$ is an orthogonal projection. We prove it formally here. This is analogous to \lemmaref{lemma:geocondition}.
\begin{lemma}
\label{lem:tree-projection-2}
$$
(\iMatrix - \Pi_T)^T(\iMatrix - \Pi_T) \succeq (\iMatrix - \Pi_G)^T (\iMatrix - \Pi_G) = \iMatrix - \Pi_G\enspace.
$$
\end{lemma}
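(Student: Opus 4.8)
The plan is to use that $\Pi_G$ and $\Pi_T$ project onto the \emph{same} subspace of $\R^E$ --- the (suitably scaled) space of KPL-respecting flows $\mathcal{S} \defeq \mathrm{range}(\rMatrix^{-1/2}\incMatrix)$ --- with $\Pi_G$ the \emph{orthogonal} projection and $\Pi_T$ a (generally oblique) projection, and then to invoke the elementary fact that an orthogonal projection onto a subspace is never longer than any other projection onto that same subspace. As the paper already observes, this makes the statement the operator-level analogue of \lemmaref{lemma:geocondition}.

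First I would record the structural facts. Writing $\mvar{M} \defeq \rMatrix^{-1/2}\incMatrix$, we have $\mvar{M}^T\mvar{M} = \incMatrix^T\rMatrix^{-1}\incMatrix = \lap$, so $\Pi_G = \mvar{M}\pseudo{\lap}\mvar{M}^T$ is precisely the orthogonal projection onto $\mathcal{S} = \mathrm{range}(\mvar{M})$; in particular $\Pi_G = \Pi_G^T = \Pi_G^2$, which already gives the equality $(\iMatrix - \Pi_G)^T(\iMatrix - \Pi_G) = \iMatrix - \Pi_G$ in the statement. Next I would show that $\Pi_T = \rMatrix^{-1/2}\incMatrix\pseudo{\lap_T}\incMatrix_T^T\rMatrix^{-1/2}$ is an idempotent with $\mathrm{range}(\Pi_T) = \mathcal{S}$. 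The inclusion $\mathrm{range}(\Pi_T)\subseteq\mathcal{S}$ is immediate from the leftmost factor $\rMatrix^{-1/2}\incMatrix$. For the reverse inclusion it suffices to check that $\Pi_T$ fixes $\mathcal{S}$ pointwise: for $y = \rMatrix^{-1/2}\incMatrix\volt \in \mathcal{S}$, using $\incMatrix_T^T\rMatrix^{-1}\incMatrix = \lap_T$ and that $\pseudo{\lap_T}\lap_T$ is the orthogonal projection onto $\mathrm{range}(\lap_T) = \mathbf{1}^{\perp}$, together with $\incMatrix\mathbf{1} = \mathbf{0}$ (here we use that $T$ spans $G$, so $\lap_T$ is a connected graph Laplacian on all of $V$), one computes $\Pi_T y = \rMatrix^{-1/2}\incMatrix\,\pseudo{\lap_T}\lap_T\,\volt = \rMatrix^{-1/2}\incMatrix\volt = y$. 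Idempotency follows since $\Pi_T z\in\mathcal{S}$ for every $z$. Consequently $\iMatrix - \Pi_T$ is also idempotent, with kernel $\mathcal{S}$ and range $\ker\Pi_T$, so $\R^E = \mathcal{S}\oplus\ker\Pi_T$.

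The proof then concludes with a one-line norm comparison. Take an arbitrary $x\in\R^E$ and write $x = u + v$ with $u\in\mathcal{S}$ and $v\in\ker\Pi_T$. Then $(\iMatrix - \Pi_T)x = v$, while $(\iMatrix-\Pi_G)u = 0$ (as $u\in\mathcal{S} = \mathrm{range}(\Pi_G)$) gives $(\iMatrix-\Pi_G)x = (\iMatrix-\Pi_G)v$, and the Pythagorean identity for the orthogonal projection $\Pi_G$ yields $\|(\iMatrix-\Pi_G)v\|_2^2 = \|v\|_2^2 - \|\Pi_G v\|_2^2 \le \|v\|_2^2$. Since $x^T(\iMatrix-\Pi_G)x = \|(\iMatrix-\Pi_G)x\|_2^2$ (using that $\iMatrix - \Pi_G$ is a symmetric idempotent) and $x^T(\iMatrix-\Pi_T)^T(\iMatrix-\Pi_T)x = \|(\iMatrix-\Pi_T)x\|_2^2 = \|v\|_2^2$, chaining these inequalities gives $x^T(\iMatrix-\Pi_G)x \le x^T(\iMatrix-\Pi_T)^T(\iMatrix-\Pi_T)x$ for all $x$, which is exactly the claimed Loewner inequality.

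The only step that needs genuine care is the middle one: verifying that the tree-based operator $\Pi_T$ is a true idempotent whose range is exactly $\mathcal{S}$ --- equivalently, that $\Pi_T$ and $\Pi_G$ really do project onto a common subspace, differing only in obliqueness. Once that is in hand, the rest is the textbook minimality of orthogonal projections, already realized in coordinates in the proof of \lemmaref{lemma:geocondition}.
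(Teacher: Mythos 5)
Your proposal is correct, and it formalizes the geometric intuition that the paper states in the lead-in to the lemma but then proves by a different, purely algebraic route. The paper's proof is a three-line factorization: it verifies the identity $(\iMatrix-\Pi_T)^T(\iMatrix-\Pi_T) = (\Pi_G-\Pi_T)^T(\Pi_G-\Pi_T) + (\iMatrix-\Pi_G)$ by expanding and invoking $\Pi_G\Pi_T = \Pi_T$, and then observes the first term is PSD. You instead give the coordinate-free, ``textbook minimality of orthogonal projection'' argument: decompose $\R^E = \mathcal{S}\oplus\ker\Pi_T$, note $(\iMatrix-\Pi_T)x$ lands in $\ker\Pi_T$ while $(\iMatrix-\Pi_G)$ can only shorten it (Pythagoras), and read off the Loewner inequality. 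Both proofs lean on exactly one structural fact --- that $\Pi_T$ is an idempotent with the same range as $\Pi_G$, equivalently $\Pi_G\Pi_T = \Pi_T$ --- but the paper merely \emph{asserts} it (and leans on \lemmaref{lem:tree-projection} / the definition of $\Pi_T$), whereas you actually \emph{verify} it, including the mild subtlety that $\pseudo{\lap_T}\lap_T$ is the orthogonal projection onto $\mathbf{1}^\perp$ and that $\incMatrix\mathbf{1}=\mathbf{0}$ absorbs the constant shift. Your version is longer but more self-contained and conceptually transparent; the paper's is a terse SOS-style certificate. Either is a valid proof, and your remark that this is the operator analogue of \lemmaref{lemma:geocondition} is exactly the point the paper makes informally.
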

\begin{proof}
Algebraically, we can check that
\begin{align*}
(\Pi_G - \Pi_T)^T (\Pi_G - \Pi_T) + (\iMatrix - \Pi_G) =
\Pi_G - \Pi_T^T \Pi_G - \Pi_G \Pi_T + \Pi_T^T \Pi_T + \iMatrix - \Pi_G =\\
 \iMatrix - \Pi_T^T - \Pi_T + \Pi_T^T \Pi_T = (\iMatrix - \Pi_T)^T(\iMatrix - \Pi_T)\enspace.
\end{align*}
In the second equality, we used the fact that $\Pi_G \Pi_T = \Pi_T$ as $\Pi_G$ and $\Pi_T$ project onto the same subspace.
As $(\Pi_G - \Pi_T)^T (\Pi_G - \Pi_T) \succeq 0,$ the lemma follows.
\end{proof}

We are now ready to prove \theoremref{thm:projapprox}.
\begin{proof}[Proof of \theoremref{thm:projapprox}]
For any iteration $t,$ $0 \leq t \leq \nIter,$ let
$$
\Phi_t \defeq \norm{\left(\prod_{j = 1}^t \Proj_{e_j} \right) - \Pi_G}^2_F\enspace.
$$
We will show that $\Phi_\nIter \leq \epsilon$ with probability at least $1-p,$ by bounding the expected reduction in $\Phi_t$ at every iteration.
Notice that for any $e \in \offtreeEdgeSet,$ we have $\Pi_G \Proj_e = \Proj_e \Pi_G = \Pi_G,$ as $\Pi_G$ and $\Proj_e$ are both orthogonal projectors and the range of $\Pi_G$ is contained into that of $\Proj_e.$
Hence, we can simplify the expression for $\Phi_t$ as
$$
\Phi_t = \Tr\left(\left(\prod_{j = 1}^t \Proj_{e_j} \right)^T \left(\prod_{j = 1}^t \Proj_{e_j} \right) - \Pi_G\right)\enspace.
$$

Now, we are ready to bound the change in $\Phi_t$ in one iteration. For any $t,$ $t \leq 0 \leq \nIter -1 ,$ assume that $e_1, \cdots, e_t$ have been fixed and notice:
\begin{align*}
\E[\Phi_{t+1}] = \E\left[\Tr\left(\left(\prod_{j = 1}^{t+1} \Proj_{e_j} \right)^T \left(\prod_{j = 1}^{t+1} \Proj_{e_j} \right) - \Pi_G\right)\right] =
 \Tr\left(\left(\prod_{j = 1}^t \Proj_{e_j} \right)^T \E[\Proj_{e_{t+1}}] \left(\prod_{j = 1}^t \Proj_{e_j} \right) - \Pi_G\right) \enspace.
\end{align*}
By \lemmaref{lem:projexpectation}, we obtain:
$$
\E[\Phi_{t+1}] = \Phi_t - \frac{1}{\tau}
\Tr\left(\left(\prod_{j = 1}^t \Proj_{e_j} \right)^T (\iMatrix - \Pi_T)^T(\iMatrix - \Pi_T) \left(\prod_{j = 1}^t \Proj_{e_j} \right)\right)\enspace.
$$
We now apply \lemmaref{lem:tree-projection-2}:
$$
\E[\Phi_{t+1}] \leq \Phi_t - \frac{1}{\tau}
\Tr\left(\left(\prod_{j = 1}^t \Proj_{e_j} \right)^T (\iMatrix - \Pi_G)
 \left(\prod_{j = 1}^t \Proj_{e_j} \right)\right)\enspace.
$$
This yields that
$
\E[\Phi_{t+1}] \leq \left(1 -\frac{1}{\tau}\right) \Phi_t
$
and by taking expectation over all the choices of sampled edges, we have
$
\E[\Phi_\nIter] \leq \left(1 - \frac{1}{\tau}\right)^\nIter \Phi_0.
$
Now, by definition, $\Phi_0 = \norm{\Pi_G}^2_F$ and because $\Pi_G$ is a projector onto an $(n-1)$ dimensional space, we have $\Phi_0 = n - 1.$
Hence, for our choice of $\nIter = \lceil\tau \log\frac{n}{\epsilon p}\rceil,$
we have:
$$
\E[\Phi_\nIter] \leq \left(1 - \frac{1}{\tau}\right)^\nIter (n-1) \leq \epsilon p\enspace.
$$
By Markov's Inequality, with probability at least $1-p,$ it must be the case that $\Phi_\nIter \leq \epsilon.$
\end{proof}

\section{Solving without Low Stretch Spanning Trees}
\label{sec:no-spanning-tree}

In the algorithms presented in this paper the spanning tree, $\tree$, provides two essential functions. (1) The $\tree$ induces a well conditioned basis for cycle space and (2) the spanning tree allows for a compact representation of cycles that allow efficient querying and updating. It is natural to ask, can a combinatorial object other than a low-stretch spanning subgraph be used to achieve these goals? To the best of our knowledge all previous nearly-linear-time SDD-system solvers use such low-stretch spanning trees; can a nearly-linear running time be achieved without them?

All these questions can be answered in the affirmative and in fact a particular \emph{decomposition tree} (low-stretch tree approximation of $G$ that are not necessarily subgraphs)\cite{bartal96, Bartal98, FRT03} construction of Bartal can used precisely for this purpose. Several steps are needed. First, we need to show how decomposition trees in general can be used to satisfy property (1). Next we need to show how we can replicate edges to create an equivalent problem without hurting the property (1) too much. Finally, we need to show how this replication can be done to achieve just enough path disjointness that we can apply the data structure from \sectionref{sec:datastructure} and achieve property (2). Ultimately, this can all be accomplished while only losing several log factors in the run time (as compared to the most efficient algorithm presented in this paper). \emph{(We plan on including the details in a later version of this paper.)}

\section{Acknowledgments}
We thank Daniel Spielman for many helpful conversations.  This work was partially supported by NSF awards 0843915 and 1111109, a Sloan Research Fellowship, and a NSF Graduate Research Fellowship (grant no. 1122374).

\bibliographystyle{alpha}
\bibliography{eflow_full}

\appendix
\clearpage
\section{Reduction from SDD Systems to Laplacian Systems}
\label{appendix:reductions}

Reductions from SDD system solving to Laplacian system solving can be traced back to \cite{GrembanPHD} where the problem of solving a linear system in a \emph{symmetric diagonally dominant} (SDD) matrix of dimension $n$ was reduced to that of solving a Laplacian matrix of dimension $2n+1$. This reduction contains two steps. First, a SDD matrix is reduced to a SDDM matrix of dimension $2n$, where a SDDM matrix is an SDD matrix with non-positive off-diagonals.\footnote{An SDDM matrix is not necessarily Laplacian since its rows (or columns) may not sum up to zeros. For historical reasons, Gremban~\cite{GrembanPHD} uses the notion ``generalized Laplacian matrix'' for SDD matrix, and the notion ``Laplacian matrix'' for SDDM matrix.}. Second, this SDDM matrix is augmented with an extra dimension yielding a Laplacian matrix of dimension $2n+1$. Following the interpretation of a Laplacian system as an electrical circuit system (see~\sectionref{sec:pre-duality}) this second step can also be viewed as adding an extra ground vertex with fixed zero voltage, and connecting it to all other vertices whose corresponding rows do not sum up to zero.

The first step of the above reduction is often referred to as \emph{Gremban's reduction}, and has been used in previous nearly-linear-time SDD system solvers (see for instance~\cite{ST08c}). Although this reduction is stated for exact solutions, it also carries approximate solutions from one to the other~\cite{ST08c}.

Below we present a variant of Gremban's reduction, that in a single step reduces from a linear system of a SDD matrix directly to that of a Laplacian matrix and we prove that this reduction preserves approximate solutions. Since for Laplacians of disconnected graphs each component can be solved independently we see that it suffices to study Laplacian matrices for connected graphs in the main body of our paper.

\subsection{Our Reduction}

Given an arbitrary SDD matrix $\mvar{A}$, we can always decompose it into $\mvar{A}=\mvar{D}_1+\mvar{A}_p+\mvar{A}_n+\mvar{D}_2$, where
\begin{itemize}[nolistsep]
\item $\mvar{A}_p$ is the matrix containing all positive off-diagonal entries of $\mvar{A}$,
\item $\mvar{A}_n$ is the matrix containing all negative off-diagonal entries of $\mvar{A}$,
\item $\mvar{D}_1$ is the diagonal matrix where $\mvar{D}_1(i,i) \defeq \sum_{j=1}^n |\mvar{A}(i,j)|$, and
\item $\mvar{D}_2 = \mvar{A} - \mvar{A}_p - \mvar{A}_n - \mvar{D}_1$ is the excess diagonal matrix.
\end{itemize}
Given that $\mvar{A}$ is SDD it is a simple exercise to check that $\mvar{D}_1 + \mvar{A}_n - \mvar{A}_p$ is Laplacian and the entries of $\mvar{D}_2$ are non-negative. Now, consider the following linear system
\begin{equation*}
\boxed{
\tilde{\mvar{A}}
\begin{pmatrix} x_1 \\ x_2 \\\end{pmatrix}
= \begin{pmatrix} b \\ -b \\\end{pmatrix}
}
\quad \text{ where }
\tilde{\mvar{A}} = \begin{pmatrix} \mvar{D}_1 + \mvar{D}_2/2 + \mvar{A}_n & -\mvar{D}_2/2 -\mvar{A}_p \\ -\mvar{D}_2/2 -\mvar{A}_p & \mvar{D}_1 + \mvar{D}_2/2 + \mvar{A}_n \\\end{pmatrix}
\enspace.
\end{equation*}
It is not hard to verify that $\tilde{\mvar{A}}$ is Laplacian and that given an exact solution $(x_1, x_2)$ to the above linear system (and notice that there may be many such solutions) we have
\begin{equation*}
\left.
  \begin{array}{ll}
    (\mvar{D}_1 + \mvar{D}_2/2 + \mvar{A}_n)x_1 + (-\mvar{D}_2/2-\mvar{A}_p) x_2 = b \\
    (-\mvar{D}_2/2 -\mvar{A}_p)x_1 + (\mvar{D}_1 + \mvar{D}_2/2 + \mvar{A}_n) x_2 = -b
  \end{array}
\right\}
\text{ implies }
\mvar{A} \left(\frac{x_1 - x_2}{2} \right)=b \enspace.
\end{equation*}
So $x=\frac{x_1 - x_2}{2}$ is also an exact solution to the original linear system $\mvar{A} x = b$.

Now let us verify that this reduction also holds for approximate solutions. Let $(\hat{x}_1,\hat{x}_2)$ denote an arbitrary $\epsilon$-approximate solution to the new Laplacian system, and let $\hat{e}_1 \defeq \hat{x}_1 - x_1$ and $\hat{e}_2 \defeq \hat{x}_2-x_2$. By definition we have:
\begin{equation*}
\left \| \begin{pmatrix} \hat{e}_1 \\ \hat{e}_2 \\\end{pmatrix} \right\|_{\tilde{\mvar{A}}}
=
\left \| \begin{pmatrix} \hat{x}_1 \\ \hat{x}_2 \\\end{pmatrix} -  \begin{pmatrix} x_1 \\ x_2 \\\end{pmatrix}  \right\|_{\tilde{\mvar{A}}}
\leq
\epsilon \left \|  \begin{pmatrix} x_1 \\ x_2 \\\end{pmatrix} \right\|_{\tilde{\mvar{A}}}
\end{equation*}
and by expanding it out we have:

\begin{align}
&&\begin{pmatrix} \hat{e}_1 \\ \hat{e}_2 \\\end{pmatrix}^T \tilde{\mvar{A}} \begin{pmatrix} \hat{e}_1 \\ \hat{e}_2 \\\end{pmatrix}
 &\leq \epsilon^2
\begin{pmatrix} x_1 \\ x_2 \\\end{pmatrix}^T \tilde{\mvar{A}} \begin{pmatrix} x_1 \\ x_2 \\\end{pmatrix} \nonumber\\
\Leftrightarrow
&&\begin{pmatrix} \hat{e}_1 \\ \hat{e}_2 \\\end{pmatrix}^T
\begin{pmatrix} \mvar{D}_1 + \mvar{D}_2/2 + \mvar{A}_n & -\mvar{D}_2/2 -\mvar{A}_p \\ -\mvar{D}_2/2 -\mvar{A}_p & \mvar{D}_1 + \mvar{D}_2/2 + \mvar{A}_n \\\end{pmatrix}
\begin{pmatrix} \hat{e}_1 \\ \hat{e}_2 \\\end{pmatrix}
 &\leq \epsilon^2
\begin{pmatrix} x_1 \\ x_2 \\\end{pmatrix}^T \begin{pmatrix} b \\ -b \\\end{pmatrix}
= 2\epsilon^2 x^T b \label{eqn:reduction-1}
\end{align}

Now, because $\mvar{D}_1 + \mvar{A}_n - \mvar{A}_p$ is Lapacian (and therefore positive semidefinite) we have
\begin{align*}
&&\begin{pmatrix}  \mvar{D}_1 + \mvar{A}_n - \mvar{A}_p & \mvar{D}_1 + \mvar{A}_n - \mvar{A}_p \\ \mvar{D}_1 + \mvar{A}_n - \mvar{A}_p & \mvar{D}_1 + \mvar{A}_n - \mvar{A}_p \end{pmatrix}
&\succeq 0 \\
\Leftrightarrow&&
\begin{pmatrix}  \mvar{D}_1 + \mvar{A}_n - \mvar{A}_p & \mvar{D}_1 + \mvar{A}_n - \mvar{A}_p \\ \mvar{D}_1 + \mvar{A}_n - \mvar{A}_p & \mvar{D}_1 + \mvar{A}_n - \mvar{A}_p \end{pmatrix}
+ \begin{pmatrix} \mvar{A} & -\mvar{A} \\ -\mvar{A} & \mvar{A} \end{pmatrix}
&\succeq \begin{pmatrix} \mvar{A} & -\mvar{A} \\ -\mvar{A} & \mvar{A} \end{pmatrix} \\
\Leftrightarrow&&
2\begin{pmatrix}  \mvar{D}_1 + \mvar{D}_2/2 + \mvar{A}_n & -\mvar{D}_2/2 - \mvar{A}_p \\ -\mvar{D}_2/2 - \mvar{A}_p & \mvar{D}_1 + \mvar{D}_2/2 + \mvar{A}_n \end{pmatrix}
&\succeq \begin{pmatrix} \mvar{A} & -\mvar{A} \\ -\mvar{A} & \mvar{A} \end{pmatrix}
\end{align*}
and as a consequence we deduce from \equationref{eqn:reduction-1} that (by defining $\hat{x}=\frac{\hat{x}_1-\hat{x}_2}{2}$)
\begin{align*}
\frac{1}{2}\begin{pmatrix} \hat{e}_1 \\ \hat{e}_2 \\\end{pmatrix}^T
\begin{pmatrix} \mvar{A} & -\mvar{A} \\ -\mvar{A} & \mvar{A} \end{pmatrix}
\begin{pmatrix} \hat{e}_1 \\ \hat{e}_2 \\\end{pmatrix}
\leq 2\epsilon^2 x^T b
\quad&\Leftrightarrow \quad
2 \left(\frac{\hat{e}_1 - \hat{e}_2}{2}\right)^T \mvar{A} \left(\frac{\hat{e}_1 - \hat{e}_2}{2}\right) \leq 2\epsilon^2 x^T b \\
\quad&\Leftrightarrow \quad
2 (\hat{x}-x)^T \mvar{A} (\hat{x}-x) \leq 2\epsilon^2 x^T b \\
\quad&\Leftrightarrow \quad
\|\hat{x}-x \|_{\mvar{A}} \leq \epsilon \|x\|_{\mvar{A}} \enspace.
\end{align*}
Therefore, we have shown that $\hat{x}=\frac{\hat{x}_1-\hat{x}_2}{2}$ is also an $\epsilon$ approximate solution to $\mvar{A}x = b$.

\end{document}